\documentclass[a4paper,UKenglish,cleveref, autoref, thm-restate]{lipics-v2021}
\nolinenumbers

 \usepackage{times} 

\usepackage{soul}
\usepackage{url}
\usepackage{amsthm}
\usepackage{booktabs}

\usepackage{algorithm}
\usepackage{algpseudocode}

\usepackage{xspace}
\usepackage{tikz}
\usepackage{amsmath,amssymb}
\usepackage{amsfonts}
\usepackage{float}
\usepackage{caption}

\usepackage{enumitem}

\usepackage{tcolorbox}
\usepackage{lmodern}
\usepackage{framed}

\usepackage{tikz}
\usepackage{forest}
\usetikzlibrary{positioning}

\usepackage{bm}

\usepackage{todonotes}

\usepackage{hyperref}
\usepackage{xcolor}
\hypersetup{
    colorlinks=true,
    linkcolor=blue,
    urlcolor=blue,
    citecolor=blue
}

\DeclareMathOperator{\rad}{rad}

\DeclareMathOperator{\MP}{mp}

\usepackage{bm}

\title{On the complexity of Multipacking} 
\titlerunning{On the complexity of Multipacking}

\author{Sandip Das}{Indian Statistical Institute, Kolkata, India}{sandip.das69@gmail.com}{https://orcid.org/0000-0002-1825-0097}{}

\author{Sk Samim Islam}{Indian Statistical Institute, Kolkata, India}{samimislam08@gmail.com}{[orcid]}{}

\author{Daniel Lokshtanov}{University of California, Santa Barbara, CA, USA}{dlokshtanov@gmail.com}{[orcid]}{}

\authorrunning{S. Das, S.S.Islam and D. Lokshtanov} 

\Copyright{Sandip Das \and Sk Samim Islam \and Daniel Lokshtanov}

\ccsdesc{Theory of computation~Design and analysis of algorithms}
 \ccsdesc{Mathematics of computing~Discrete mathematics}

\keywords{Multipacking, Broadcast Domination, Complexity} 

\category{} 

\begin{document}

\maketitle

\begin{abstract}
 A \emph{multipacking} in an undirected
	graph $ G = (V, E) $ is a set $ M \subseteq V  $  such that for every vertex $ v \in V $ and for every integer $ r \geq 1 $, the
	ball of radius $ r $ around $ v $ contains at most $ r $ vertices of $ M $, that is, there are at most
	$ r $ vertices in $ M $ at a distance at most $ r $ from $ v $ in $ G $. The
	\textsc{Multipacking} problem asks whether a graph contains a multipacking of size at least $k$.

    For more than a decade, it remained an open question whether the \textsc{Multipacking} problem  is \textsc{NP-complete} or solvable in polynomial time. Whereas the problem is known to be polynomial-time solvable for certain graph classes (e.g., strongly chordal graphs, grids, etc). Foucaud, Gras, Perez, and Sikora~\cite{foucaud2021complexity} [\textit{Algorithmica} 2021] made a step towards solving the open question by showing that the \textsc{Multipacking} problem  is \textsc{NP-complete} for directed graphs and it is \textsc{W[1]-hard} when parameterized by the solution size. 

    In this paper, we prove that the \textsc{Multipacking} problem  is \textsc{NP-complete} for undirected graphs, which answers the open question. Moreover, the problem is \textsc{W[2]-hard} for undirected graphs when parameterized by the solution size. Furthermore, we have shown that the problem is \textsc{NP-complete} and \textsc{W[2]-hard} (when parameterized by the solution size) even for various subclasses: chordal, bipartite, and claw-free graphs. Whereas, it is \textsc{NP-complete} for regular, and CONV graphs (intersection graphs of convex sets in the plane). Additionally, the problem is \textsc{NP-complete} and \textsc{W[2]-hard} (when parameterized by the solution size) for chordal $\cap$ $\frac{1}{2}$-hyperbolic graphs, which is a superclass of strongly chordal graphs where the problem is polynomial-time solvable. On the positive side, we present an exact exponential-time algorithm for the \textsc{Multipacking} problem on general graphs, which breaks the $2^n$ barrier by achieving a running time of $O^*(1.58^n)$, where $n$ is the number of vertices of the graph.

\end{abstract}

\newpage
\section{Introduction}
\label{sec:introduction}


Multipacking was introduced by Brewster, Mynhardt, and Teshima~\cite{brewster2013new} as a natural generalization of classical packing and independence concepts in graphs (see also~\cite{beaudou2019multipacking,cornuejols2001combinatorial,haynes2021structures,dunbar2006broadcasts,meir1975relations,teshima2012broadcasts}). Intuitively, multipacking can be viewed as a refined version of the independent set problem, where the feasibility of selecting vertices depends on distance constraints at all scales.

For a graph $G=(V,E)$, $d_G(u,v)$ is the length of a shortest path joining two vertices $u$ and $v$ in  $G$, we simply write $d(u,v)$ when there is no confusion. Let $N_r[v]:=\{u\in V(G):d(v,u)\leq r\}$, i.e. a ball of radius $r$ around $v$.  The \textit{eccentricity} $e(w)$  of a vertex $w$ is $\min \{r:N_r[w]=V\}$. The \textit{radius} of the graph $G$ is $\min\{e(w):w\in V\}$, denoted by $\rad(G)$.  The \textit{center} $C(G)$ of the graph $G$  is the set of all vertices of minimum eccentricity, i.e., $C(G):=\{v\in V:e(v)=\rad(G)\}$. Each vertex in the set $C(G)$ is called a \textit{central vertex} of the graph $G$. Let $\mathrm{diam}(G):=\max\{d(u,v):u,v\in V(G)\}$. Diameter is a path of $G$  of the length  $\mathrm{diam}(G)$.

A \textit{multipacking} is a set $ M \subseteq V  $ in a
	graph $ G = (V, E) $ such that   $|N_r[v]\cap M|\leq r$ for each vertex $ v \in V $ and for every integer $ r \geq 1 $.  It is worth noting that in this definition, it suffices to consider values of $r$ in the set $\{1,2,\dots, \mathrm{diam}(G)\}$. Moreover, if $c$ is a central vertex of the graph $G$, then $N_r[c]=V(G)$ for $r=\mathrm{rad}(G)$. Therefore, size of any multipacking in $G$ is upper-bounded by $\mathrm{rad}(G)$. Therefore, we can further restrict $r$ (in the definition of multipacking) to the set $\{1,2,\dots, \mathrm{rad}(G)\}$, where $\mathrm{rad}(G)$ is the radius of the graph. The \textit{multipacking number} of $ G $ is the maximum cardinality of a multipacking of $ G $ and it
	is denoted by $ \MP(G) $. A \textit{maximum multipacking} is a multipacking $M$  of a graph $ G  $ such that	$|M|=\MP(G)$.


\medskip
\noindent
\fbox{%
  \begin{minipage}{\dimexpr\linewidth-2\fboxsep-2\fboxrule}
  \textsc{ Multipacking} problem
  \begin{itemize}[leftmargin=1.5em]
    \item \textbf{Input:} An undirected graph $G = (V, E)$, an integer $k \in \mathbb{N}$.
    \item \textbf{Question:} Does there exist a multipacking $M \subseteq V$ of $G$ of size at least $k$?
  \end{itemize}
  \end{minipage}%
}
\medskip

It remained an open problem for over a decade whether the \textsc{Multipacking} problem  is \textsc{NP-complete} or solvable in polynomial time. This question was repeatedly addressed by numerous authors (\cite{brewster2019broadcast,foucaud2021complexity,haynes2021structures,teshima2012broadcasts,yang2015new,yang2019limited}),  yet the problem remained as an unsolved challenge. In this paper, we answered this question by proving that the \textsc{Multipacking} problem  is \textsc{NP-complete} which is one of the main results of this paper. 

On the algorithmic side, Beaudou, Brewster, and Foucaud~\cite{beaudou2019broadcast} presented a $(2+o(1))$-approximation algorithm for general graphs. Polynomial-time algorithms are known for trees and, more generally, for strongly chordal graphs~\cite{brewster2019broadcast}, although even the case of trees requires nontrivial techniques. Approximation algorithms with ratio $(\tfrac{3}{2}+o(1))$ have been obtained for chordal graphs~\cite{das2023relation} and cactus graphs~\cite{das2025multipacking}.

In 2021, Foucaud, Gras, Perez, and Sikora~\cite{foucaud2021complexity} have shown that the  \textsc{Multipacking} problems is \textsc{NP-complete} for digraphs (directed graphs), even for planar layered acyclic digraphs of bounded maximum degree. They further proved that the problem is \textsc{W[1]-hard} when parameterized by the solution size.

It is well known that the multipacking problem admits a natural integer programming formulation. If $M$ is a multipacking, we define   a vector $y$ with the entries $y_j=1$ when $v_j\in M$ and $y_j=0$ when $v_j\notin M$.  So,  \begin{center}
	    $\MP(G)=\max \{y\cdot\mathbf{1} :  yA\leq c, y_{j}\in \{0,1\}\}.$ 
\end{center}

A comprehensive survey of multipacking and related problems can be found in the book~\cite{haynes2021structures}, and a geometric variant of the problem is studied in~\cite{das2025geometrymultipackingCALDAM}.

 The \textit{maximum multipacking problem} is the dual integer program of the \textit{optimal broadcast domination problem}.  Erwin~\cite{erwin2004dominating,erwin2001cost} introduced broadcast domination in his doctoral thesis in
2001. For a graph $ G = (V, E) $, a function
	$ f : V \rightarrow \{0, 1, 2, . . . , \mathrm{diam}(G)\} $ is called a \textit{broadcast} on $ G $. For each
	vertex $ u \in V  $, if  there exists a vertex $ v $ in $ G $ (possibly, $ u = v $) such that $ f (v) > 0 $ and
	$ d(u, v) \leq f (v) $, then $ f $ is called a \textit{dominating broadcast} on $ G $.
The \textit{cost} of the broadcast $f$ is the quantity $ \sigma(f)  $, which is the sum
	of the weights of the broadcasts over all vertices in $ G $. So, $\sigma(f)=  \sum_{v\in V}f(v)$. The minimum cost of a dominating broadcast in $G$ (taken over all dominating broadcasts)  is the \textit{broadcast domination number} of $G$, denoted by $ \gamma_{b}(G) $. An \textit{optimal dominating broadcast} on a graph $G$ is a dominating broadcast with a cost equal to $ \gamma_{b}(G) $.


A standard integer linear programming formulation of broadcast domination is obtained as follows. Let $c$ and $x$ be the vectors indexed by $(i,k)$ where $v_i\in V(G)$ and $1\leq k\leq \mathrm{diam}(G)$, with the  entries $c_{i,k}=k$ and $x_{i,k}=1$ when $f(v_i)=k$ and $x_{i,k}=0$ when $f(v_i)\neq k$. Let $A=[a_{j,(i,k)}]$ be a matrix with the entries 
        	$$a_{j,(i,k)}=
    \begin{cases}
        1 & \text{if }  v_j\in N_k[v_i]\\
        0 & \text{otherwise. }
    \end{cases} $$
 	Hence, the broadcast domination number can be expressed as an integer linear program:  $$\gamma_b(G)=\min \{c.x :  Ax\geq \mathbf{1}, x_{i,k}\in \{0,1\}\}.$$	 
   
 For general graphs, surprisingly, an optimal dominating broadcast can be found in polynomial-time $O(n^6)$~\cite{heggernes2006optimal}. The same problem can be solved in linear time for trees~\cite{brewster2019broadcast}.

 It is known that $\MP(G)\leq \gamma_b(G)$~\cite{brewster2013new}. In 2019, Beaudou,  Brewster, and Foucaud~\cite{beaudou2019broadcast} proved that $\gamma_b(G)\leq 2\MP(G)+3$ and they conjectured that $\gamma_b(G)\leq 2\MP(G)$ for every graph.  In 2025, Rajendraprasad et al.~\cite{rajendraprasad2025multipacking} have shown that for general connected graphs, 
\begin{center}
    $\displaystyle\lim_{\MP(G)\to \infty}\sup\Bigg\{\dfrac{\gamma_{b}(G)}{\MP(G)}\Bigg\}= 2.$
\end{center} 
Stronger bounds are known for specific graph classes. In particular, for any connected chordal graph $G$, we have $\gamma_b(G)\leq \left\lceil \tfrac{3}{2}\MP(G)\right\rceil$~\cite{das2023relation}, and for any cactus graph $G$, it holds that $\gamma_b(G)\leq \tfrac{3}{2}\MP(G)+\tfrac{11}{2}$~\cite{das2025multipacking}.

\medskip
\noindent\textbf{Our Contributions.}
The algorithmic complexity of the \textsc{Multipacking} problem remained open for more than a decade
(\cite{brewster2019broadcast,foucaud2021complexity,teshima2012broadcasts,yang2015new,yang2019limited}). In this paper, we resolve this question by establishing hardness results for \textsc{Multipacking} on a wide range of graph classes, complemented by an exact exponential-time algorithm.

Our main focus is to identify graph classes for which the \textsc{Multipacking} problem admits a polynomial-time algorithm, and to delineate a boundary between tractable and intractable cases. 
Whenever our reductions also imply consequences for the parameterized complexity of \textsc{Multipacking} on the considered graph classes, we explicitly point them out. 
However, we do not aim at obtaining tight parameterized complexity bounds, and we deliberately avoid more involved reductions whose sole purpose would be to strengthen such bounds.

Our main hardness result is obtained via a simple and elegant reduction from the \textsc{Hitting Set} problem, which is \textsc{NP-complete}~\cite{garey1979computers} and \textsc{W[2]-complete} when parameterized by the solution size~\cite{cygan2015parameterized}. This reduction yields the following.

\begin{restatable}{theorem}{MultipackingHardnessgeneral}\label{thm:Multipacking_general_hardness}
The \textsc{Multipacking} problem is \textsc{NP-complete}. Moreover, when parameterized by the solution size $k$, the problem is \textsc{W[2]-hard}. In addition, unless the \textit{Exponential Time Hypothesis (ETH)}\footnote{The \textit{Exponential Time Hypothesis (ETH)} states that \textsc{3-SAT} cannot be solved in time $2^{o(n)}$, where $n$ is the number of variables of the input CNF formula.} fails, there is no algorithm for \textsc{Multipacking} running in time $f(k)n^{o(k)}$, where $n$ is the number of vertices of the input graph.
\end{restatable}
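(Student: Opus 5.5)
We reduce from \textsc{Hitting Set}, in the form: given a universe $U=\{u_1,\dots,u_n\}$, a family $\mathcal{S}=\{S_1,\dots,S_m\}$ of subsets of $U$, and an integer $k$, decide whether some $H\subseteq U$ with $|H|= k$ meets every $S_j$. We may assume $k\ge 2$ and that no single element hits everything, since otherwise the instance is trivial. Membership in \textsc{NP} is immediate: given $M$, we check $|N_r[v]\cap M|\le r$ for all $v$ and all $r\le \mathrm{rad}(G)$ using BFS. The reduction will keep the parameter equal to $k$ and the graph of size polynomial in $n+m$. Hence \textsc{NP}-hardness, \textsc{W[2]}-hardness, and the ETH lower bound all transfer. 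The last one uses the known fact that \textsc{Hitting Set}, which contains \textsc{Dominating Set}, has no $f(k)(n+m)^{o(k)}$ algorithm under ETH.

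The key idea is to turn "every set is hit" into an upper bound on ball sizes, since a multipacking only imposes upper bounds. If $H$ has size exactly $k$, then $H$ hits $S_j$ iff $|H\setminus S_j|\le k-1$. So we want a vertex $s_j$ whose ball of radius $k-1$ contains exactly the element vertices outside $S_j$ and none of those inside $S_j$.

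\textbf{Construction.}
\begin{itemize}
\item Take a hub $h$ and, for each element $u_i$, a path of length $L$ from $h$ to a vertex $u_i$, with $L\ge k$ (say $L=k$).
\item For each set $S_j$, add a vertex $s_j$ joined by a path of length $k-1$ to each $u_i$ with $u_i\notin S_j$. There is no direct connection from $s_j$ to the elements of $S_j$.
\end{itemize}
All element vertices are then pairwise at distance at least $2k-2$. The distances from $s_j$ to elements of $S_j$ are at least $k-1+2k-2>k-1$.

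\textbf{Forward direction.} Let $H$ be a hitting set of size $k$, and set $M=\{u_i : u_i\in H\}$.
\begin{itemize}
\item For $r\ge k$ the constraint $|N_r[v]\cap M|\le r$ holds trivially.
\item For $r<k-1$ a ball contains at most one vertex of $M$, because pairwise distances are at least $2k-2>2r$.
\item The case $r=k-1$ is the real work. We must show that a ball of radius $k-1$ containing $k\ge 2$ element vertices must be centred at some $s_j$, or at a point whose set of reachable elements is a subset of that of some $s_j$. We argue via the path structure: a center within distance $k-1$ of two elements at mutual distance at least $2k-2$ must be a midpoint of a shortest path between them. Every such path of length $2k-2$ runs through some $s_j$, while paths through $h$ have length $2L\ge 2k$. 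At $s_j$ the ball contains exactly $H\setminus S_j$, which has size at most $k-1$.
\end{itemize}

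\textbf{Reverse direction.} This is the step I expect to be the main obstacle. Given an arbitrary multipacking $M$ of size $k$, we must produce a hitting set of size at most $k$, even though $M$ may use internal path vertices, $h$, or the $s_j$'s. The plan is a normalization: map each $x\in M$ to a nearest element vertex, or to an arbitrary element if $x=h$. We then show that if some $S_j$ were missed by the image, then a suitable ball of radius $k-1$ would contain $k$ vertices of $M$.
\begin{itemize}
\item For $x$ on an $s_j$--$u_i$ path or on a hub path, the fact that radius-$1$ and radius-$2$ balls contain few vertices of $M$ forces the vertices of $M$ to be spread out. Comparing $d(s_j,x)$ with $d(s_j,u_i)$ should show that "pushing $x$ outward to $u_i$" never decreases distances to $s_j$ in a harmful way.
\item If this direct normalization fails, I would add gadgets so that only element vertices can usefully host $M$. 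Concretely, I would subdivide so that all non-element vertices lie within small radius of many forced-conflict points, or attach long pendant paths at each $u_i$, making the $u_i$'s the "far" vertices. The upper bound $\mathrm{mp}\le\mathrm{rad}(G)$ may also be used to pin down $|M|=k$.
\end{itemize}
Pinning down exact path lengths so that both directions go through is where the bulk of the case analysis lies.
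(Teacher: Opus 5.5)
\textbf{The reverse direction fails for your graph.} Your forward direction is fine. The reverse direction, however, is not merely unproven: it is false for the graph you build.

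The set vertices $s_j$ are pairwise at distance at least $2k-2$. Every path leaving $s_j$ must traverse a full path of length $k-1$ to some element vertex, and then another full path to reach $s_{j'}$. So the set vertices can themselves host a multipacking. By your own midpoint argument, a radius-$(k-1)$ ball containing $s_j$ and $s_{j'}$ must be centred at an element vertex $u_i$ with $u_i\notin S_j\cup S_{j'}$. Such a ball contains $s_\ell$ exactly when $u_i\notin S_\ell$. Hence $\{s_{j_1},\dots,s_{j_k}\}$ is a multipacking whenever $S_{j_1},\dots,S_{j_k}$ cover $U$. Your graph therefore encodes ``hitting set of size $k$ \emph{or} set cover of size $k$'', not \textsc{Hitting Set}.

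\textbf{Concrete counterexample.} It satisfies all your assumptions. Take $k=2$, $U=\{u_1,\dots,u_4\}$, and let $\mathcal{F}$ consist of all six $2$-subsets. No single element hits everything. Any two elements miss the complementary pair, so there is no hitting set of size $2$. On the other hand, $N_1[s_{\{1,2\}}]=\{s_{\{1,2\}},u_3,u_4\}$ and $N_1[s_{\{3,4\}}]=\{s_{\{3,4\}},u_1,u_2\}$ are disjoint. So $\{s_{\{1,2\}},s_{\{3,4\}}\}$ is a multipacking of size $2$. None of your fallback gadgets (pendant paths at the $u_i$, subdivisions) shortens distances among the $s_j$, so none of them repairs this.

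\textbf{Missing idea: make the set vertices mutually close.} The paper does the following:
\begin{itemize}
\item make $\mathcal{F}$ a clique;
\item join $S_j$ directly to $u_i^1$ iff $u_i\notin S_j$;
\item hang a pendant path $u_i^1u_i^2\cdots u_i^{k-1}$ from each element.
\end{itemize}
Then at most one set vertex can lie in $M$, and the reverse direction becomes one line. Let $H=\{u_i : M \text{ meets the path of } u_i\}$, and suppose some $S_t$ is not hit by $H$. Every vertex of $M$ is then within distance $k-1$ of $S_t$:
\begin{itemize}
\item clique vertices are at distance at most $1$;
\item a vertex $u_i^j$ with $u_i\in H$ is at distance $j\le k-1$, because $S_t u_i^1$ is an edge.
\end{itemize}
So $|N_{k-1}[S_t]\cap M|=k$, a contradiction.

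For the forward direction in the paper, the centre of an offending ball cannot lie on an element path, so it is some $S_t$. Its radius-$(k-1)$ ball contains $u_i^{k-1}$ iff $u_i\notin S_t$, which gives the contradiction.

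Your surrounding framework is fine: NP membership, preservation of the parameter $k$, polynomial size, and transfer of the ETH bound from \textsc{Hitting Set}. What needs replacing is the construction itself.
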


We further study the complexity of \textsc{Multipacking} on graph classes defined by metric properties, with particular emphasis on $\frac{1}{2}$-hyperbolic graphs. The notion of $\delta$-hyperbolicity was introduced by Gromov~\cite{gromov1987hyperbolic}. While \textsc{Multipacking} is solvable in polynomial time on strongly chordal graphs\footnote{A \emph{strongly chordal graph} is  an undirected graph $G$ which is a chordal graph and every cycle of even length ($\geq 6$) in $G$ has an \emph{odd chord}, i.e., an edge that connects two vertices that are an odd distance ($>1$) apart from each other in the cycle.} and in linear time on trees~\cite{brewster2019broadcast}, and while strongly chordal graphs are $\frac{1}{2}$-hyperbolic~\cite{wu2011hyperbolicity}, we show that this tractability does not extend to a larger class chordal $\cap$ $\frac{1}{2}$-hyperbolic. In particular, we strengthen the above result by proving that \textsc{Multipacking} remains \textsc{NP-complete} even for chordal $\cap$ $\frac{1}{2}$-hyperbolic graphs, a strict superclass of strongly chordal graphs.

Beyond metric graph classes, we establish hardness results for several classical families. Using two distinct reductions from \textsc{Hitting Set}, we show \textsc{NP}-completeness and \textsc{W[2]}-hardness for bipartite graphs and claw-free\footnote{\textit{Claw-free graph} is a graph that does not contain a claw ($K_{1,3}$) as an induced subgraph.} graphs. We also prove \textsc{NP}-completeness for regular graphs via a reduction from the \textsc{Total Dominating Set} problem on cubic (3-regular) graphs~\cite{garey1979computers}. Furthermore, we study \textsc{Multipacking} on geometric intersection graphs: we show that the problem is \textsc{NP-complete} for CONV graphs\footnote{A graph $G$ is a \textit{CONV graph} or \textit{Convex Intersection Graph} if and only if there exists a family of convex sets on a plane such that the graph has a vertex for each convex set and an edge for each intersecting pair of convex sets.}  by a reduction from \textsc{Total Dominating Set} on planar graphs.

These results are summarized in the following theorem.

\begin{restatable}{theorem}{MultipackingHardnesssubclass}\label{thm:Multipacking_hardness_subclass}
The \textsc{Multipacking} problem is \textsc{NP-complete} for each of the following graph classes (See Fig.~\ref{fig:graph_class_map}):
\begin{itemize}
    \item chordal $\cap$ $\frac{1}{2}$-hyperbolic graphs,
    \item bipartite graphs,
    \item claw-free graphs,
    \item regular graphs, and
    \item CONV (convex intersection) graphs.
\end{itemize}
Moreover, when parameterized by the solution size, the problem is \textsc{W[2]-hard} for chordal $\cap$ $\frac{1}{2}$-hyperbolic graphs, bipartite graphs, and claw-free graphs.
\end{restatable}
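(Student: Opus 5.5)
We prove NP-membership once and then give one reduction per class. Given $M$, we can verify that it is a multipacking in polynomial time: compute all-pairs distances by BFS and check $|N_r[v]\cap M|\le r$ for every $v$ and every $r\le \mathrm{rad}(G)$. So membership in NP holds for all five classes. For hardness, the plan is to adapt the \textsc{Hitting Set} reduction behind Theorem~\ref{thm:Multipacking_general_hardness}, keeping the parameter $k$ (or a linear function of it) unchanged so that W[2]-hardness transfers. I assume that reduction has the following shape. An instance $(U,\mathcal{F},k)$ with $U=\{u_1,\dots,u_n\}$ and $\mathcal{F}=\{S_1,\dots,S_m\}$ is encoded by element gadgets, set vertices, and pendant paths of length about $k$. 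The paths force $\mathrm{rad}(G)\ge k$ and place the candidate multipacking vertices at the path ends. The ball constraints around the set vertices, with radius about $k$, then encode the incidence between chosen elements and sets. Because the constraints are packing constraints while hitting is a covering condition, I would wire the set vertices to the \emph{complement} incidence: $u_i$'s gadget is at the shorter distance from $s_j$ exactly when $u_i\notin S_j$. Then a set missed by all chosen elements yields a ball containing too many chosen vertices, and conversely.

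\textbf{Chordal $\cap$ $\tfrac12$-hyperbolic.} Turn the core (element and set vertices) into a split graph by making the set vertices a clique, with element vertices adjacent to set vertices according to the incidence, and keep the pendant paths. Pendant paths attached to a chordal graph preserve chordality. Hyperbolicity $\le\tfrac12$ should follow from the four-point condition: any four vertices either lie near the split core, whose diameter is at most $3$, or their quadruple distance sums decompose along the unique path attachments.

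\textbf{Bipartite.} Subdivide every edge once and double the path lengths and $k$-related radii. Distances scale by $2$, which preserves the ball inequalities, with some care about parity on balls of odd radius.

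\textbf{Claw-free.} Replace each high-degree core vertex by a clique, in the style of a line graph. The paths stay paths, and distance differences of $1$ are preserved up to an additive constant.

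\textbf{Regular and CONV.} Here the natural source is \textsc{Total Dominating Set}, on cubic graphs for regularity and on planar graphs for CONV. The key equivalence is that for radius-$1$ constraints a multipacking is an open-neighbourhood packing. The natural dual of total domination is a packing of open neighbourhoods, so the reduction should attach uniform long cycles, or glue in copies, so that the graph stays regular and all large-radius constraints become slack. For CONV, I would realise the planar graph by convex sets, which is possible since planar graphs are contact graphs of convex sets, and attach the paths as chains of thin convex sets.

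The main obstacle is the converse direction in each variant. We must show that an optimal multipacking can be assumed to use only the designated path-end vertices, and that the modification (cliquifying, subdividing, cycle-gluing) creates no new short paths that loosen or tighten a ball constraint. I would handle this with an exchange argument: any chosen vertex can be pushed to the end of its path without violating any constraint. For the hyperbolicity bound, I would first try a direct case analysis of the four-point condition rather than invoking a general theorem.
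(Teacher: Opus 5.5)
\textbf{Chordal $\cap$ $\frac12$-hyperbolic and bipartite.} Your chordal $\cap$ $\frac12$-hyperbolic step fails as stated. The split core you describe (set vertices forming a clique, incidence edges to the $u_i^1$, pendant paths) is exactly the paper's plain chordal construction, and it is not $\frac12$-hyperbolic in general. With the complement wiring, take $S_1=\{b,d\}$, $S_2=\{b,c\}$, $S_3=\{a,d\}$, $S_4=\{a,c\}$, so $u_a^1,u_b^1,u_c^1,u_d^1$ have set-neighbourhoods $\{S_1,S_2\}$, $\{S_3,S_4\}$, $\{S_1,S_3\}$, $\{S_2,S_4\}$. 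Then $d(u_a^1,u_b^1)=d(u_c^1,u_d^1)=3$ while the other four pairs are at distance $2$. The three sums are $6,4,4$, so $\delta=1$, and since pendant paths create no shortcuts this persists in $G$. So the hyperbolicity claim is false, not just unproven. The missing idea is the paper's pair vertices $y_{i,j}$, adjacent to $u_i^1$ and $u_j^1$ and added to the clique, which shrink the core to diameter $2$. Then the Brinkmann--Koolen--Moulton characterization gives $\delta\le\frac12$: the forbidden isometric subgraphs have diameter $3$ and all their vertices lie on cycles, so they avoid the pendant paths. One must also check that $N_{k-1}[y_{p,q}]$ contains only the two path ends $u_p^{k-1},u_q^{k-1}$, which is why $k\ge 3$ is needed.

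Your bipartite step also breaks, because $|N_r[v]\cap M|\le r$ is not scale-invariant: the right-hand side does not double. After subdividing every edge and doubling the paths, two path ends are at distance at least $4k-4$. So balls of radius below $2k-2$ contain at most one end, while balls of radius at least $2k-2\ge k$ may contain all $k$. Hence any $k$ path ends form a multipacking and $\mathcal{F}$ is no longer encoded. The critical radius must stay $k-1$. The paper gets bipartiteness by replacing the clique on $\mathcal{F}$ with a single hub $C$ adjacent to all set vertices, so that $d(S_t,u_i^1)$ is $1$ when $u_i\notin S_t$ and at least $3$ otherwise.

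\textbf{Regular and CONV.} Here you do not yet have a reduction. The radius-$1$ condition $|N_1[v]\cap M|\le 1$ makes $M$ a distance-$3$ (closed-neighbourhood) packing, not an open packing. In any case, open packing is only the weak LP dual of total domination (an inequality, not an equivalence). So making all larger radii slack cannot encode whether $G$ has a total dominating set of size at most $k$.

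What you need is the same complement trick you used for \textsc{Hitting Set}: total domination is hitting the family of open neighbourhoods. The paper places $\overline{G}$ on $u_1^1,\dots,u_n^1$ and hangs a path of length $k-2$ from each. If some $v_t$ is not totally dominated by $D$, then $u_t^1$ is adjacent in $\overline{G}$ to every other $u_i^1$ with $v_i\in D$, so all $k$ path ends lie in $N_{k-1}[u_t^1]$. This is why the CONV case rests on co-planar graphs being CONV (Kratochv\'{\i}l--Kub\v{e}na); representing the planar graph $G$ itself by convex sets is beside the point.

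For regularity, the core vertices have degree $d=n-4$ in $\overline{G}$ while path vertices have degree $2$. You therefore need a padding gadget that equalizes degrees without changing the relevant distances, and ``attaching uniform long cycles'' does not do this. The paper replaces each path by layers of $d$ vertices ending in a $(2d-1)$-regular graph on $d^2$ vertices, making $G'$ $2d$-regular.

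\textbf{Remaining points.} The exchange argument you flag as the main obstacle is unnecessary. In the converse direction, let $H$ be the elements whose gadget meets $M$. If some $S_t$ is missed, every vertex of $M$ already lies in $N_{k-1}[S_t]$, a contradiction, with no normalization of $M$ needed.

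Your claw-free sketch points in the right direction. The paper subdivides each edge $u_i^1S_j$ by $w_{j,i}$, joins $w$'s sharing an index, and shortens the paths to length $k-3$. But the real work there, checking the radius-$(k-1)$ balls around the new $w$ vertices, is not addressed in your outline.
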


\begin{figure}[ht]
    \centering
   \includegraphics[width=\textwidth]{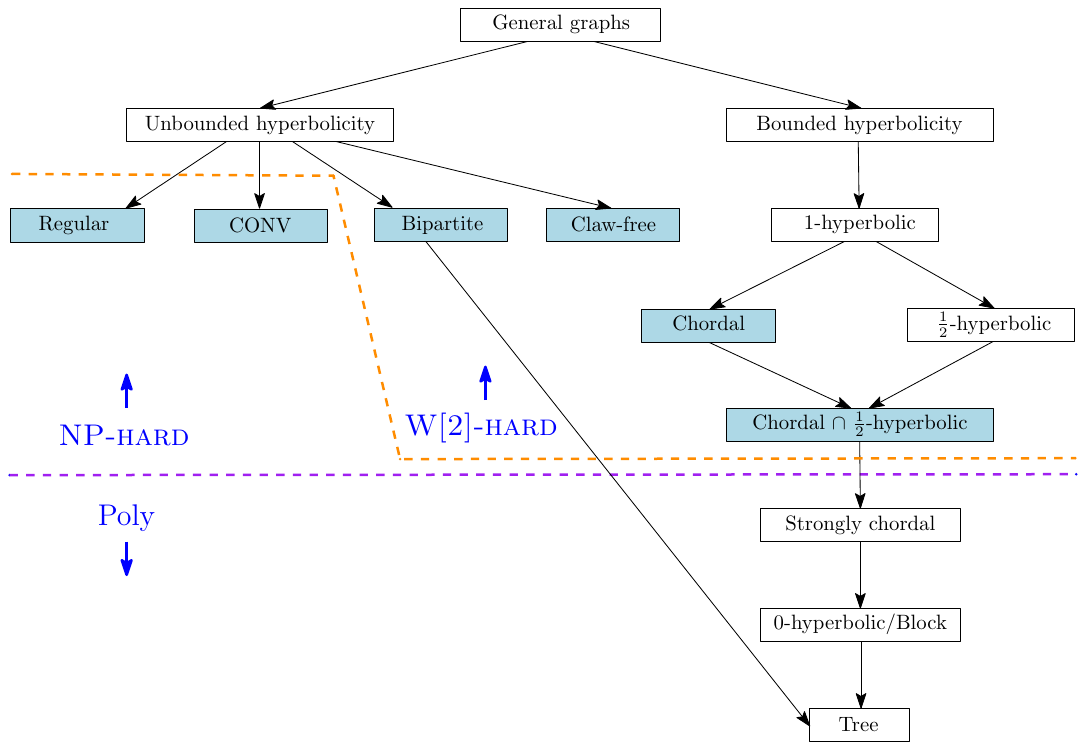}
    \caption{Inclusion diagram for graph classes mentioned in this paper (and related ones). If a class $A$ has a downward path to class $B$, then $B$ is a subclass of $A$. The \textsc{Multipacking} problem  is \textsc{NP-hard} for the graph classes above the dashed (purple) straight-line and it is polynomial-time solvable for the graph classes below the dashed (purple) straight-line. Moreover, the \textsc{Multipacking} problem  is \textsc{W[2]-hard} for the graph classes above the dashed (darkorange) curve. In this paper, we have discussed the hardness of the \textsc{Multipacking} problem for the graph classes in the colored (lightblue) block.}
    \label{fig:graph_class_map}
\end{figure}

Finally, complementing our hardness results, we design an exact exponential-time algorithm for \textsc{Multipacking} on general graphs.

\begin{restatable}{theorem}{multipackingtime}\label{thm:1.58time}
There exists an exact algorithm for computing a maximum multipacking in an $n$-vertex graph with running time $O^*(1.58^n)$.
\end{restatable}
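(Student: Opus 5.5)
The plan is a branching algorithm that enumerates a family of candidate sets guaranteed to contain a maximum multipacking, followed by a polynomial-time check of each candidate. Two facts drive it. First, taking $r=1$ in the definition shows that every multipacking $M$ is a $2$-packing: $|N_1[v]\cap M|\le 1$ for all $v$, so any two vertices of $M$ are at distance at least $3$. Second, the multipacking property is hereditary: every subset of a multipacking is again a multipacking. Checking whether a given set $M$ is a multipacking takes polynomial time, since we compute all distances by BFS and test $|N_r[v]\cap M|\le r$ for every $v$ and every $r\le \rad(G)$. It therefore suffices to enumerate, in $O^*(1.58^n)$ time, a family of sets containing every maximal $2$-packing, or at least one maximum multipacking, and to return the largest candidate that passes the check.

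The enumeration maintains a partition of $V$ into chosen, discarded and undecided vertices. Distances are always measured in the original graph $G$; we never delete vertices from $G$, only from the undecided set $U$. We use the following rules, in this order.

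\emph{Rule 1.} If some $v\in U$ has at least two other vertices of $U$ within distance $2$, we branch. In the first branch $v$ is chosen, and every vertex of $N_2[v]\cap U$ is discarded because of the $2$-packing constraint, so $|U|$ drops by at least $3$. In the second branch $v$ is discarded and $|U|$ drops by $1$. This gives the recurrence $T(n)\le T(n-1)+T(n-3)$, whose solution is $O(1.4656^n)$.

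\emph{Rule 2.} Otherwise, the conflict graph on $U$ (where $u,v$ are adjacent if $d(u,v)\le 2$) has maximum degree at most $1$. So $U$ splits into isolated vertices and conflicting pairs. An isolated undecided vertex can be handled greedily by the hereditary property: we include it whenever the current set stays a multipacking, although this greedy step needs care because of larger radii. Each pair has three options: neither, the first, or the second.

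The main obstacle is the pair case. Naively enumerating three options per pair costs $3^{|U|/2}\approx 1.732^{|U|}$, which is above the target. To reach $1.58\approx 4^{1/3}=2^{2/3}$, I would refine the measure so that pairs are resolved by looking at groups of three vertices. Whenever three undecided vertices lie in a common ball $N_2[w]$ for some $w\in V$ (not necessarily $w\in U$), the $r=2$ constraint allows at most two of them to be chosen. Combined with the pairwise $r=1$ constraints, this cuts the options for such a triple to at most $4$. Branching on such triples yields $4^{n/3}=O(1.5875^n)$, which meets the bound. If no such triple exists, each conflicting pair is far from the other undecided vertices at radius $2$. I would then try to show that the remaining constraints for larger $r$ either let us choose exactly one endpoint of each pair without loss of generality, since two multipackings differing only inside a pair have the same size, or reduce the leftover problem to a polynomially solvable one. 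For the latter I would first try a matching/flow formulation, because the interactions between pairs become sparse.

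Finally, the running time is the maximum of the branching factors of all rules, namely $\max\{1.4656, 4^{1/3}\}<1.58$. The result is multiplied by the polynomial cost of the final check, which gives the $O^*(1.58^n)$ bound stated in Theorem~\ref{thm:1.58time}.
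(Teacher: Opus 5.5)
\textbf{There is a genuine gap.} Nothing in your argument bounds the cost of the phase after Rule~1, and the one concrete bound you give for it is false.

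Once Rule~1 no longer applies, the conflict graph on $U$ is a matching, so among any three undecided vertices at most one pair conflicts. Suppose such a triple lies in some $N_2[w]$. The $r=2$ constraint only forbids choosing all three. The triple therefore still has $7$ feasible patterns if no pair conflicts, or $6$ if one pair conflicts. You would get only $4$ patterns if all three pairwise conflicted, and that is exactly the situation Rule~1 has already removed. So this branching has factor $6^{1/3}$ or $7^{1/3}>1.8$, not $4^{1/3}$.

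Your other devices for this phase fail as well:
\begin{itemize}
\item The hereditary property lets you delete vertices from a multipacking, not add them. Greedily inserting an isolated vertex can use up the capacity of some ball $N_r[x]$ with $r\ge 2$ that two other candidates needed.
\item ``Exactly one endpoint of each pair without loss of generality'' ignores that the two endpoints lie in different larger balls, and that possibly neither can be taken.
\item Most importantly, the residual problem is NP-hard: picking a largest subset of pairwise far-apart vertices subject only to the constraints at radii $r\ge 2$. In the paper's \textsc{Hitting Set} reduction, the candidates $u_i^{k-1}$ are pairwise at distance at least $2k-2$. So they are all isolated in your conflict graph, and deciding which of them to take encodes \textsc{Hitting Set}. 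No matching or flow formulation can handle this unless $\mathrm{P}=\mathrm{NP}$.
\item For the same reason, enumerating only maximal $2$-packings would not suffice: the best subset of one is again this hard residual problem.
\end{itemize}

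\textbf{What is missing} is a global bound on the number of candidate sets, with no attempt to resolve the radius-$\ge 2$ constraints during enumeration. The paper gets it in four steps:
\begin{itemize}
\item Pass to a spanning tree $T$. Distances only grow, so $M(G)\subseteq M(T)$.
\item Branch at a deepest leaf $w$ with parent $w_1$ and grandparent $w_2$. If $w_1$ has at least two leaf children, or $T_{w_2}$ is the path $w_2w_1w$, branching on $w$ gives $f(n)\le f(n-1)+f(n-3)$.
\item Otherwise $T_{w_2}$ is a depth-$2$ spider $H_2(k_1,k_2)$ on $m\ge 4$ vertices. Its multipackings have size at most $2$ and number at most $\frac38m^2-\frac12m+\frac{17}8\le 1.58^m$, giving $f(n)\le 1.58^m f(n-m)$.
\item Each of the resulting $O(1.58^n)$ candidates is then checked in polynomial time.
\end{itemize}

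Your framework can be rescued along these lines. If Rule~2 simply enumerates all $2^s3^p$ completions, the leaves of your search are exactly the $2$-packings of $G$, and these are $2$-packings of any spanning tree. The same leaf recursion should then bound their number by $O(6^{n/4})=O(1.566^n)$; I checked this only by hand, not as a full proof. Its first two cases use only $r=1$, and a depth-$2$ spider on $m\ge 4$ vertices has at most $6^{m/4}$ $2$-packings, with equality for $H_2(1,1)$. That bound comes from the tree-counting argument, not from any local branching factor of your rules.
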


\medskip
\noindent\textbf{Organisation:} In Section \ref{sec:preliminaries}, we recall some definitions and notations.  In Section \ref{sec:multipacking_npc_w}, we prove that the \textsc{Multipacking} problem is \textsc{NP-complete} and also the problem is \textsc{W[2]-hard} when parameterized by the solution size. In Section \ref{sec:hardness_on_subclasses}, we discuss the hardness of the \textsc{Multipacking} problem on some sublcasses. In Section \ref{sec:exponentialalgorithm}, we present an exact exponential algorithm for the problem on general graphs. We conclude in Section \ref{sec:conclusion} by presenting several important questions and future research directions.

\section{Preliminaries}\label{sec:preliminaries}

A complete bipartite graph $K_{1,3}$ is called a \textit{claw}, and the vertex that is incident with all the edges in a claw is called the \textit{claw-center}. \textit{Claw-free graph} is a graph that does not contain a claw as an induced subgraph. 



Let $G$ be a connected graph. For any four
vertices $u, v, x, y \in V(G)$, we define $\delta(u, v, x, y)$ as half of the difference between the two larger of the three sums $d(u, v) + d(x,y)$, $d(u, y) + d(v, x)$, $d(u, x) +
d(v, y)$. The \textit{hyperbolicity} of a graph $G$, denoted by $\delta(G)$, is the value of $\max_{u, v, x, y\in V(G)}\delta(u, v, x, y)$. For every $\delta\geq \delta(G)$, we say that $G$ is $\delta$-\textit{hyperbolic}. A graph class $\mathcal{G}$ is said to be \textit{hyperbolic} (or bounded hyperbolicity class) if there exists a constant $\delta$ such that every graph $G \in \mathcal{G}$ is $\delta$-hyperbolic.


\section{Hardness of the \textsc{Multipacking} problem (Proof of Theorem~\ref{thm:Multipacking_general_hardness})}\label{sec:multipacking_npc_w}

In this section, we present a reduction from the \textsc{Hitting Set} problem to show the first hardness result on the \textsc{Multipacking} problem.

\begin{restatable}{lemma}{NPCchordal}\label{thm:NPC_chordal}
 \textsc{Multipacking} problem is \textsc{NP-complete}. Moreover, \textsc{Multipacking} problem  is \textsc{W[2]-hard} when parameterized by the solution size.
\end{restatable}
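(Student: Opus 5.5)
The plan is a polynomial reduction from \textsc{Hitting Set} that keeps the parameter unchanged. Together with membership in \textsc{NP} this gives both claims. Membership is immediate: given $M$, compute all distances by BFS and check $|N_r[v]\cap M|\le r$ for every $v$ and every $r\le \rad(G)$. Since \textsc{Hitting Set} is \textsc{W[2]}-complete when parameterized by $k$~\cite{cygan2015parameterized}, mapping $k$ to the same $k$ yields \textsc{W[2]}-hardness.

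Hitting Set asks for a small set meeting every set; multipacking only bounds ball contents from above. The idea is to express ``$M$ meets $S_j$'' as the upper bound ``the ball of radius $k-1$ around a set-vertex contains at most $k-1$ vertices of $M$'', which fails exactly when all $k$ chosen vertices avoid $S_j$. Let the instance have universe $U=\{u_1,\dots,u_n\}$, sets $S_1,\dots,S_m$ and budget $k$. We may pad $H$ with extra elements, so ``hitting set of size $\le k$'' and ``of size exactly $k$'' are equivalent. We may also assume every element lies outside some set, by adding a dummy set equal to $U$.

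The construction is as follows.
\begin{itemize}
\item Take a clique on set-vertices $s_1,\dots,s_m$.
\item For each element $u_i$, attach a path of length $k-2$ from a vertex $x_i$ (representing $u_i$) to a hub vertex $w_i$.
\item Make $w_i$ adjacent to $s_j$ exactly when $u_i\notin S_j$.
\end{itemize}
Then $d(x_i,s_j)=k-1$ if $u_i\notin S_j$ and $d(x_i,s_j)=k$ if $u_i\in S_j$, the latter via another set-vertex in the clique. Also $d(x_i,x_{i'})\ge 2k-2$. The graph is chordal, since it is a clique with hubs attached through their neighbourhoods in the clique plus pendant paths; this fits the label of the lemma and the later chordal claims.

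Forward direction: given a hitting set $H$ of size $k$, take $M=\{x_i : u_i\in H\}$.
\begin{itemize}
\item For $r\le k-2$: two vertices of $M$ are at distance $\ge 2k-2>2r$, so any ball of radius $r$ contains at most one of them.
\item For $r\ge k$: the bound is trivial because $|M|=k$.
\item For $r=k-1$ and a centre $s_j$: the ball sees exactly the $x_i$ with $u_i\in H\setminus S_j$, which has at most $k-1$ vertices because $H$ hits $S_j$.
\item For $r=k-1$ and other centres (path vertices or hubs): a short distance count shows such a ball reaches at most one $x_i$. For example, from $w_i$ every other $x_{i'}$ is at distance $\ge 2+(k-2)=k$.
\end{itemize}

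Converse direction, which I expect to be the main obstacle: given a multipacking $M$ of size $k$, extract a hitting set. Map each vertex of $M$ on the path of $u_i$ (including $w_i$) to $u_i$. Map set-vertices in $M$ to an arbitrary element. We need that $N_{k-1}[s_j]$ contains every vertex of $M$ whose image lies outside $S_j$. That is true for a path vertex on the path of $u_i\notin S_j$, which is at distance $\le k-1$ from $s_j$. It holds trivially for core vertices, which lie within distance $2$ of $s_j$. Then $|N_{k-1}[s_j]\cap M|\le k-1$ forces some vertex of $M$ to map into $S_j$, so the image set, of size at most $k$, is a hitting set.

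The remaining worry is core vertices, whose image is arbitrary. I would first try to show that a multipacking containing a core vertex can be shifted outward along paths to an element endpoint without violating any ball constraint. If that exchange argument fails, I would lengthen the construction, for instance by adding a second pendant path per element or a long path hanging from the clique, so that the radius-$(k-1)$ balls around the $s_j$ exclude no other vertices. This would make the counting above airtight.
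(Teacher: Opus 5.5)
This is the paper's proof: your graph is exactly its construction (your $w_i$ and $x_i$ are its $u_i^1$ and $u_i^{k-1}$), and both directions use the same distance counting, so the proposal is correct. Your closing worry is already settled by your own observation that a set-vertex of $M$ lies within distance $1\le k-1$ of $s_j$ whatever its image (the paper simply leaves such vertices out of $H$), so no exchange argument or lengthening is needed. Likewise, the assumption that every element avoids some set is unnecessary (and adding a dummy set equal to $U$ would not ensure it), because an element lying in every set just yields a separate path component, which only increases distances.
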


\begin{proof}

We reduce the well-known \textsc{NP-complete} (also \textsc{W[2]-complete} when parametrized by solution size) problem \textsc{Hitting Set}~\cite{cygan2015parameterized,garey1979computers} to the \textsc{Multipacking} problem.

\medskip
\noindent
\fbox{%
  \begin{minipage}{\dimexpr\linewidth-2\fboxsep-2\fboxrule}
  \textsc{ Hitting Set} problem
  \begin{itemize}[leftmargin=1.5em]
    \item \textbf{Input:} A finite set $U$, a collection $\mathcal{F}$ of subsets of $U$, and an integer $k \in \mathbb{N}$.
    \item \textbf{Question:} Does there exist a \emph{hitting set} $S \subseteq U$ of size at most $k$; that is, a set of at most $k$ elements from $U$ such that each set in $\mathcal{F}$ contains at least one element from $S$?
  \end{itemize}
  \end{minipage}%
}
\medskip

Let $U=\{u_1,u_2,\dots,u_n\}$ and  $\mathcal{F}=\{S_1,S_2,\dots,S_m\}$ where $S_i\subseteq U$ for each $i$. We construct a graph $G$ in the following way: (i) Include each element of $\mathcal{F}$ in the vertex set of $G$ and $\mathcal{F}$ forms a clique in $G$. (ii) Include $k-2$ length path $u_i = u_i^1 u_i^2 \dots u_i^{k-1}$  in $G$,  for each element $u_i\in U$.
(iii) Join $u_i^1$ and $S_j$ by an edge iff $u_i \notin S_j$. Fig. \ref{fig:reduction_chordal} gives an illustration.

\begin{figure}[ht]
    \centering
   \includegraphics[width=0.9\textwidth]{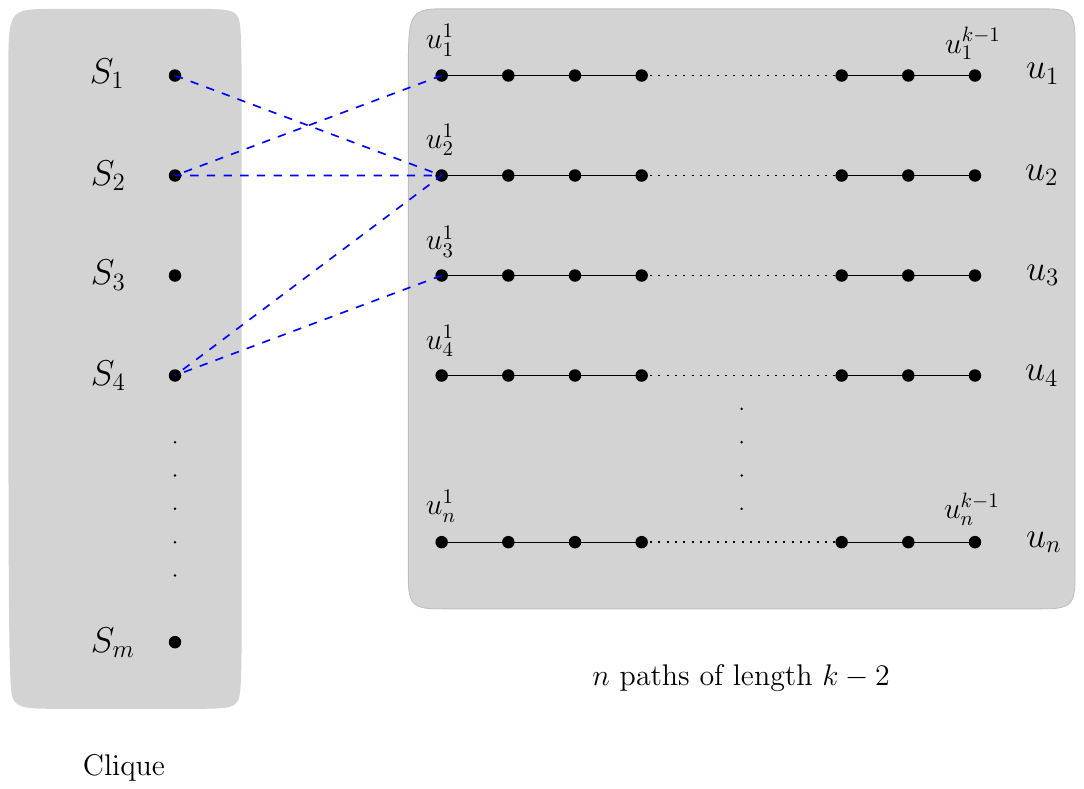}
    \caption{An illustration of the construction of the graph $G$ used in the proof of Lemma~\ref{thm:NPC_chordal}, demonstrating the hardness of the \textsc{Multipacking} problem.}
    \label{fig:reduction_chordal}
\end{figure}


\vspace{0.3cm}
\noindent
\textbf{Claim \ref{thm:NPC_chordal}.1. } $\mathcal{F}$ has a hitting set of size at most $k$ iff $G$ has a multipacking of size at least $k$, for $k\geq 2$.

\begin{claimproof} \textsf{(If part)} Suppose $\mathcal{F}$ has a hitting set $H = \{ u_1, u_2, \dots, u_k \}$. 
We want to show that $M =\{ u_1^{k-1}, u_2^{k-1}, \dots, u_k^{k-1} \}$ is a multipacking in $G$. If $r < k-1$, then 
$| N_{r}[v] \cap M |\leq 1 \leq r$, $ \forall v \in V(G)$, since $d(u_i^{k-1}, u_j^{k-1})\geq 2k-2$, $\forall i\neq j$.
If $r > k-1$, then 
$| N_{r}[v] \cap M | \leq | M | = k \leq r$, $\forall v \in V(G)$. Suppose $r = k-1$. If $M$ is not a multipacking,  there exists some $v \in V(G)$ such that 
$| N_{k-1}[v] \cap M | = k.$
Therefore, 
$M \subseteq N_{k-1}[v].$ Suppose $v=u_p^q$ for some $p$ and $q$. In this case, $d(u_p^q,u_i^{k-1})\geq k$, $\forall i\neq p$. Therefore, $v\notin \{u_i^j:1\leq i\leq n, 1\leq j\leq k-1\}$. 
This implies that $v \in \{ S_1, S_2, \dots, S_m \}$. Let $v = S_t$ for some $t$. If there exist $i\in\{1,2,\dots,k\}$ such that $S_t$ is not adjacent to $u_i^1$, then $d(S_t,u_i^{k-1})\geq k$. Therefore, 
$M \subseteq N_{k-1}[S_t] $ implies that  $S_t$ is adjacent to each vertex of the set  $\{u_1^1, u_2^1, \dots, u_k^1\}$.
Then 
$u_1, u_2, \dots, u_k \notin S_t.$
Therefore, $H$ is not a hitting set of $\mathcal{F}$. This is a contradiction. Hence, $M$ is a multipacking of size $k$.

 \textsf{(Only-if part)}  Suppose $G$ has a multipacking $M$ of size $k$. Let 
$H = \{ u_i : u_i^j \in M, 1\leq i\leq n, 1\leq j\leq k-1 \}.$
Then 
$|H| \leq |M| = k.$
Without loss of generality, let 
$H = \{ u_1, u_2, \dots, u_{k'} \} $ where $ k' \leq k$.
Now we want to show that $H$ is a hitting set of $\mathcal{F}$. Suppose $H$ is not a hitting set. Then there exists $t$ such that 
$S_t \cap H = \emptyset.$
Therefore, $S_t$ is adjacent to each vertex of the set $\{ u_1^{1}, u_2^{1}, \dots, u_{k'}^{1} \}$.
Then $M \subseteq N_{k-1}[S_t]$. This implies that 
$| N_{k-1}[S_t] \cap M | = k.$
This is a contradiction. Therefore, $H$ is a hitting set of size at most $k$. 
\end{claimproof}

\noindent Since the above reduction is a polynomial-time reduction, therefore the \textsc{Multipacking} problem  is \textsc{NP-complete}.  Moreover, the reduction is also an FPT reduction. Hence, the \textsc{Multipacking} problem  is \textsc{W[2]-hard} when parameterized by the solution size.
\end{proof}

\begin{corollary}\label{cor:NPC_chordal}
  \textsc{Multipacking} problem is \textsc{NP-complete} for chordal graphs. Moreover, \textsc{Multipacking} problem  is \textsc{W[2]-hard} for chordal graphs when parameterized by the solution size.
    
\end{corollary}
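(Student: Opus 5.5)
The corollary should follow with no new reduction: the graph $G$ constructed in the proof of Lemma~\ref{thm:NPC_chordal} is already chordal. Since that reduction is both polynomial-time and FPT (the parameter $k$ is preserved), it suffices to verify chordality of $G$. Membership in NP is inherited from the general case, because a multipacking can be checked in polynomial time by computing all distances and testing $|N_r[v]\cap M|\le r$ for every $v$ and every $r\le \mathrm{rad}(G)$.

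To show that $G$ is chordal, I plan to exhibit a perfect elimination ordering. First eliminate the path vertices from their far ends inward: $u_i^{k-1}, u_i^{k-2},\dots,u_i^{2}$ for every $i$. At each step the vertex being removed has degree at most $1$ in the remaining graph, so it is simplicial. Next remove each $u_i^1$. In what remains, the neighbourhood of $u_i^1$ is $\{S_j : u_i\notin S_j\}$, which is a subset of the clique $\mathcal{F}$. The $u_i^1$ are pairwise non-adjacent, so removing one of them does not change the neighbourhoods of the others, and each is simplicial at its turn. Finally only the clique $\mathcal{F}$ is left, and any ordering of a clique works. Hence $G$ has a perfect elimination ordering and is chordal.

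An equivalent route is to argue directly about cycles. Any cycle of length at least $4$ in $G$ cannot use a degree-$\le 2$ path vertex $u_i^j$ with $j\ge 2$, since those lie on pendant paths. So it alternates between $\mathcal{F}$-vertices and the independent set $\{u_i^1\}$, or stays inside $\mathcal{F}$. In a cycle of length $\ge 4$ there are then at least two non-consecutive $\mathcal{F}$-vertices, and they are adjacent, which gives a chord. I would present the elimination-ordering argument since it is cleaner.

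The only real care point is the edge case $k\le 2$ and degenerate paths; for $k=2$ the path is the single vertex $u_i^1$. I expect this to be the main, if minor, obstacle. Claim~\ref{thm:NPC_chordal}.1 is stated for $k\ge 2$, and small $k$ is trivial for hitting set, so the reduction restricted to $k\ge 2$ still yields NP-completeness and W[2]-hardness on chordal graphs.
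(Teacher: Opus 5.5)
Your proposal is correct and follows the paper: the corollary comes from observing that the graph $G$ built in the proof of Lemma~\ref{thm:NPC_chordal} is already chordal, and the paper checks this with exactly the chord argument you give as your alternative (a cycle of length at least $4$ contains two $\mathcal{F}$-vertices $c_i,c_{i+2}$, which are adjacent because $\mathcal{F}$ is a clique). Your perfect elimination ordering (pendant path vertices from the far end, then the pairwise non-adjacent $u_i^1$, whose remaining neighbourhoods lie in $\mathcal{F}$, then the clique $\mathcal{F}$) certifies the same fact equally well. It is slightly more explicit, because it disposes of the pendant paths and of the existence of a suitable pair $c_i,c_{i+2}$, both of which the paper's cycle argument leaves tacit.
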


\begin{proof}
    We prove that the graph $G$ used in the reduction in the proof of the Lemma \ref{thm:NPC_chordal} is chordal. Note that if $G$ has a cycle $C=c_0c_1c_2\dots c_{t-1}c_0$ of length at least $4$, then there exists an index $i$ such that $c_i,c_{i+2 \! \pmod{t}}\in \mathcal{F}$ because no two vertices of the set $\{ u_1^{1}, u_2^{1}, \dots, u_n^{1} \}$ are adjacent. Since $\mathcal{F}$ forms a clique, so $c_i$ and $c_{i+2 \! \pmod{t}}$ are endpoints of a chord. Therefore, the graph $G$ is chordal.
\end{proof}


It is known that, unless \textsc{ETH} fails, there is no $f(k)(m+n)^{o(k)}$-time algorithm for the \textsc{Hitting Set} problem, where $k$ is the solution size, $|U|=n$ and  $|\mathcal{F}|=m$ \cite{cygan2015parameterized}.

\begin{restatable}{lemma}{ETHmultipackingSubExp}\label{thm:ETH_multipacking_fkno(k)}
    Unless \textsc{ETH} fails, there is no $f(k)n^{o(k)}$-time algorithm for the \textsc{Multipacking} problem, where $k$ is the solution size and $n$ is the number of vertices of the graph.
\end{restatable}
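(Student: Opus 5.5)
We will reuse the reduction from the proof of Lemma~\ref{thm:NPC_chordal} and check that it preserves the parameter linearly and blows up the instance size only polynomially. The plan is a proof by contradiction. Suppose some algorithm $\mathcal{A}$ decides \textsc{Multipacking} in time $f(k)N^{o(k)}$, where $N$ is the number of vertices of the input graph. We will compose $\mathcal{A}$ with the reduction to obtain an $f'(k)(m+n)^{o(k)}$-time algorithm for \textsc{Hitting Set}. This contradicts the ETH-based lower bound for \textsc{Hitting Set} quoted above from~\cite{cygan2015parameterized}.

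\textbf{Step 1: size and parameter of the reduction.} Start from an instance $(U,\mathcal{F},k)$ with $|U|=n$ and $|\mathcal{F}|=m$. The constructed graph $G$ has $N = m + n(k-1)$ vertices. It can be built in time polynomial in $n+m$, namely $O((n+m)^2 + nk)$. The target parameter equals the source parameter $k$, by Claim~\ref{thm:NPC_chordal}.1.

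\textbf{Step 2: handling $k$ versus the instance size.} We may assume $k\le n$, since otherwise $U$ itself is a hitting set of size at most $k$ (if every set is nonempty), or the answer is trivial. We may also assume $k\ge 2$, since $k\le 1$ is solvable in polynomial time by brute force. Then $N \le m + n^2 \le (m+n)^2$.

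\textbf{Step 3: running time of the composition.} Running $\mathcal{A}$ on $(G,k)$ takes time
\[
f(k)\,N^{o(k)} \le f(k)\,\bigl((m+n)^{2}\bigr)^{o(k)} = f(k)\,(m+n)^{o(k)},
\]
because $2\cdot o(k)$ is still $o(k)$. Adding the polynomial reduction time gives $f(k)(m+n)^{o(k)}+\mathrm{poly}(n+m)$ overall. This is of the forbidden form, so ETH would fail.

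The main subtlety is Step 2. The vertex count carries a factor $k$, and we must ensure it does not spoil the $n^{o(k)}$ form. This is handled by the observation $k\le n$, so that $N$ is polynomial in $m+n$ with a fixed exponent.

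We must also be careful to treat the $o(k)$ in the exponent formally, as $g(k)$ with $g(k)/k\to 0$, where $g$ is a computable function as in the standard ETH framework. Then $2g(k)$ is again $o(k)$, so the composed algorithm stays within the excluded class.
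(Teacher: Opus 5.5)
Your proof is correct and follows the same route as the paper: you compose a hypothetical $f(k)N^{o(k)}$-time algorithm with the \textsc{Hitting Set} reduction of Lemma~\ref{thm:NPC_chordal}, bound $N=m+n(k-1)$ polynomially in $m+n$, and contradict the ETH lower bound for \textsc{Hitting Set}. Your explicit restriction to $2\le k\le n$ makes precise something the paper uses without stating it, namely in the bound $m+n(k-1)=O(m+n^2)$ and in padding a smaller hitting set to size exactly $k$ in the reduction.
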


\begin{proof}
   Let $|V(G)|=n_1$. From the construction of $G$ in the proof of Lemma \ref{thm:NPC_chordal}, we have $n_1=m+n(k-1)=O(m+n^2)$. Therefore, the construction of $G$ takes $O(m+n^2)$-time. Suppose that there exists an $f(k)n_1^{o(k)}$-time algorithm for the \textsc{Multipacking} problem. Therefore, we can solve the \textsc{Hitting Set} problem in $f(k)n_1^{o(k)}+O(m+n^2)=f(k)(m+n^2)^{o(k)}=f(k)(m+n)^{o(k)}$ time. This is a contradiction, since there is no $f(k)(m+n)^{o(k)}$-time algorithm for the \textsc{Hitting Set} problem, assuming \textsc{ETH}.  
\end{proof}

\section{Hardness results on several subclasses (Proof of Theorem~\ref{thm:Multipacking_hardness_subclass})} \label{sec:hardness_on_subclasses}

In this section, we present hardness results on some subclasses.

\subsection{Chordal $\cap \text{ } \frac{1}{2}$-hyperbolic graphs}

In Corollary \ref{cor:NPC_chordal},  we have already shown that, for chordal graphs, the \textsc{Multipacking} problem is \textsc{NP-complete} and the problem is \textsc{W[2]-hard} when parameterized by the solution size. Here we strengthen this result. It is known that the \textsc{Multipacking} problem is solvable in cubic time for strongly chordal graphs. We show the hardness results for chordal $\cap \text{ } \frac{1}{2}$-hyperbolic graphs which is a superclass of strongly chordal graphs. To show this, we need the following result.

\begin{theorem}[\cite{brinkmann2001hyperbolicity}]\label{thm:chordal_one_hyperbolic}
 If $G$ is a chordal graph, then the hyperbolicity  $\delta(G)\leq 1$. Moreover, $\delta(G)=1$  if and only if  $G$ contains one of the graphs in Fig. \ref{fig:chordal_half_hyperbolic_forbidden_graphs} as an isometric subgraph.  Equivalently, a chordal graph $G$ is $\frac{1}{2}$-hyperbolic (i.e., $\delta(G)\leq \frac{1}{2}$) if and only if $G$ does not contain any of the graphs in Fig.~\ref{fig:chordal_half_hyperbolic_forbidden_graphs} as an isometric subgraph.
\end{theorem}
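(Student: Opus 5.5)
This is the cited characterization of Brinkmann, Koolen and Moulton~\cite{brinkmann2001hyperbolicity}, and I would reprove it along their lines, in three steps.

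\textbf{Upper bound $\delta(G)\le 1$.} Fix four vertices $u,v,x,y$. Assume the largest of the three pair sums is $d(u,v)+d(x,y)$. I would compare it with the second largest sum by taking shortest paths $P_{ux},P_{xv},P_{vy},P_{yu}$, which together form a closed walk. The standard chordal tool is the following fact: on a shortest path, any vertex is adjacent to at most three consecutive vertices of another shortest path it lies near, because chordal graphs have no induced $C_k$ with $k\ge4$. I would take a minimal closed cycle built from pieces of these geodesics and triangulate it chordally. From that triangulation I would derive the following distance comparison: the sum $d(u,v)+d(x,y)$ exceeds $\max\{d(u,x)+d(v,y),\,d(u,y)+d(v,x)\}$ by at most $2$. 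An alternative route is to use the known fact that chordal graphs have tree-length at most $1$, i.e.\ clique-tree bags of diameter $\le1$, together with the four-point bound $\delta\le$ tree-length. Both routes give $\delta\le1$.

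\textbf{Characterization of $\delta(G)=1$.} Since distances are integers, $\delta$ is a half-integer, so $\delta(G)=1$ exactly when some quadruple has gap exactly $2$. One direction is easy. If $G$ contains a graph $H$ from Fig.~\ref{fig:chordal_half_hyperbolic_forbidden_graphs} as an isometric subgraph, then the distances in $H$ equal those in $G$, and I would verify on $H$ directly that some quadruple has gap $2$. For the converse, I would take a quadruple with gap $2$ that minimizes $d(u,v)+d(x,y)$. Minimality lets me shorten paths: replacing an endpoint by a neighbour on a geodesic toward the others keeps the gap at $2$ unless the configuration is already tight. This forces all pairwise distances into a bounded range, say at most $3$. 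The union of the chosen geodesics is then a small chordal graph, and by minimality it is isometric in $G$. A finite case analysis of the possible chordal completions of this bounded configuration then shows it must be one of the graphs in the figure.

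\textbf{Main obstacle.} The hard part is the converse case analysis. One must show that the minimal configuration is isometric and enumerate it exhaustively, so that every extra chord either lowers the gap or produces a listed graph. I would organize this by the distance vector of the quadruple and handle each case separately. The final equivalence, that a chordal $G$ is $\frac12$-hyperbolic iff it has no isometric copy from the figure, then follows immediately from the first two steps and half-integrality.
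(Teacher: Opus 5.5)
The paper gives no proof of this statement; it is imported from \cite{brinkmann2001hyperbolicity}, so your outline has to stand on its own. Your second route to $\delta(G)\le 1$ works and can be made self-contained. For any four vertices, some bag $C$ of a clique tree lies on all tree paths between two pairs of them, so $C$ separates the pairs. Each of the four cross distances then lies between $d(a,C)+d(b,C)$ and $d(a,C)+d(b,C)+1$, and the within-pair sum is at most $\Sigma+2$, where $\Sigma$ is the sum of the four distances to $C$. Hence the largest sum is at most $\Sigma+2$ and the second largest is at least $\Sigma$. Your first route is not an argument: the ``three consecutive vertices'' fact holds for geodesics in every graph, and the comparison you need is only asserted. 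The half-integrality remark and the easy direction of the characterization are fine.

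The genuine gap is the ``only if'' direction, at the sentence you treat as routine: ``minimality lets me shorten paths \dots\ unless the configuration is already tight; this forces all pairwise distances into a bounded range.''
\begin{itemize}
\item Replacing $u$ by a neighbour $u'$ on a $u$--$v$ geodesic lowers $d(u,v)+d(x,y)$ by one. The gap survives only if every other sum equal to $S_1-2$ also drops, so $u'$ must lie at the same time on a geodesic towards $x$ and/or $y$. Nothing guarantees such a neighbour exists.
\item The configurations with no gap-preserving move are exactly the ones you must classify. You give no reason why they have bounded diameter, and you never use chordality in this step, although it is the only thing that can make it true.
\item ``By minimality it is isometric in $G$'' is also unsupported. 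Minimality of $d(u,v)+d(x,y)$ over quadruples says nothing about distances between inner vertices of different geodesics.
\end{itemize}

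You have also misplaced the difficulty: the bounded case is short. In a chordal graph, the neighbours of $w$ lying one step closer to a vertex $z$ form a clique (a minimal clique separator argument). Using this and the absence of induced cycles of length at least $4$, one checks the following.
\begin{itemize}
\item No gap-$2$ quadruple has $d(u,v)+d(x,y)\le 5$.
\item If $d(u,v)+d(x,y)=6$, then $d(u,v)=d(x,y)=3$ and all four cross distances equal $2$.
\item In that case $u,x,v,y$, plus one common neighbour of each cross pair, induce an isometric $4$-cycle with one or two chords and a triangle glued on each edge. That is one of the graphs in Fig.~\ref{fig:chordal_half_hyperbolic_forbidden_graphs}.
\end{itemize}
What is missing is the reduction showing that every gap-$2$ quadruple of a chordal graph can be traded for one with $d(u,v)+d(x,y)=6$. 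With the clique $C$ above, gap $2$ forces $S_1=\Sigma+2$ and the other two sums to be at most $\Sigma$. One natural route is to pull the four vertices to distance one from some clique while preserving this. However, the inequalities on the two smaller sums need not survive such moves, so this step needs a genuine argument.
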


\begin{figure}[ht]
    \centering
    \includegraphics[height=4cm]{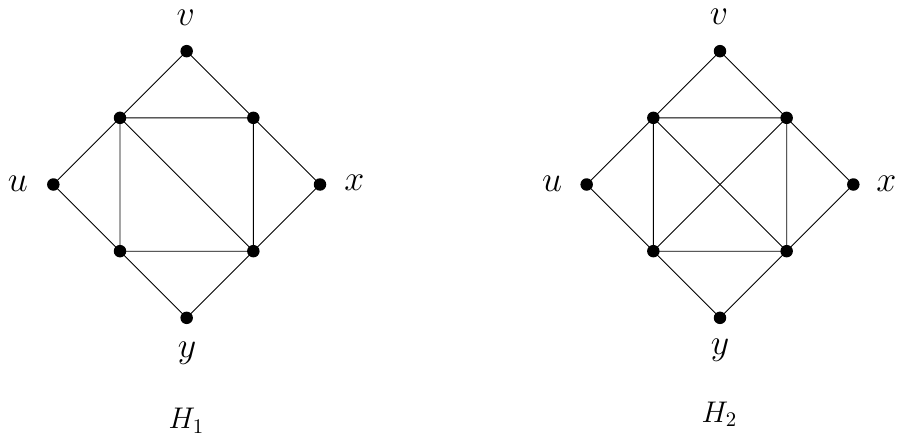}
    \caption{Forbidden isometric subgraphs for chordal $\cap$ $\frac{1}{2}$-hyperbolic graphs} 
    \label{fig:chordal_half_hyperbolic_forbidden_graphs}
\end{figure}



\begin{restatable}{lemma}{NPChalfhyperbolicchordal}\label{thm:NPC_half_hyperbolic_chordal}
  \textsc{Multipacking} problem  is \textsc{NP-complete} for chordal $\cap$ $\frac{1}{2}$-hyperbolic graphs. Moreover, \textsc{Multipacking} problem  is \textsc{W[2]-hard} for chordal $\cap$ $\frac{1}{2}$-hyperbolic graphs when parameterized by the solution size.
\end{restatable}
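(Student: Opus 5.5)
The plan is to modify the construction from the proof of Lemma~\ref{thm:NPC_chordal} so that the graph becomes $\frac{1}{2}$-hyperbolic while staying chordal, and then to re-prove the analogue of Claim~\ref{thm:NPC_chordal}.1. I have not resolved which modification works; the difficulty is identified below.

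\textbf{Step 1: reduce hyperbolicity to a single block.} The four-point hyperbolicity of a graph is the maximum over its biconnected components. In $G$ every path vertex $u_i^j$ with $j\ge 2$ is a cut vertex or a leaf, so the only non-trivial block is $B=\mathcal{F}\cup\{u_1^1,\dots,u_n^1\}$. Chordality holds for the same reason as in Corollary~\ref{cor:NPC_chordal}. It therefore suffices to control $\delta(B)$.

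\textbf{Step 2: the unmodified block can have $\delta=1$.} The graph $B$ is a split graph: $\mathcal{F}$ is a clique and $U^1$ is independent. Distances in $B$ are $1$ inside $\mathcal{F}$, $1$ or $2$ between $U^1$ and $\mathcal{F}$, and $2$ or $3$ inside $U^1$. Four vertices $u_1,u_2,u_3,u_4\in U^1$ with $d(u_1,u_2)=d(u_3,u_4)=3$ and all cross distances $2$ give $\delta=1$. Such a configuration is easy to realise from a \textsc{Hitting Set} instance, so $\delta(B)$ can equal $1$ and the construction must be changed.

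\textbf{Step 3: the target property and how to certify it.} I would aim for a block of diameter at most $2$ that is still chordal. Such a block is automatically $\frac12$-hyperbolic. If all pairwise distances lie in $\{1,2\}$, then $\delta(x,y,z,w)=1$ forces the three pair-sums to be $4,2,2$. That means $d(x,y)=d(z,w)=2$ with all four cross distances equal to $1$, which is an induced $C_4$ on $x,z,y,w$, impossible in a chordal graph. Pendant paths attached at single vertices do not increase $\delta$ (by Step 1), so the whole graph would be chordal $\cap$ $\frac12$-hyperbolic. As an independent check, Theorem~\ref{thm:chordal_one_hyperbolic} lets one verify that none of the forbidden graphs of Fig.~\ref{fig:chordal_half_hyperbolic_forbidden_graphs} occurs as an isometric subgraph.

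\textbf{Step 4: the gadget, which is the main obstacle.} Bringing the $U^1$–$U^1$ distances down to $2$ naively breaks the reduction.
\begin{itemize}
  \item A new vertex $w$ adjacent to all of $U^1$ satisfies $d(w,u_i^{k-1})=k-1$ for every $i$. Then $N_{k-1}[w]$ contains every candidate multipacking, and the ``if'' direction fails.
  \item Making $U^1$ a clique creates induced $C_4$'s of the form $u_a\,S_x\,S_y\,u_b$ and destroys chordality.
\end{itemize}
My first attempt would be to let the hub $w$ touch $U^1$ only through a carefully offset layer. Concretely, I would lengthen each pendant path by a constant $c$ and shift the critical radius to $k-1+c$. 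The hub must then be placed so that every $u_i^{\mathrm{end}}$ is at distance exactly $k+c$ from it. Meanwhile $S_t$ must stay at distance $k-1+c$ from $u_i^{\mathrm{end}}$ precisely when $u_i\notin S_t$. The natural way to get this asymmetry is to route the diameter-reducing shortcuts through $\mathcal{F}$ rather than directly to $U^1$. If no placement achieves the required offset, the fallback is to add dummy sets or elements to the \textsc{Hitting Set} instance that absorb one extra unit of the radius-$(k-1)$ ball around the hub.

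\textbf{Step 5: correctness and parameter.} Once a gadget passes the distance checks of Steps 3 and 4, the two directions follow the proof of Claim~\ref{thm:NPC_chordal}.1 essentially verbatim. The endpoints $u_i^{\mathrm{end}}$ for $u_i$ in the hitting set remain pairwise far apart, and the only balls that can capture $k$ of them are those centred at some $S_t$ missed by the hitting set. The parameter changes only additively, so the reduction remains both polynomial and FPT, which gives \textsc{NP}-completeness and \textsc{W[2]}-hardness as claimed.
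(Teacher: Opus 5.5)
Your proposal has a genuine gap. Step 4, which you flag as unresolved, is the entire content of the lemma, and without a concrete gadget Steps 3 and 5 are only conditional.

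Moreover, the offset idea you sketch cannot work within your own diameter-$2$ target. Any vertex $x$ that brings $d(u_i^1,u_j^1)$ down to $2$ must be a common neighbour of $u_i^1$ and $u_j^1$. It is then at exactly the same distance from the endpoints of paths $i$ and $j$ as a set vertex $S_t$ adjacent to $u_i^1$ is from the endpoint of path $i$. So lengthening all paths by $c$ and shifting the critical radius never separates hubs from set vertices. Conversely, a hub at distance $k+c$ from every endpoint is at distance $2$ from every $u_i^1$, and so shortens nothing. What must be controlled is not a distance but the \emph{number} of vertices $u_i^1$ each hub sees. The radius-$(k-1)$ ball around a hub whose neighbourhood in $\{u_1^1,\dots,u_n^1\}$ is $A$ contains the $|A|$ corresponding endpoints, so one needs $|A|<k$.

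The missing idea, which is exactly what the paper does, is to use one hub per pair. Add $Y=\{y_{i,j}:1\le i<j\le n\}$, with $y_{i,j}$ adjacent to exactly $u_i^1$ and $u_j^1$, and make $\mathcal{F}\cup Y$ a clique. Equivalently, add the dummy sets $U\setminus\{u_i,u_j\}$ to $\mathcal{F}$. Every hitting set of size at least $3$ hits them, so the instance is unchanged for $k\ge 3$. Your ``dummy sets'' fallback was close, but the dummies should shorten distances between the $u_i^1$'s, not absorb the ball around a single hub.

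With this gadget the block is still a split graph (clique $\mathcal{F}\cup Y$, independent set $\{u_i^1\}$), hence chordal. It has diameter at most $2$, via the paths $u_i^1\,y_{i,j}\,u_j^1$ and $S_t\,y_{i,j}\,u_i^1$. Your Step 3 four-point argument then gives $\delta\le\frac12$. The paper instead uses the Brinkmann--Koolen--Moulton characterisation: the forbidden graphs have diameter $3$ and all their vertices lie on cycles, so they cannot sit isometrically in the diameter-$2$ block. Your route is more elementary, at the cost of citing that hyperbolicity is the maximum over biconnected components.

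Correctness then needs no path lengthening and no change of parameter:
\begin{itemize}
\item $d(u_i^{k-1},u_j^{k-1})=2k-2$, which handles all radii $r<k-1$.
\item $d(y_{p,q},u_i^{k-1})$ equals $k-1$ only for $i\in\{p,q\}$ and equals $k$ otherwise. So each such ball contains at most $2<k$ endpoints.
\item $d(S_t,u_i^1)$ is $1$ if $u_i\notin S_t$ and $2$ otherwise.
\end{itemize}
Hence the $S_t$ case analysis of the ``if'' direction, and the whole ``only if'' direction, go through verbatim.
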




\begin{proof} We reduce the \textsc{Hitting Set} problem to the \textsc{Multipacking} problem  to show that the \textsc{Multipacking} problem  is \textsc{NP-complete} for chordal $\cap$ $\frac{1}{2}$-hyperbolic graphs. 

Let $U=\{u_1,u_2,\dots,u_n\}$ and  $\mathcal{F}=\{S_1,S_2,\dots,S_m\}$ where $S_i\subseteq U$ for each $i$. We construct a graph $G$ in the following way: (i) Include each element of $\mathcal{F}$ in the vertex set of $G$. (ii) Include $k-2$ length path $u_i = u_i^1 u_i^2 \dots u_i^{k-1}$  in $G$,  for each element $u_i\in U$.
(iii) Join $u_i^1$ and $S_j$ by an edge iff $u_i \notin S_j$. (iv) Include the vertex set  $Y=\{y_{i,j}:1\leq i<j\leq n\}$ in $G$. (v) For each $i,j$ where $1\leq i<j\leq n$, $y_{i,j}$ is adjacent to $u_i^1$ and $u_j^1$. (vi) $\mathcal{F}\cup Y$ forms a clique in $G$. Fig. \ref{fig:reduction_chordal_half_hyperbolic} gives an illustration.  

\begin{figure}[ht]
    \centering
   \includegraphics[width=0.9\textwidth]{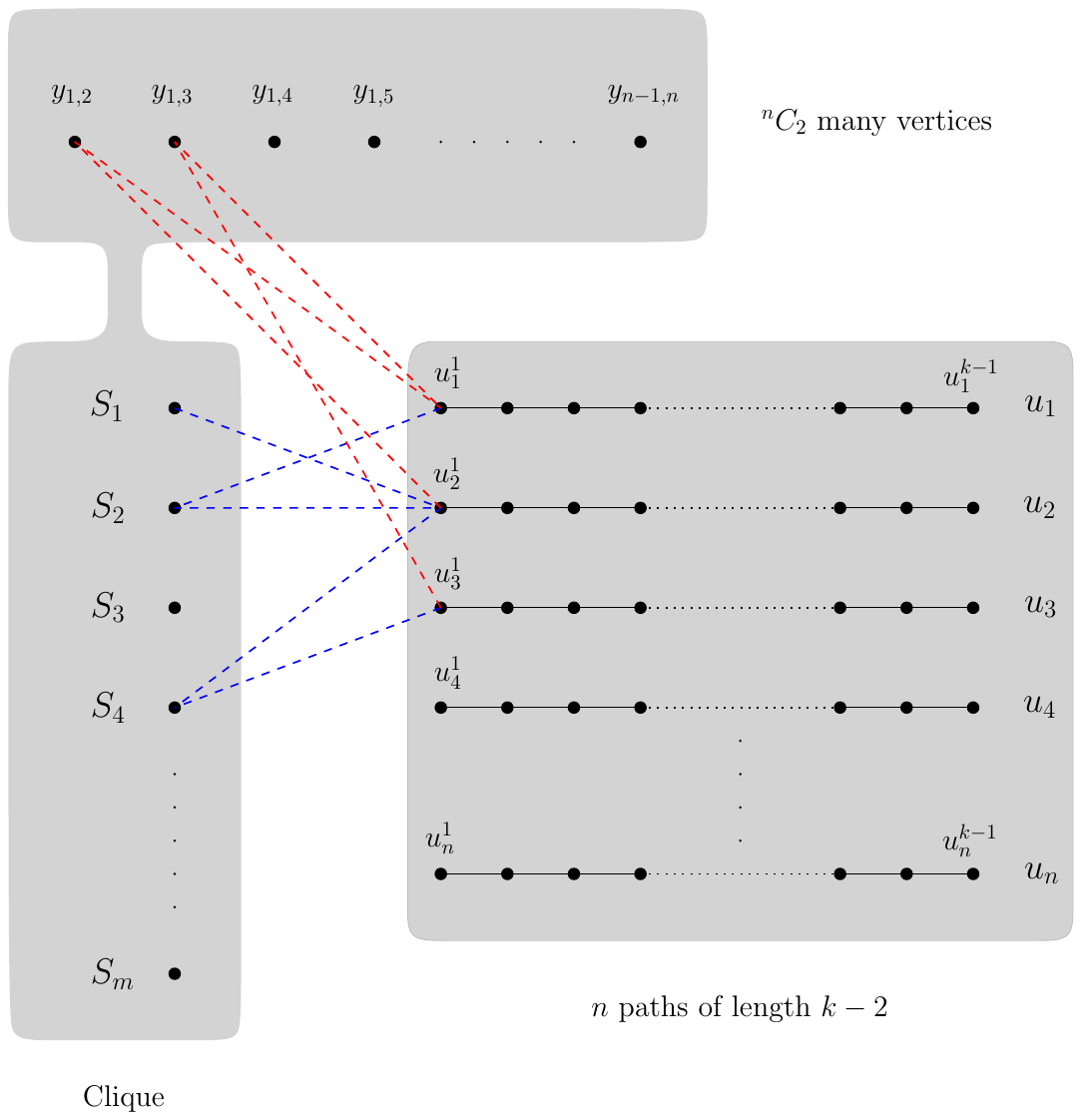}
    \caption{An illustration of the construction of the graph $G$ used in the proof of Lemma~\ref{thm:NPC_half_hyperbolic_chordal}, demonstrating the hardness of the \textsc{Multipacking} problem for chordal $\cap$ $\frac{1}{2}$-hyperbolic graphs.}
    \label{fig:reduction_chordal_half_hyperbolic}
\end{figure}

\vspace{0.3cm}
\noindent
\textbf{Claim \ref{thm:NPC_half_hyperbolic_chordal}.1. } $G$ is a chordal graph.

\begin{claimproof} If $G$ has a cycle $C=c_0c_1c_2\dots c_{t-1}c_0$ of length at least $4$, then there exists an index $i$ such that $c_i,c_{i+2 \! \pmod{t}}\in \mathcal{F}\cup Y$ because no two vertices of the set $\{ u_1^{1}, u_2^{1}, \dots, u_n^{1} \}$ are adjacent. Since $\mathcal{F}\cup Y$ forms a clique, $c_i$ and $c_{i+2 \! \pmod{t}}$ are endpoints of a chord. Therefore, the graph $G$ is chordal.    
\end{claimproof}

\vspace{0.3cm}
\noindent
\textbf{Claim \ref{thm:NPC_half_hyperbolic_chordal}.2. } $G$ is a  $\frac{1}{2}$-hyperbolic graph.

\begin{claimproof} Let $G'$ be the induced subgraph of $G$ on the vertex set $\mathcal{F}\cup Y\cup \{ u_1^{1}, u_2^{1}, \dots, u_n^{1} \}$. We want to show that $\mathrm{diam}(G')\leq 2$. Note that the distance between any two vertices of the set $\mathcal{F}\cup Y$ is $1$, since $\mathcal{F}\cup Y$ forms a clique in $G$. Furthermore, the distance between any two vertices of the set $\{ u_1^{1}, u_2^{1}, \dots, u_n^{1} \}$ is $2$, since they are non-adjacent and have a common neighbor in the set $Y$. Suppose $v\in \mathcal{F}\cup Y$ and $u_i^1\in \{ u_1^{1}, u_2^{1}, \dots, u_n^{1} \}$ where $v$ and $u_i^1$ are non-adjacent. In that case, $u_iy_{1,i}v$ is a $2$-length path that joins $v$ and $u_i^1$ since $\mathcal{F}\cup Y$ forms a clique in $G$. Hence, $\mathrm{diam}(G')\leq 2$. By Claim \ref{thm:NPC_half_hyperbolic_chordal}.1 we have that $G$ is chordal. From Theorem \ref{thm:chordal_one_hyperbolic}, we can say that the hyperbolicity of $G$ is at most $1$. Suppose $G$ has hyperbolicity exactly $1$. In that case, from Theorem \ref{thm:chordal_one_hyperbolic}, we can say that $G$  contains at least one of the graphs $H_1$ or $H_2$  (Fig. \ref{fig:chordal_half_hyperbolic_forbidden_graphs}) as 
 isometric subgraphs. Note that every vertex of $H_1$ and $H_2$ is a vertex of a cycle. Therefore, $H_1$ or $H_2$ does not have a vertex from the set $\{u_i^j:1\leq i\leq n, 2\leq j\leq k-1\}$. This implies that $H_1$ or $H_2$ are isometric subgraphs of $G'$. But both $H_1$ and $H_2$ have diameter $3$, while $\mathrm{diam}(G')\leq 2$. This is a contradiction. Hence, $G$ is a  $\frac{1}{2}$-hyperbolic graph.   
\end{claimproof}

\vspace{0.3cm}
\noindent
\textbf{Claim \ref{thm:NPC_half_hyperbolic_chordal}.3. } $\mathcal{F}$ has a hitting set of size at most $k$ iff $G$ has a multipacking of size at least $k$, for $k\geq 3$.

\begin{claimproof} \textsf{(If part)} Suppose $\mathcal{F}$ has a hitting set $H = \{ u_1, u_2, \dots, u_k \}$. 
We want to show that $M =\{ u_1^{k-1}, u_2^{k-1}, \dots, u_k^{k-1} \}$ is a multipacking in $G$. If $r < k-1$, then 
$| N_{r}[v] \cap M |\leq 1 \leq r$, $ \forall v \in V(G)$, since $d(u_i^{k-1}, u_j^{k-1})\geq 2k-2$, $\forall i\neq j$.
If $r > k-1$, then 
$| N_{r}[v] \cap M | \leq | M | = k \leq r$, $\forall v \in V(G)$. Suppose $r = k-1$. If $M$ is not a multipacking,  there exists some $v \in V(G)$ such that 
$| N_{k-1}[v] \cap M | = k.$
Therefore, 
$M \subseteq N_{k-1}[v].$ Suppose $v=u_p^q$ for some $p$ and $q$. In this case, $d(u_p^q,u_i^{k-1})\geq k$, $\forall i\neq p$. Therefore, $v\notin \{u_i^j:1\leq i\leq n, 1\leq j\leq k-1\}$. This implies that $v \in \mathcal{F}\cup Y$. Suppose $v \in Y$. Let $v=y_{p,q}$ for some $p$ and $q$. Then $d(y_{p,q},u_i^{k-1})\geq k$, $\forall i\in \{1,2,\dots,n\}\setminus\{p,q\} $. Therefore, $v\notin Y$. This implies that $v \in \mathcal{F}$. Let $v = S_t$ for some $t$. If there exist $i\in\{1,2,\dots,k\}$ such that $S_t$ is not adjacent to $u_i^1$, then $d(S_t,u_i^{k-1})\geq k$. Therefore, 
$M \subseteq N_{k-1}[S_t] $ implies that  $S_t$ is adjacent to each vertex of the set  $\{u_1^1, u_2^1, \dots, u_k^1\}$.
Then 
$u_1, u_2, \dots, u_k \notin S_t.$
Therefore, $H$ is not a hitting set of $\mathcal{F}$. This is a contradiction. Hence, $M$ is a multipacking of size $k$.

 \textsf{(Only-if part)}  Suppose $G$ has a multipacking $M$ of size $k$. Let 
$H = \{ u_i : u_i^j \in M, 1\leq i\leq n, 1\leq j\leq k-1 \}.$
Then 
$|H| \leq |M| = k.$
Without loss of generality, let 
$H = \{ u_1, u_2, \dots, u_{k'} \} $ where $ k' \leq k$.
Now we want to show that $H$ is a hitting set of $\mathcal{F}$. Suppose $H$ is not a hitting set. Then there exists $t$ such that 
$S_t \cap H = \emptyset.$
Therefore, $S_t$ is adjacent to each vertex of the set $\{ u_1^{1}, u_2^{1}, \dots, u_{k'}^{1} \}$.
Then $M \subseteq N_{k-1}[S_t]$. This implies that 
$| N_{k-1}[S_t] \cap M | = k.$
This is a contradiction. Therefore, $H$ is a hitting set of size at most $k$.
\end{claimproof}

\noindent Since the above reduction is a polynomial-time reduction, therefore the \textsc{Multipacking} problem  is \textsc{NP-complete} for chordal $\cap$ $\frac{1}{2}$-hyperbolic graphs.  Moreover, the reduction is also an FPT reduction. Hence, the \textsc{Multipacking} problem  is \textsc{W[2]-hard} for chordal $\cap$ $\frac{1}{2}$-hyperbolic graphs when parameterized by the solution size.
\end{proof}

\subsection{Bipartite graphs}

Here, we discuss the hardness of the \textsc{Multipacking} problem for bipartite graphs.

\begin{restatable}{lemma}{NPCbipartite}\label{thm:NPC_bipartite}
  \textsc{Multipacking} problem  is \textsc{NP-complete} for bipartite graphs. Moreover, \textsc{Multipacking} problem  is \textsc{W[2]-hard} for bipartite graphs when parameterized by the solution size.
\end{restatable}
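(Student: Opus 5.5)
We would reuse the construction of Lemma~\ref{thm:NPC_chordal}, changing only the part that breaks bipartiteness: the clique on $\mathcal{F}$. Let $(U,\mathcal{F},k)$ be an instance of \textsc{Hitting Set}. First we preprocess. If some $u_i$ lies in every $S_j$, then $\{u_i\}$ is a hitting set, so we output a fixed yes-instance. We may also assume $\mathcal{F}\neq\emptyset$ and $k\geq 3$, since smaller $k$ can be solved by brute force in polynomial time.

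The graph $G$ is built as follows.
\begin{enumerate}[label=(\roman*)]
\item Take the set-vertices $S_1,\dots,S_m$ as an independent set.
\item Add one hub vertex $z$ adjacent to every $S_j$.
\item For each $u_i\in U$ add the path $u_i^1u_i^2\dots u_i^{k-1}$.
\item Join $u_i^1$ and $S_j$ iff $u_i\notin S_j$.
\end{enumerate}
By the preprocessing, every $u_i^1$ has some neighbour $S_j$, so $G$ is connected. It is bipartite with parts $A=\{S_j\}\cup\{u_i^{q}: q \text{ even}\}$ and $B=\{z\}\cup\{u_i^{q}: q \text{ odd}\}$. The construction is polynomial and keeps the parameter equal to $k$, so it is an FPT reduction. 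This yields both \textsc{NP}-completeness and \textsc{W[2]}-hardness once the equivalence below is shown.

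The key distance facts come next. Any two distinct $u_i^1,u_j^1$ are non-adjacent, so $d(u_i^1,u_j^1)\geq 2$ and hence $d(u_i^{k-1},u_j^{k-1})\geq 2k-2$. The hub satisfies $d(z,u_i^1)=2$, so $d(z,u_i^{k-1})=k$. For a set-vertex, $d(S_t,u_i^{k-1})=k-1$ if $u_i\notin S_t$. If $u_i\in S_t$, bipartiteness gives $d(S_t,u_i^1)\geq 3$, so $d(S_t,u_i^{k-1})\geq k+1$. A path vertex $u_p^q$ satisfies $d(u_p^q,u_i^{k-1})\geq (q-1)+2+(k-2)\geq k$ for all $i\neq p$.

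For the ``if'' direction, take a hitting set $H=\{u_1,\dots,u_k\}$ (padding it if it is smaller) and set $M=\{u_1^{k-1},\dots,u_k^{k-1}\}$. We argue exactly as in Claim~\ref{thm:NPC_chordal}.1. For $r<k-1$ the pairwise distance $\geq 2k-2$ gives $|N_r[v]\cap M|\leq 1$. For $r>k-1$ we have $|M|=k\leq r$. For $r=k-1$, suppose some ball $N_{k-1}[v]$ contains all of $M$. The distance facts rule out $v$ being a path vertex or $z$. So $v=S_t$, and then $S_t$ is adjacent to all of $u_1^1,\dots,u_k^1$, meaning $S_t$ misses $H$, a contradiction.

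For the ``only-if'' direction, let $M$ be a multipacking with $|M|=k$. Let $H$ be the set of elements $u_i$ whose path contains a vertex of $M$; then $|H|\leq k$. Suppose some $S_t$ misses $H$. Every $M$-vertex on a path then lies on the path of an element not in $S_t$, so it is within distance $k-1$ of $S_t$. The remaining $M$-vertices lie in $\mathcal{F}\cup\{z\}$, which is within distance $2\leq k-1$ of $S_t$ because $k\geq 3$. Hence $|N_{k-1}[S_t]\cap M|=k>k-1$, a contradiction, so $H$ is a hitting set.

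The main obstacle is connectivity without shortcuts. Once the clique is removed, the natural ways to reconnect the graph fail. For example, a vertex adjacent to all $u_i^1$ would have every endpoint $u_i^{k-1}$ within distance $k-1$ and would destroy the ``if'' direction. The hub placed on the set side, together with the preprocessing that guarantees each $u_i^1$ has a set-neighbour, is what avoids this. It is also where the condition $k\geq 3$ enters the ``only-if'' argument.
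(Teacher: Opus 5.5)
Your proposal is correct and is essentially the paper's proof: it uses the same graph (independent set-vertices, a hub $C$ adjacent to all of them, pendant paths $u_i^1\dots u_i^{k-1}$, and edges $u_i^1S_j$ iff $u_i\notin S_j$), the same bipartition, and the same two-direction argument with $M=\{u_1^{k-1},\dots,u_k^{k-1}\}$. Your restriction to $k\geq 3$, with explicit treatment of $M$-vertices in $\mathcal{F}\cup\{z\}$, is a genuine small improvement: the paper states the equivalence for $k\geq 2$, and its only-if step silently assumes $M\subseteq N_{k-1}[S_t]$, which fails for $k=2$ (for $u_i\in S_t$, $\{S_t,u_i^1\}$ is always a multipacking of size $2$, whether or not a hitting set of size $2$ exists); the only remaining implicit assumption, shared with the paper, is $k\leq |U|$, needed so that you can pad the hitting set.
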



\begin{proof}

We reduce the \textsc{Hitting Set} problem to the \textsc{Multipacking} problem  to show that the \textsc{Multipacking} problem  is \textsc{NP-complete} for bipartite graphs. 

Let $U=\{u_1,u_2,\dots,u_n\}$ and  $\mathcal{F}=\{S_1,S_2,\dots,S_m\}$ where $S_i\subseteq U$ for each $i$. We construct a graph $G$ in the following way: (i)  Include each element of $\mathcal{F}$ in the vertex set of $G$. (ii) Include a vertex $C$ in $G$ where all the vertices of $\mathcal{F}$ are adjacent to the vertex $C$. (iii) Include $k-2$ length path $u_i = u_i^1 u_i^2 \dots u_i^{k-1}$  in $G$,  for each element $u_i\in U$.
(iv) Join $u_i^1$ and $S_j$ by an edge iff $u_i \notin S_j$.  Fig. \ref{fig:reduction_bipartite} gives an illustration. 

Note that $G$ is a bipartite graph where the partite sets are $B_1=\mathcal{F}\cup \{u_i^j:1\leq i\leq n \text{ and } j\text{ is even} \}$ and $B_2=\{C\}\cup \{u_i^j:1\leq i\leq n \text{ and } j\text{ is odd} \}$. 

\begin{figure}[ht]
    \centering
   \includegraphics[width=\textwidth]{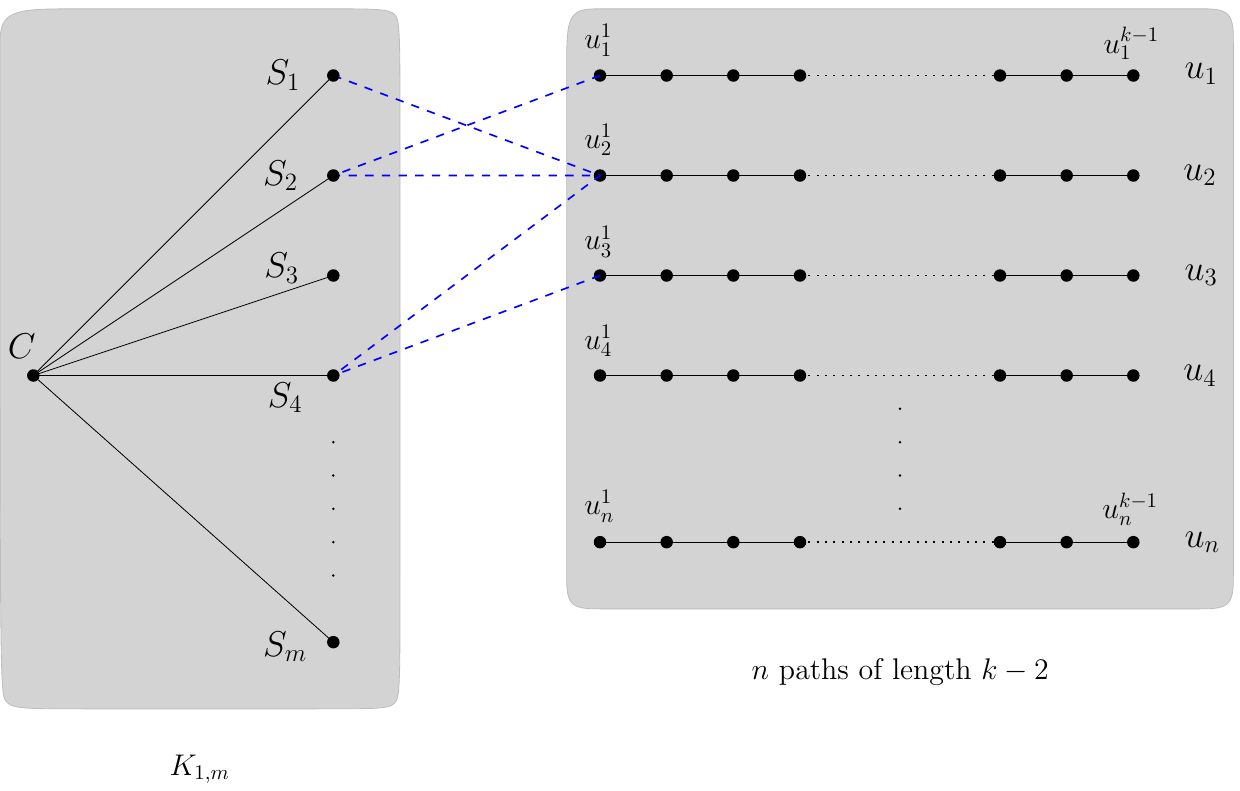}
    \caption{An illustration of the construction of the graph $G$ used in the proof of Lemma~\ref{thm:NPC_bipartite}, demonstrating the hardness of the \textsc{Multipacking} problem for bipartite graphs.}
    \label{fig:reduction_bipartite}
\end{figure}

\vspace{0.3cm}
\noindent
\textbf{Claim \ref{thm:NPC_bipartite}.1. } $\mathcal{F}$ has a hitting set of size at most $k$ iff $G$ has a multipacking of size at least $k$, for $k\geq 2$.

\begin{claimproof} \textsf{(If part)} Suppose $\mathcal{F}$ has a hitting set $H = \{ u_1, u_2, \dots, u_k \}$. 
We want to show that $M =\{ u_1^{k-1}, u_2^{k-1}, \dots, u_k^{k-1} \}$ is a multipacking in $G$. If $r < k-1$, then 
$| N_{r}[v] \cap M |\leq 1 \leq r$, $ \forall v \in V(G)$, since $d(u_i^{k-1}, u_j^{k-1})\geq 2k-2$, $\forall i\neq j$.
If $r > k-1$, then 
$| N_{r}[v] \cap M | \leq | M | = k \leq r$, $\forall v \in V(G)$. Suppose $r = k-1$. If $M$ is not a multipacking,  there exists some $v \in V(G)$ such that 
$| N_{k-1}[v] \cap M | = k.$
Therefore, 
$M \subseteq N_{k-1}[v].$ Suppose $v=u_p^q$ for some $p$ and $q$. In this case, $d(u_p^q,u_i^{k-1})\geq k$, $\forall i\neq p$. Therefore, $v\notin \{u_i^j:1\leq i\leq n, 1\leq j\leq k-1\}$. Moreover, $v\neq C$, since $d(C,u_i^{k-1})\geq k$, $\forall 1\leq i\leq n$. 
This implies that $v \in \{ S_1, S_2, \dots, S_m \}$. Let $v = S_t$ for some $t$.  If there exist $i\in\{1,2,\dots,k\}$ such that $S_t$ is not adjacent to $u_i^1$, then $d(S_t,u_i^{k-1})\geq k$. Therefore,  
$M \subseteq N_{k-1}[S_t] $ implies that  $S_t$ is adjacent to each vertex of the set  $\{u_1^1, u_2^1, \dots, u_k^1\}$.
Then 
$u_1, u_2, \dots, u_k \notin S_t.$
Therefore, $H$ is not a hitting set of $\mathcal{F}$. This is a contradiction. Hence, $M$ is a multipacking of size $k$.

 \textsf{(Only-if part)}  Suppose $G$ has a multipacking $M$ of size $k$. Let 
$H = \{ u_i : u_i^j \in M, 1\leq i\leq n, 1\leq j\leq k-1 \}.$
Then 
$|H| \leq |M| = k.$
Without loss of generality, let 
$H = \{ u_1, u_2, \dots, u_{k'} \} $ where $ k' \leq k$.
Now we want to show that $H$ is a hitting set of $\mathcal{F}$. Suppose $H$ is not a hitting set. Then there exists $t$ such that 
$S_t \cap H = \emptyset.$
Therefore, $S_t$ is adjacent to each vertex of the set $\{ u_1^{1}, u_2^{1}, \dots, u_{k'}^{1} \}$.
Then $M \subseteq N_{k-1}[S_t]$. This implies that 
$| N_{k-1}[S_t] \cap M | = k.$
This is a contradiction. Therefore, $H$ is a hitting set of size at most $k$. 
\end{claimproof}

\noindent Since the above reduction is a polynomial-time reduction, therefore the \textsc{Multipacking} problem  is \textsc{NP-complete} for bipartite graphs.  Moreover, the reduction is also an FPT reduction. Hence, the \textsc{Multipacking} problem  is \textsc{W[2]-hard} for bipartite graphs when parameterized by the solution size.
\end{proof}

\subsection{Claw-free graphs}

Next, we discuss the hardness of the \textsc{Multipacking} problem for claw-free graphs.

\begin{restatable}{lemma}{NPCclawfree}\label{thm:NPC_clawfree}
  \textsc{Multipacking} problem  is \textsc{NP-complete} for claw-free graphs. Moreover, \textsc{Multipacking} problem  is \textsc{W[2]-hard} for claw-free graphs when parameterized by the solution size.
\end{restatable}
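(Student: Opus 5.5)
We adapt the \textsc{Hitting Set} reduction of Lemma~\ref{thm:NPC_chordal}. The source of claws there is that a set-vertex $S_j$ is adjacent to several pairwise non-adjacent vertices $u_i^1$. To avoid this, we split every set-vertex into copies, one for each element it misses, so that every vertex's neighbourhood is covered by two cliques, as in line-graph-like constructions.

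\textbf{Construction.} Given $U=\{u_1,\dots,u_n\}$, $\mathcal{F}=\{S_1,\dots,S_m\}$ and $k\geq 3$, build $G$ as follows.
\begin{itemize}
\item For every pair $(i,j)$ with $u_i\notin S_j$, create a vertex $S_j^i$.
\item Let $C_j=\{S_j^i: u_i\notin S_j\}$ and make each $C_j$ a clique.
\item Let $D_i=\{S_j^i: u_i\notin S_j\}$ and make each $D_i$ a clique.
\item For every $i$, attach a path $u_i^1u_i^2\dots u_i^{k-2}$ and make $u_i^1$ adjacent to all of $D_i$.
\end{itemize}
If some $u_i$ lies in every set, then $\{u_i\}$ is a hitting set, so we may assume every $D_i\neq\emptyset$. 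If needed, connectivity can be ensured by a harmless preprocessing step, for instance by adding dummy elements contained in every set but with long paths.

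\textbf{Claw-freeness.} Check each type of vertex.
\begin{itemize}
\item $N(S_j^i)\subseteq C_j\cup (D_i\cup\{u_i^1\})$, a union of two cliques.
\item $N(u_i^1)=D_i\cup\{u_i^2\}$, a clique plus one vertex.
\item Path vertices have degree at most $2$.
\end{itemize}
A claw centre needs three pairwise non-adjacent neighbours, and none of these neighbourhoods contains such a triple, so $G$ is claw-free.

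\textbf{Distances.} These replace the ``adjacent iff $u_i\notin S_j$'' property of the original reduction.
\begin{enumerate}
\item If $u_{i'}\notin S_j$, then $d(S_j^i,u_{i'}^{k-2})\leq 2+(k-3)=k-1$, via $S_j^i\,S_j^{i'}\,u_{i'}^1$.
\item If $u_{i'}\in S_j$, then any path from $S_j^i$ must leave $C_j$ through some $D_{i}$ and enter $D_{i'}$ through some $C_{j'}$. This costs at least $3$ steps to reach $u_{i'}^1$, so $d(S_j^i,u_{i'}^{k-2})\geq k$.
\item $d(u_i^1,u_{i'}^1)\geq 3$ for $i\neq i'$. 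Hence the path-ends are pairwise at distance at least $2k-3\geq k$, and any path vertex $u_p^q$ is at distance at least $k$ from $u_i^{k-2}$ for $i\ne p$.
\end{enumerate}

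\textbf{Correctness.} For the (If part), take $M=\{u_i^{k-2}:u_i\in H\}$ and repeat the case analysis of Claim~\ref{thm:NPC_chordal}.1. Only $r=k-1$ matters. By (3), a ball $N_{k-1}[v]$ containing all of $M$ must have $v=S_t^i$. By (1)--(2), $v$ is then within $k-1$ of $u_{i'}^{k-2}$ only for elements $u_{i'}\notin S_t$, so all of $H$ misses $S_t$, a contradiction.

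For the (Only-if part), map each vertex of $M$ to an element: $u_i^q\mapsto u_i$ and $S_j^i\mapsto u_i$. This gives a set $H$ with $|H|\leq k$. If some $S_t$ misses all of $H$, pick any $S_t^{i_0}$. For every $u_i\in H$ we have $S_t^i\in G$, hence:
\begin{itemize}
\item $d(S_t^{i_0},S_j^{i})\leq 2$, via $S_t^{i}$ and then $D_i$;
\item $d(S_t^{i_0},u_i^q)\leq 1+q\leq k-1$.
\end{itemize}
So $M\subseteq N_{k-1}[S_t^{i_0}]$, which violates the multipacking condition with $r=k-1$.

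The reduction is polynomial and parameter-preserving, which gives \textsc{NP}-completeness and \textsc{W[2]}-hardness. The main obstacle is the lower bound (2) together with the edge cases. I must verify that no shortcut through the $C_j$/$D_i$ grid reaches a member-element's path in fewer than $3$ steps. I also need to handle degenerate inputs (empty $D_i$, connectivity, small $k$) without breaking claw-freeness; if the dummy-element fix fails, I would instead restrict to \textsc{Hitting Set} instances where each element misses some set.
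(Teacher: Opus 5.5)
Your reduction is correct and is essentially the paper's: the same \textsc{Hitting Set} reduction with pendant paths $u_i^1\dots u_i^{k-2}$ and a rook's-graph gadget on the non-incidences $u_i\notin S_j$ (your $S_j^i$ are the paper's $w_{j,i}$). The one difference is that you delete the paper's hub vertices $S_j$ and the clique on $\mathcal F$, letting the row clique $C_j$ stand for $S_j$. This merges the paper's separate cases $v\in W$ and $v\in\mathcal F$ into one, at the price of having to map $M$-vertices $S_j^i$ to $u_i$ in the only-if direction, which you correctly do.

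Your connectivity worry is moot, since no step of your argument uses connectivity. Note, though, that the fix you suggest would not work: an element lying in every set has $D_i=\emptyset$, so its path stays isolated. A dummy element lying in no set would connect all the $C_j$ through the clique $D_0$.
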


\begin{proof}
    
We reduce the \textsc{Hitting Set} problem to the \textsc{Multipacking} problem  to show that the \textsc{Multipacking} problem  is \textsc{NP-complete} for claw-free graphs. 

Let $U=\{u_1,u_2,\dots,u_n\}$ and  $\mathcal{F}=\{S_1,S_2,\dots,S_m\}$ where $S_i\subseteq U$ for each $i$. We construct a graph $G$ in the following way: (i)  Include each element of $\mathcal{F}$ in the vertex set of $G$ and $\mathcal{F}$ forms a clique in $G$. (ii) Include $k-3$ length path $u_i = u_i^1 u_i^2 \dots u_i^{k-2}$  in $G$,  for each element $u_i\in U$.
(iii) Join $S_j$ and $u_i^1$ by a $2$-length path $S_jw_{j,i}u_i^1$  iff $u_i \notin S_j$. Let $W=\{w_{j,i}:u_i \notin S_j, 1\leq i\leq n, 1\leq j\leq m\}$. (iv) Join $w_{j,i}$ and $w_{q,p}$ by an edge iff either $j=q$ or $i=p$. Fig. \ref{fig:reduction_claw_free} gives an illustration.

\begin{figure}[ht]
    \centering
   \includegraphics[width=\textwidth]{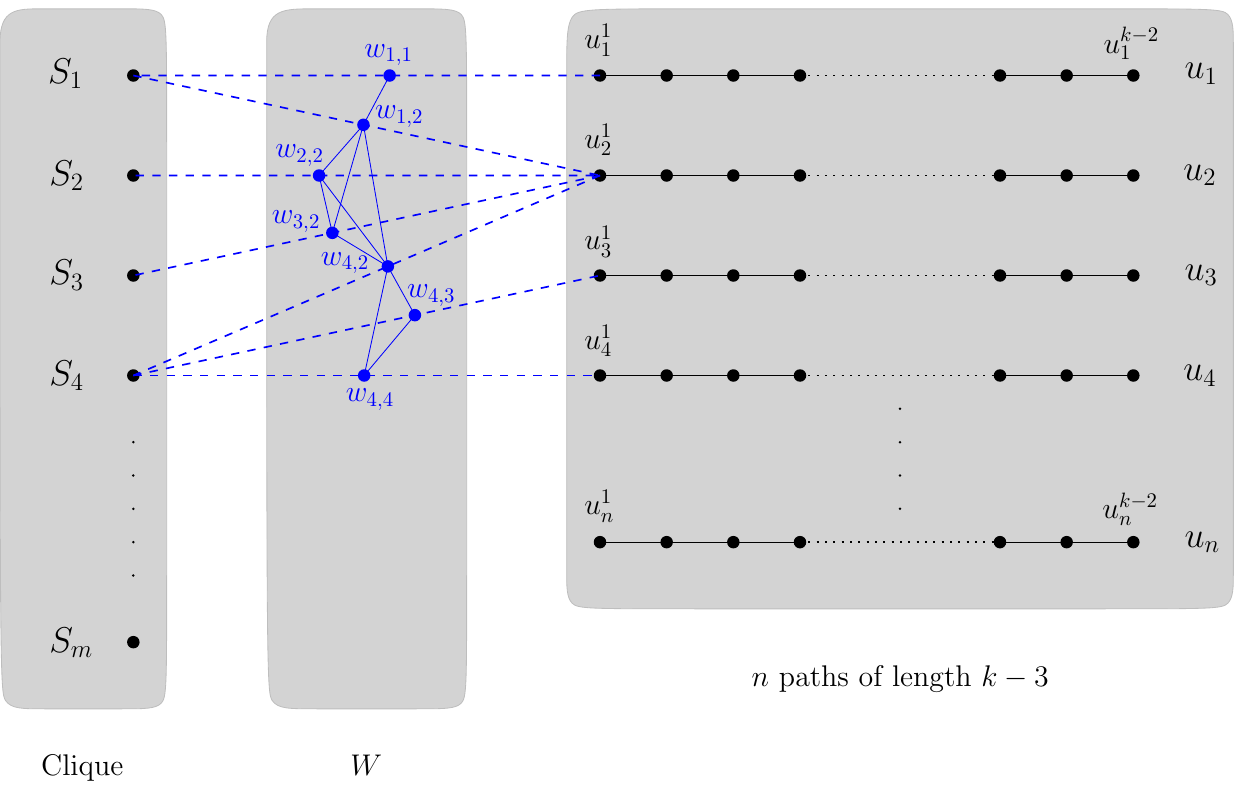}
    \caption{An illustration of the construction of the graph $G$ used in the proof of Lemma~\ref{thm:NPC_clawfree}, demonstrating the hardness of the \textsc{Multipacking} problem for claw-free graphs.}
    \label{fig:reduction_claw_free}
\end{figure}

\vspace{0.3cm}
\noindent
\textbf{Claim \ref{thm:NPC_clawfree}.1.} $G$ is a claw-free graph.

\begin{claimproof} Suppose $G$ contains an induced claw $C$ whose vertex set is $V(C)=\{c,x,y,z\}$ and the claw-center is $c$. Note that $c\notin \{u_i^j:1\leq i\leq n, 2\leq j\leq k-2\}$, since each element of this set has degree at most $2$. 

Suppose $c\in \{u_i^1:1\leq i\leq n\}$. Let $c=u_t^1$ for some $t$. Therefore, $\{x,y,z\}\subseteq N(c)=\{u_t^2\}\cup \{w_{j,t}:u_t\notin S_j,1\leq j\leq m\}$. Then at least $2$ vertices of the set $\{x,y,z\}$ belong to the set $\{w_{j,t}:u_t\notin S_j,1\leq j\leq m\}$. Note that $\{w_{j,t}:u_t\notin S_j,1\leq j\leq m\}$ forms a clique in $G$. This implies that at least $2$ vertices of the set $\{x,y,z\}$ are adjacent. This is a contradiction, since $C$ is an induced claw in $G$. Therefore,  $c\notin \{u_i^1:1\leq i\leq n\}$. 

Suppose $c\in \mathcal{F}$. Let $c=S_t$ for some $t$. Therefore, $\{x,y,z\}\subseteq N(c)=\{S_j:j\neq t, 1\leq j\leq m\}\cup \{w_{t,i}:u_i\notin S_t,1\leq i\leq n\}$. Note that both sets $\{S_j:j\neq t, 1\leq j\leq m\}$ and $ \{w_{t,i}:u_i\notin S_t,1\leq i\leq n\}$ form cliques in $G$. This implies that at least $2$ vertices of the set $\{x,y,z\}$ are adjacent. This is a contradiction, since $C$ is an induced claw in $G$. Therefore, $c\notin \mathcal{F}$.

Suppose $c\in W$. Let $c=w_{q,p}$ for some $q$ and $p$. Let $A_q=\{S_q\}\cup \{w_{q,i}:u_i\notin S_q,i\neq p,1\leq i\leq n\}$ and $B_p=\{u_p^1\}\cup \{w_{j,p}:u_p\notin S_j,j\neq q,1\leq j\leq m\}$. Note that both sets $A_q$ and $B_p$ form cliques in $G$ and $\{x,y,z\}\subseteq N(c)=A_q\cup B_p$. This implies that at least $2$ vertices of the set $\{x,y,z\}$ are adjacent. This is a contradiction, since $C$ is an induced claw in $G$. Therefore, $c\notin W$.

Therefore, $G$ is a claw-free graph. \end{claimproof}

\vspace{0.3cm}
\noindent
\textbf{Claim \ref{thm:NPC_clawfree}.2.} $\mathcal{F}$ has a hitting set of size at most $k$ iff $G$ has a multipacking of size at least $k$, for $k\geq 2$.

\begin{claimproof} \textsf{(If part)} Suppose $\mathcal{F}$ has a hitting set $H = \{ u_1, u_2, \dots, u_k \}$. 
We want to show that $M =\{ u_1^{k-2}, u_2^{k-2}, \dots, u_k^{k-2} \}$ is a multipacking in $G$. If $r < k-1$, then 
$| N_{r}[v] \cap M |\leq 1 \leq r$, $ \forall v \in V(G)$, since $d(u_i^{k-2}, u_j^{k-2})\geq 2k-3$, $\forall i\neq j$.
If $r > k-1$, then 
$| N_{r}[v] \cap M | \leq | M | = k \leq r$, $\forall v \in V(G)$. Suppose $r = k-1$. If $M$ is not a multipacking,  there exists some $v \in V(G)$ such that 
$| N_{k-1}[v] \cap M | = k.$
Therefore, 
$M \subseteq N_{k-1}[v]$.

\vspace{0.22cm}

 \noindent \textit{\textbf{Case 1: }} $v\in \{u_i^j:1\leq i\leq n, 1\leq j\leq k-2\}$.
 
Let $v=u_p^q$ for some $p$ and $q$. In this case, $d(u_p^q,u_i^{k-2})\geq k$, $\forall i\neq p$. Therefore, $v\notin \{u_i^j:1\leq i\leq n, 1\leq j\leq k-2\}$.

\vspace{0.22cm}

 \noindent \textit{\textbf{Case 2: }} $v \in W$.
 
 Let $v=w_{p,q}$ for some $p$ and $q$. Therefore, $u_q\notin S_p$. Since $M \subseteq N_{k-1}[w_{p,q}]$, we have $d(w_{p,q},u_i^{k-2})\leq k-1$, $\forall 1\leq i\leq k$. This implies that $d(w_{p,q},u_i^1)\leq 2$, $\forall 1\leq i\leq k$. Note that $d(w_{p,q},u_i^1)\neq 1$, $\forall i\neq q, 1\leq i\leq k$. Therefore, $d(w_{p,q},u_i^1)=2$, $\forall i\neq q, 1\leq i\leq k$. Suppose $t\neq q$ and $1\leq t\leq k$. Then $d(w_{p,q},u_t^1)=2$.
Therefore, $w_{p,q}$ is adjacent to $w_{h,t}$, for some $h$. This is only possible when $h=p$, since $t\neq q$. Therefore, $u_t^1$ and $S_p$ are joined by a $2$-length path $u_t^1-w_{p,t}-S_p$. This implies that $u_t\notin S_p$. Therefore, $u_1, u_2, \dots, u_k \notin S_p$.  Therefore, $H$ is not a hitting set of $\mathcal{F}$. This is a contradiction. Therefore, $v \notin W$.

\vspace{0.22cm}

 \noindent \textit{\textbf{Case 3: }} $v\in \mathcal{F}$.

 Let $v = S_t$ for some $t$. If there exist $i\in\{1,2,\dots,k\}$ such that $u_i\in S_t$, then $d(S_t,u_i^1)\geq 3$. Therefore, $d(S_t,u_i^{k-2})\geq k$. This implies that $u_i^{k-2}\notin N_{k-1}[S_t]$. Therefore, $M \nsubseteq N_{k-1}[S_t] $. Hence, $M \subseteq N_{k-1}[S_t] $ implies that  $u_1, u_2, \dots, u_k \notin S_t$. Then $H$ is not a hitting set of $\mathcal{F}$. This is a contradiction. Therefore, $v\notin \mathcal{F}$.

Hence, $M$ is a multipacking of size $k$.

\medskip
 \textsf{(Only-if part)}  Suppose $G$ has a multipacking $M$ of size $k$. Let 
$H = \{ u_i : u_i^j \in M, 1\leq i\leq n, 1\leq j\leq k-1 \}.$
Then 
$|H| \leq |M| = k.$
Without loss of generality, let 
$H = \{ u_1, u_2, \dots, u_{k'} \} $ where $ k' \leq k$.
We want to show that $H$ is a hitting set of $\mathcal{F}$. Suppose $H$ is not a hitting set. Then there exists $t$ such that 
$S_t \cap H = \emptyset.$
Therefore,  $S_t$ and $u_i^1$ are joined by a $2$-length path $S_t-w_{t,i}-u_i^1$, for each $i\in\{1,2,\dots,k'\}$. 
So, $M \subseteq N_{k-1}[S_t]$. This implies that 
$| N_{k-1}[S_t] \cap M | = k.$
This is a contradiction. Therefore, $H$ is a hitting set of size at most $k$. 
\end{claimproof}

\noindent Since the above reduction is a polynomial-time reduction, therefore the \textsc{Multipacking} problem  is \textsc{NP-complete} for claw-free graphs.  Moreover, the reduction is also an FPT reduction. Hence, the \textsc{Multipacking} problem  is \textsc{W[2]-hard} for claw-free graphs when parameterized by the solution size.
\end{proof}

\subsection{Regular graphs}

Next, we prove NP-completeness for regular graphs. We will need the following result to prove our theorem.

From the Erdős-Gallai Theorem~\cite{erdos1960grafok} and the Havel-Hakimi criterion~\cite{hakimi1962realizability,havel1955remark}, we can say the following.

\begin{lemma}[\cite{erdos1960grafok,hakimi1962realizability,havel1955remark}]\label{lem:ErdosHavelHakimi}
    A simple $d$-regular graph on $n$ vertices can be formed in polynomial time when  $n \cdot d$ is even and $d < n$.
\end{lemma}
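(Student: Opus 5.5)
The statement to prove is Lemma~\ref{lem:ErdosHavelHakimi}: if $nd$ is even and $d<n$, a simple $d$-regular graph on $n$ vertices exists and can be built in polynomial time. I would give an explicit circulant construction rather than invoke Erdős–Gallai, since it yields both existence and a polynomial-time algorithm at once. Place the vertices $v_0,\dots,v_{n-1}$ on a cycle and use indices modulo $n$. Join $v_i$ and $v_j$ whenever their cyclic distance $\min(|i-j|,\,n-|i-j|)$ lies in $\{1,\dots,\lfloor d/2\rfloor\}$.

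I split into two cases according to the parity of $d$. If $d$ is even, the rule above is the whole construction. Since $d<n$ gives $d/2\le (n-1)/2$, each offset $s\in\{1,\dots,d/2\}$ satisfies $s<n-s$. So $v_{i+s}$ and $v_{i-s}$ are distinct, and different offsets give different neighbours. Every vertex therefore has exactly $d$ neighbours and the graph is simple.

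If $d$ is odd, then $nd$ even forces $n$ even. Build the $(d-1)$-regular circulant as above and add the perfect matching $v_iv_{i+n/2}$. From $d\le n-1$ and $d$ odd we get $\lfloor d/2\rfloor=(d-1)/2\le (n-2)/2<n/2$. Hence the matching edges are new, and the degree rises by exactly one at every vertex.

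The main check is the distinctness claim in the boundary case where $d$ is even with $d/2=(n-1)/2$, or $d$ is odd with $(d-1)/2=n/2-1$. In both, the strict inequalities above still hold, so no edge is counted twice. The construction uses $O(nd)$ edges, so it runs in polynomial time.
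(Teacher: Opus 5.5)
Your circulant construction is correct, and it takes a different route from the paper. The paper gives no argument of its own. It cites Erd\H{o}s--Gallai to say that the constant sequence $(d,\dots,d)$ is graphic: its sum $nd$ is even, and the Erd\H{o}s--Gallai inequalities hold when $d\le n-1$. It then cites Havel--Hakimi, whose greedy procedure realizes any graphic sequence in polynomial time.

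You replace both citations with an explicit rule: join vertices at cyclic distance in $\{1,\dots,\lfloor d/2\rfloor\}$, and also at distance $n/2$ when $d$ is odd. Your checks suffice:
\begin{itemize}
\item Every offset $s\le\lfloor d/2\rfloor$ satisfies $s<n/2$, so exactly two vertices lie at cyclic distance $s$. Since cyclic distance is well defined, distinct offsets give distinct neighbours.
\item When $d$ is odd, $n$ is even and $(d-1)/2<n/2$, so the antipodal matching adds exactly one new neighbour per vertex. Since $d\ge 1$ forces $n\ge 2$, these edges are not loops.
\item The extreme cases $d=0$ and $d=n-1$ give the empty graph and $K_n$, as they should.
\end{itemize}

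The cited route buys generality, since it handles arbitrary degree sequences. As invoked in the paper, though, it leaves the Erd\H{o}s--Gallai inequalities for the regular sequence unverified. Your construction is self-contained, needs no such verification, and runs in $O(nd)$ time. That is all the reduction for regular graphs needs.
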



\begin{restatable}{lemma}{multipackingregularNPc}\label{thm:multipacking_regular_NPc}
    \textsc{Multipacking} problem  is \textsc{NP-complete} for regular graphs. 
\end{restatable}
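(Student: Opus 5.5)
We would reduce from \textsc{Total Dominating Set} on cubic graphs, which is \textsc{NP-complete}~\cite{garey1979computers}. The reduction goes through the same \textsc{Hitting Set} template as Lemma~\ref{thm:NPC_chordal}. Given a cubic graph $H$ on vertex set $\{x_1,\dots,x_n\}$ and an integer $k$, a set $D$ is a total dominating set iff it hits every open neighbourhood $N(x_j)$. So we take $U=V(H)$ and $\mathcal{F}=\{S_j=N(x_j)\}$. Then $|\mathcal{F}|=n$, every set has size exactly $3$, and every element lies in exactly $3$ sets. In the graph $G$ of Lemma~\ref{thm:NPC_chordal}, this uniformity already makes the "interface" almost regular:
\begin{itemize}
\item each $S_j$ has $n-1$ neighbours inside the clique $\mathcal{F}$ and exactly $n-3$ neighbours among the $u_i^1$;
\item each $u_i^1$ has exactly $n-3$ neighbours in $\mathcal{F}$.
\end{itemize}
We may assume $n$ is large and $k\ge 3$ (small $k$ is brute-forced). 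The plan is to keep this interface and only replace the pendant paths $u_i^1\dots u_i^{k-1}$ by regular "thick paths".

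\textbf{Construction.} Fix a target degree $d=2n-4$, which is the degree of every $S_j$. Each path vertex $u_i^\ell$ becomes a layer $L_i^\ell$ of $t$ vertices, and consecutive layers of the same element are completely joined. The first layer $L_i^1$ replaces $u_i^1$: each of its vertices is adjacent to the $n-3$ sets $S_j$ with $u_i\notin S_j$ and to all of $L_i^2$. The degree of each vertex is then adjusted to exactly $d$ by adding, inside each layer, a regular graph of suitable degree. That regular graph exists and is computable in polynomial time by Lemma~\ref{lem:ErdosHavelHakimi}, once $t$ and the parities are chosen correctly. For the last layer $L_i^{k-1}$, which has only one neighbouring layer, we add an inner regular graph of larger degree, or close it off with an extra gadget layer. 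The point is to choose $t$ (e.g.\ $t\approx n$, with parity fixed) so that every required inner degree is in the range $[0,t-1]$ and has the right parity. I expect this degree bookkeeping (choosing $t$, $d$ and the inner degrees so that all of $G$ is exactly $d$-regular) to be the main technical obstacle. If the end layer cannot be matched directly, I would first try attaching a pendant $d$-regular-minus-a-matching gadget beyond the last layer, at distance $\ge k$ from everything else, so that it cannot create new short balls.

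\textbf{Distances are preserved.} Layers are joined only to adjacent layers of the same element, and to $\mathcal{F}$ only at level $1$. Hence the distance from any vertex of $L_i^\ell$ to any vertex of $L_{i'}^{\ell'}$, or to $S_j$, equals the corresponding distance between $u_i^\ell$ and $u_{i'}^{\ell'}$, or between $u_i^\ell$ and $S_j$, in the original graph. The only exception is distance $\le 1$ between vertices within the same layer.

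\textbf{If part.} Given a total dominating set $\{u_1,\dots,u_k\}$, choose one vertex in each $L_i^{k-1}$. The argument of Claim~\ref{thm:NPC_chordal}.1 transfers verbatim. For $r<k-1$, any two chosen vertices are at distance $\ge 2k-2$. For $r>k-1$, the bound $|M|=k\le r$ suffices. For $r=k-1$, the only possible centres are the $S_t$, and then some $u_i\in S_t$ gives $d(S_t,L_i^{k-1})\ge k$. Vertices of added gadgets are handled like path vertices, since they are far from all but one branch.

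\textbf{Only-if part.} Given a multipacking $M$ of size $k$, map each vertex of $M$ in a branch $L_i^\ast$ to the element $u_i$. This yields a set $D$ with $|D|\le k$. If some $S_t$ were missed by $D$, then $S_t$ would be adjacent to the first layer of every branch used by $M$, so $M\subseteq N_{k-1}[S_t]$, a contradiction. Members of $M$ lying in $\mathcal{F}$ are within distance $1$ of $S_t$, so they do no harm. Hence $D$ is a total dominating set of size at most $k$. Membership in \textsc{NP} is immediate.
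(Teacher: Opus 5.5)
\textbf{The gap: the regular graph itself.} The reduction source and the thickening idea match the paper. However, the only real content of this lemma beyond Lemma~\ref{thm:NPC_chordal} is the construction of a \emph{regular} graph on which the ball argument still works. You defer exactly this ``degree bookkeeping'', and the design you sketch cannot be made regular.

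If every vertex of $L_i^1$ is joined to each $S_j$ with $u_i\notin S_j$, then $\deg(S_j)=(n-1)+t(n-3)$, not $2n-4$. A vertex of any later layer has at most $2t+(t-1)=3t-1$ neighbours, and $(n-1)+t(n-3)>3t-1$ for $n\ge 6$.

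Keeping $u_i^1$ as a single vertex does not help with equal-size complete joins. Then $\deg(u_i^1)=2n-4$ forces $t=n-1$, and a vertex of a layer $L_i^\ell$ with $3\le \ell\le k-2$ already has $2(n-1)>2n-4$ neighbours. So the set-vertex clique, of degree $2n-4$, is incompatible with thick paths of this form.

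Your fallback is worse. A degree-correcting gadget ``at distance $\ge k$ from everything else'' breaks the only-if direction. That direction needs every vertex of a branch to lie within distance $k-2$ of the branch root, so that the whole branch lies in $N_{k-1}[S_t]$ whenever $S_t$ is adjacent to the root. With far-away gadget vertices, $k$ of them taken from different branches are pairwise far apart. They then form a multipacking of size $k$ whatever the instance is.

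\textbf{How the paper builds it.} The paper drops $\mathcal{F}$ entirely and places the complement of the cubic graph $G$ on the roots: $u_i^1u_j^1$ is an edge iff $v_i\neq v_j$ are non-adjacent in $G$. The root $u_t^1$ then serves directly as the ball centre witnessing an undominated $v_t$. Every root has interface degree exactly $n-4=:d$, which is even.

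Each branch $H_a$ consists of layers $S_a^2,\dots,S_a^{k-2}$ of size $d$, with consecutive layers completely joined and $S_a^2$ a clique. It is capped by $T_a$: $d^2$ vertices inducing a $(2d-1)$-regular graph, which exists by Lemma~\ref{lem:ErdosHavelHakimi}. $T_a$ is split into $d$ blocks of size $d$, each block attached to one vertex of $S_a^{k-2}$.

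This makes $G'$ exactly $2d$-regular. At the same time, all of $H_a$ stays within distance $k-2$ of $u_a^1$, and $T_a$ lies at distance exactly $k-2$. These are precisely the two properties your if and only-if arguments rely on.
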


\begin{proof} It is known that the \textsc{Total Dominating Set} problem is \textsc{NP-complete} for cubic (3-regular) graphs~\cite{garey1979computers}.  We reduce the \textsc{Total Dominating Set} problem of cubic graph to the \textsc{Multipacking} problem  of regular graph. 

\medskip
\noindent
\fbox{%
  \begin{minipage}{\dimexpr\linewidth-2\fboxsep-2\fboxrule}
  \textsc{ Total Dominating Set} problem
  \begin{itemize}[leftmargin=1.5em]
    \item \textbf{Input:} An undirected graph $G = (V, E)$ and an integer $k \in \mathbb{N}$.
    \item \textbf{Question:} Does there exist a \emph{total dominating set} $S \subseteq V$ of size at most $k$; that is, a set of at most $k$ vertices such that every vertex in $V$ has at least one neighbor in $S$?
  \end{itemize}
  \end{minipage}%
}
\medskip

Let $(G,k)$ be an instance of the \textsc{Total Dominating Set} problem, where $G$ is a $3$-regular graph  with the vertex set $V=\{v_1,v_2,\dots,v_n\}$ where $n\geq 6$. Therefore, $n$ is even. Let $d=n-4$. So, $d$ is also even. Now we construct a graph $G'$ in the following way (Fig. \ref{fig:reduction_regular} gives an illustration). 

\noindent(i) For every vertex $v_a\in V(G)$, we construct a graph $H_a$ in $G'$ as follows. 

\begin{itemize}
    \item Add the vertex sets $S_a^i=\{u_a^{i,j}:1\leq j\leq d\}$ for all $2\leq i\leq k-2$ in $H_a$. Each vertex of $S_a^i$ is adjacent to each vertex of $S_a^{i+1}$ for each $i$.

    \item Add a vertex $u_a^{1}$ in $H_a$ where $u_a^{1}$ is adjacent to each vertex of $S_a^2$. Moreover, $S_a^2$ forms a clique.

    \item  Add the vertex set $T_a=\{u_a^{k-1,j}:1\leq j\leq d^2\}$ in $H_a$ where the induced subgraph $H_a[T_a]$ is a $(2d-1)$-regular graph. We can construct such a graph in polynomial-time by Lemma \ref{lem:ErdosHavelHakimi} since $d$ is an even number and $2d-1 < d^2 $ (since $n\geq 6$). 

    \item Let $U_1, U_2, \ldots, U_d$ be a partition of $T_a$ into $d$ disjoint sets, each of size $d$, that is, $T_a=U_1\sqcup U_2\sqcup \dots \sqcup U_d$ and $|U_j|=d$ for each $1\leq j\leq d$. Each vertex in $U_j$ is adjacent to $u_a^{k-2,j}$ for all $1\leq j\leq d$.
    
\end{itemize}

\noindent(ii) If $v_i$ and $v_j$ are different and not adjacent in $G$, we make $u^1_i$ and $u^1_j$ adjacent in $G'$ for every $1\leq i,j\leq n, i\neq j$. \\

Note that $G'$ is a $2d$-regular graph. 

\begin{figure}[ht]
    \centering
   \includegraphics[width=\textwidth]{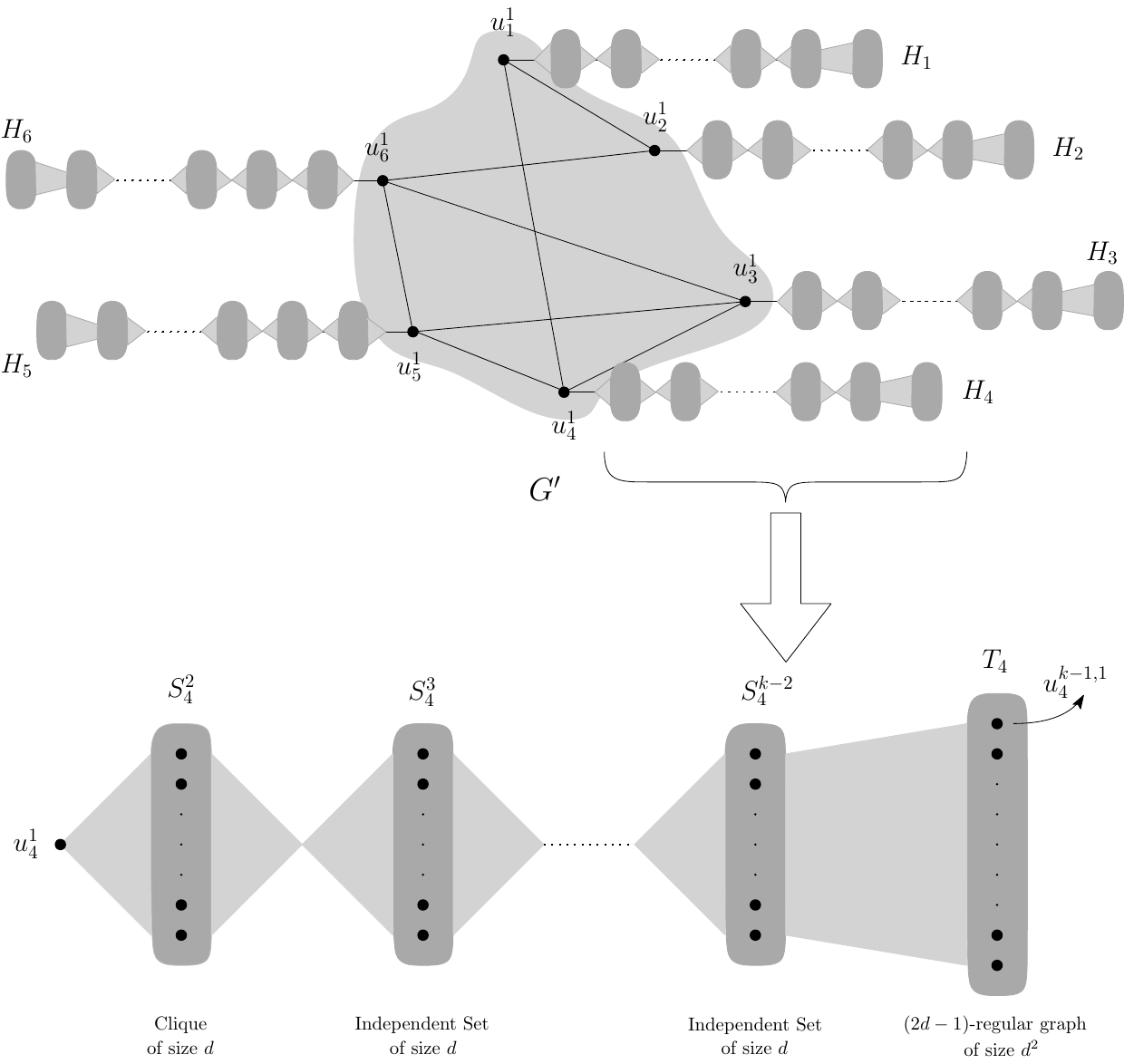}
    \caption{An illustration of the construction of the graph $G'$ used in the proof of Lemma~\ref{thm:multipacking_regular_NPc}, demonstrating the hardness of the \textsc{Multipacking} problem for regular graphs. The induced subgraph $G'[u_1^1,u_2^1,\dots,u_6^1]$ is isomorphic to $G^c$.}
    \label{fig:reduction_regular}
\end{figure}

\vspace{0.3cm}
\noindent
\textbf{Claim \ref{thm:multipacking_regular_NPc}.1. } $G$ has a total dominating set of size at most $k$ iff $G'$ has a multipacking of size at least $k$, for $k\geq 2$.

\begin{claimproof} \textsf{(If part)} 
Suppose $G$ has a total dominating set $D = \{ v_1, v_2, \dots, v_k \}$. 
We want to show that $M =\{ u_1^{k-1,1}, u_2^{k-1,1}, \dots, u_k^{k-1,1} \}$ is a multipacking in $G'$.  If $r < k-1$, then 
$| N_{r}[v] \cap M |\leq 1 \leq r$, $ \forall v \in V(G')$, since $d(u_i^{k-1,1}, u_j^{k-1,1})\geq 2k-3$, $\forall i\neq j$.
If $r > k-1$, then 
$| N_{r}[v] \cap M | \leq | M | = k \leq r$, $\forall v \in V(G')$. Suppose $r = k-1$. If $M$ is not a multipacking,  there exists some vertex $v \in V(G')$ such that 
$| N_{k-1}[v] \cap M | = k$. Therefore, 
$M \subseteq N_{k-1}[v]$. Suppose $v\in V(H_p)\setminus\{u_p^1\}$ for some $p$. In this case, $d(v,u_i^{k-1,1})\geq k$, $\forall i\neq p$. Therefore, $v\notin V(H_p)\setminus\{u_p^1\}$ for any $p$.  This implies that $v \in \{u_i^1:1\leq i\leq n\}$. Let $v = u_t^1$ for some $t$. If there exist $i\in\{1,2,\dots,k\}$ such that neither $u_t^1=u_i^1$ nor $u_t^1$ is adjacent to $u_i^1$, then $d(u_t^1,u_i^{k-1,1})\geq k$. Therefore, 
$M \subseteq N_{k-1}[u_t^1] $ implies that either  $u_t^1$ is adjacent to each vertex of the set  $\{u_1^1, u_2^1, \dots, u_k^1\}$ when $u_t^1\notin \{u_1^1, u_2^1, \dots, u_k^1\}$ or $u_t^1$ is adjacent to each vertex of the set  $\{u_1^1, u_2^1, \dots, u_k^1\}\setminus \{u_t^1\}$ when $u_t^1\in \{u_1^1, u_2^1, \dots, u_k^1\}$. Therefore, in the graph $G$, either $v_t$ is not adjacent to any vertex of $D$ when $v_t\notin D$ or  $v_t$ is not adjacent to any vertex of $D\setminus\{v_t\}$ when $v_t\in D$. In both cases, $D$ is not a total dominating set of $G$. This is a contradiction.    Hence, $M$ is a multipacking of size $k$ in $G'$.

 \textsf{(Only-if part)}  Suppose $G'$ has a multipacking $M$ of size $k$. Let 
$D = \{ v_i : V(H_i)\cap M\neq \emptyset, 1\leq i\leq n \}.$
Then 
$|D| \leq |M| = k.$
Without loss of generality, let 
$D = \{ v_1, v_2, \dots, v_{k'} \} $ where $ k' \leq k$.
Now we want to show that $D$ is a total dominating set of $G$. Suppose $D$ is not a total dominating set.  Then there exists $t$ such that either 
$v_t\notin D$ and no vertex in $D$ is adjacent to $v_t$ or $v_t\in D$ and no vertex in $D\setminus\{v_t\}$ is adjacent to $v_t$. Therefore, in $G'$, either  $u_t^1$ is adjacent to each vertex of the set  $\{u_1^1, u_2^1, \dots, u_k^1\}$ when $u_t^1\notin \{u_1^1, u_2^1, \dots, u_k^1\}$ or $u_t^1$ is adjacent to each vertex of the set  $\{u_1^1, u_2^1, \dots, u_k^1\}\setminus \{u_t^1\}$ when $u_t^1\in \{u_1^1, u_2^1, \dots, u_k^1\}$.  Then $M \subseteq N_{k-1}[u_t^1]$. This implies that 
$| N_{k-1}[u_t^1] \cap M | = k$.
This is a contradiction. Therefore, $D$ is a total dominating set of size at most $k$ in $G$.   
\end{claimproof}

\noindent Hence, the \textsc{Multipacking} problem  is \textsc{NP-complete} for regular graphs. 
\end{proof}

\subsection{Convex Intersection graphs}

Here, we discuss the hardness of the \textsc{Multipacking} problem for a geometric intersection graph class: CONV graphs.  A graph $G$ is a \textit{CONV graph} if and only if there exists  a family of convex sets on a plane such that the graph has a vertex for each convex set and an edge for each intersecting pair of convex sets. We show that the \textsc{Multipacking} problem is \textsc{NP-complete} for CONV graphs. To prove this we will need the following result.

\begin{theorem}[\cite{kratochvil1998intersection}]\label{thm:coplanar_CONV}
    Compliment of a planar graph (co-planar graph) is a CONV graph. 
\end{theorem}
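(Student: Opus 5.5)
The statement is the result of Kratochv\'il and Matou\v{s}ek cited from \cite{kratochvil1998intersection}, and the paper only uses it as a black box. Still, here is how I would try to prove it. Write $G$ for the planar graph, so we want convex sets $C_v$ in the plane with $C_u\cap C_v=\emptyset$ if and only if $uv\in E(G)$. Two convex sets are disjoint exactly when some line separates them. So I would build each $C_v$ as an intersection of half-planes, one for each edge at $v$, and then check that non-adjacent pairs still meet.

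\emph{Step 1 (geometric model of $G$).} By the Koebe--Andreev--Thurston circle packing theorem, realise $G$ as the tangency graph of interior-disjoint closed disks $\{D_v\}_{v\in V(G)}$: $D_u$ and $D_v$ touch if and only if $uv\in E(G)$. For each edge $uv$, let $\ell_{uv}$ be the common tangent line at the contact point. Let $H_{v,u}$ be the closed half-plane bounded by $\ell_{uv}$ that contains $D_v$, with a small inward offset so that $H_{u,v}\cap H_{v,u}=\emptyset$.

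\emph{Step 2 (definition and the easy direction).} Put $C_v=\bigcap_{u\in N(v)}H_{v,u}$. Each $C_v$ is convex and contains (a shrunk copy of) $D_v$, so it is nonempty. If $uv\in E(G)$, then $C_u\subseteq H_{u,v}$ and $C_v\subseteq H_{v,u}$, which are disjoint, so $C_u\cap C_v=\emptyset$. This gives the edge condition of the complement for free.

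\emph{Step 3 (non-adjacent pairs intersect) --- main obstacle.} For $uw\notin E(G)$ we must show $C_u\cap C_w\neq\emptyset$. The sets $C_v$ are typically unbounded polygonal regions, and my plan is to find a common point ``far away''. Write $\theta_{v,u}$ for the outward normal direction of $H_{v,u}$. Then $C_v$ contains every sufficiently distant point whose direction lies in an open arc of the circle of directions determined by the normals $\theta_{v,u}$, $u\in N(v)$. The claim to prove is that for non-adjacent $u,w$ these direction arcs overlap. This is where I expect the real work to lie, since a general packing need not have this property.

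I would first try to fix it by choosing the packing carefully. Concretely, I would place the packing on the sphere, pick the projection point so that a prescribed vertex's disk becomes the outer face, and apply a M\"obius normalisation. If a single normalisation cannot handle all pairs at once, I would fall back to modifying the sets: add to each $C_v$ a thin convex ``spike'' toward a point far outside the packing. The spikes would be arranged so that the spikes of non-adjacent vertices cross. Each spike would be kept inside the half-planes $H_{v,u}$, $u\in N(v)$, so that it preserves the separations from Step 2. The care needed in choosing these spike directions consistently around each face is what I expect to be the heart of the argument.
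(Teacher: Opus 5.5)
\textbf{There is a genuine gap: Step~3 is the whole theorem, and it is left as a plan.} The paper gives no proof of this statement. It is imported as a black box from Kratochv\'{\i}l and Kub\v{e}na (that is the cited reference, not Kratochv\'{\i}l and Matou\v{s}ek), so your text has to stand as a proof on its own, and it does not.

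Steps~1--2 are only a reformulation. Take any representation of $\overline{G}$ by compact convex sets, choose a strictly separating line for every edge $uv\in E(G)$, and replace each set by the intersection of its half-planes. This only enlarges the sets, so every such representation already has your form. The theorem is therefore equivalent to the existence of separating lines for which all non-adjacent pairs still meet. Choosing those lines as common tangents of a circle packing gives no reason for that.

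\textbf{The mechanisms you propose for Step~3 fail.}

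(a) \emph{Far-away points.} The recession cone of $C_v$ is $\{d:\langle d,c_v-c_u\rangle\ge 0 \text{ for all } u\in N(v)\}$. Suppose $v$ is not incident to the unbounded face of the packing and all faces at $v$ are triangles. Then the angles at $c_v$ between consecutive neighbour centres are triangle angles, hence $<\pi$, and they fill $2\pi$. So the cone is $\{0\}$ and $C_v$ is bounded. In any packing of the octahedron by bounded disks, the unbounded face is a triangle, and each of the three non-adjacent (antipodal) pairs contains one of the three inner vertices. So not a single non-edge can be realized far away, whatever normalization you choose.

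(b) \emph{Finite intersections.} These also fail in natural packings. Take the rotationally symmetric packing of the wheel $W_k$, $k\ge 4$, with hub $b$ and rim $a_1,\dots,a_k$. The tangent line of two consecutive rim disks passes through $c_b$. So $C_{a_i}$ lies in the closed sector with apex $c_b$ and opening angle $2\pi/k$ containing $c_{a_i}$, and it avoids $c_b$ because $C_{a_i}\subseteq H_{a_i,b}$. For non-consecutive $i,j$ these sectors meet only at $c_b$. Hence $C_{a_i}\cap C_{a_j}=\emptyset$, although $a_ia_j$ is an edge of $\overline{W_k}$.

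(c) \emph{The spike repair does nothing.} A convex spike contained in every $H_{v,u}$, $u\in N(v)$, is already contained in $C_v=\bigcap_{u\in N(v)}H_{v,u}$, so adding it changes nothing. To gain anything you would have to move the separating lines themselves, which is the original problem.

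(d) \emph{Sending a vertex's disk to the outside.} This turns it into the exterior of a circle, which lies in no half-plane, so Step~2 breaks for that vertex.

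What is missing is a global rule for choosing the separators (or the convex sets directly) that forces every non-adjacent pair to intersect. As written, the proposal establishes only the easy direction, namely that adjacent vertices get disjoint sets.
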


    

\begin{restatable}{lemma}{multipackingCONVNPc}\label{thm:multipacking_CONV_NPc}
    \textsc{Multipacking} problem  is \textsc{NP-complete} for CONV graphs. 
\end{restatable}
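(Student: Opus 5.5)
We will reduce \textsc{Total Dominating Set} on planar graphs, which is \textsc{NP-complete}, to \textsc{Multipacking} on CONV graphs. The template is the regular-graph reduction of Lemma~\ref{thm:multipacking_regular_NPc}. There, the core vertices $u_1^1,\dots,u_n^1$ induce the complement $G^c$ of the input graph, and a long gadget hangs from each core vertex. In the planar setting the complement of the input is co-planar, so by Theorem~\ref{thm:coplanar_CONV} it is a CONV graph. The remaining task is to attach pendant paths while keeping a convex-set representation.

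\textbf{Construction.} Let $(G,k)$ be an instance with $G$ planar and $V(G)=\{v_1,\dots,v_n\}$, and let $G^c$ be its complement. Build $G'$ from $G^c$, whose vertices we call $u_1^1,\dots,u_n^1$, by attaching to each $u_i^1$ a path $u_i^1u_i^2\cdots u_i^{k-1}$ of length $k-2$. This is exactly the first reduction of Lemma~\ref{thm:NPC_chordal}, with the clique $\mathcal{F}$ removed and ``$S_t$ adjacent to $u_i^1$'' replaced by ``$u_t^1$ adjacent to $u_i^1$''.

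\textbf{Correctness.} We aim for: $G$ has a total dominating set of size at most $k$ iff $G'$ has a multipacking of size at least $k$ (for $k\ge2$). The argument copies Claim~\ref{thm:multipacking_regular_NPc}.1 with $M=\{u_1^{k-1},\dots,u_k^{k-1}\}$.
\begin{itemize}
\item For $r<k-1$, each ball contains at most one vertex of $M$, since the path ends are pairwise at distance at least $2k-3$.
\item For $r>k-1$, the bound follows trivially from $|M|=k$.
\item For $r=k-1$, a ball containing all of $M$ must be centred at some core vertex $u_t^1$ that equals or is adjacent to every $u_i^1$ with $i\le k$. This means $v_t$ has no $G$-neighbour in $D$ (or in $D\setminus\{v_t\}$), contradicting total domination.
\end{itemize}
The converse sets $D=\{v_i: M\cap V(\text{path}_i)\neq\emptyset\}$. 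If $D$ fails to totally dominate some $v_t$, then $M\subseteq N_{k-1}[u_t^1]$, a contradiction.

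\textbf{CONV representation (main obstacle).} We must show $G'$ is a CONV graph, so the convex family realizing $G^c$ has to be extended with the pendant paths. First, take a representation of $G^c$ by convex sets $C_1,\dots,C_n$ via Theorem~\ref{thm:coplanar_CONV}. For each $i$ we need a point or small region of $C_i$ that lies in no other $C_j$. Given such a private point $p_i$, we attach a chain of tiny segments starting at $p_i$ and leaving in a direction that avoids all other sets, with consecutive segments touching; this realizes the path.

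Private points need not exist in a general representation, since $C_i$ could be contained in a union of others. The first fix to try is to shrink or extend the representation. For example, extend each $C_i$ by a thin spike to a far-away point $q_i$, taking the convex hull of $C_i\cup\{q_i\}$, with the $q_i$ chosen in general position on a huge circle so that spikes meet only sets they already meet. Checking that this preserves the intersection pattern requires care.

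If that fails, we fall back on adjusting the reduction instead. We attach to $u_i^1$ a new vertex $x_i$ adjacent to all of $u_1^1,\dots,u_n^1$ except those $u_j^1$ with $v_j$ adjacent to $v_i$, and let the path hang from $x_i$. We then verify that co-planarity, or another CONV-closed structure, is retained and that the distance argument still works after adding $1$ to the path lengths.
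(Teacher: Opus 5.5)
Your construction and your proof that $G$ has a total dominating set of size at most $k$ iff $G'$ has a multipacking of size at least $k$ are exactly the paper's, and that part is fine. The gap is the step you yourself single out: nothing in the proposal shows that $G'$ is a CONV graph. The paper does not supply this either. It notes that $G'[u_1^1,\dots,u_n^1]\cong G^c$ is CONV by Theorem~\ref{thm:coplanar_CONV} and concludes that $G'$ is CONV, an inference that ignores the pendant paths. So you have located the one nontrivial point, but not resolved it.

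\textbf{What is actually needed.} Fix a representation of $G^c$ by compact convex sets $C_1,\dots,C_n$ (which one may always assume), and let $k\ge 4$. This representation extends to one of $G'$ if and only if every $C_i$ has a point $p_i\in\partial C_i$ lying in no $C_j$ with $j\neq i$.
\begin{itemize}
\item Necessity: the set for $u_i^2$ meets $C_i$ and also meets the set for $u_i^3$, which misses $C_i$. Being convex, it contains a point of $\partial C_i$, and that point lies in no other $C_j$.
\item Sufficiency: take a small ball around $p_i$ that avoids every $C_j$ with $j\neq i$. Inside it, place $k-2$ consecutive touching short segments along an outward normal of a supporting line of $C_i$ at $p_i$.
\end{itemize}
In particular, a private point of $C_i$ is useless unless it lies on $\partial C_i$.

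\textbf{Why your fixes do not give this.}
\begin{itemize}
\item A spike $\mathrm{conv}(C_i\cup\{q_i\})$ to a far-away $q_i$ sweeps a long triangle. That triangle can meet sets $C_j$ with $v_iv_j\in E(G)$, and spikes of two $G$-adjacent vertices can cross each other. Either way you create edges that $G^c$ does not have, which breaks the correspondence your distance argument relies on. If $C_i$ is enclosed by a ring of sets avoiding it, no choice of $q_i$ helps. By the criterion above, a long spike is never what is needed anyway.
\item The fallback still hangs a pendant path, now from $x_i$, so the same obstacle reappears for the set representing $x_i$. Moreover, the enlarged core is in general not co-planar, so Theorem~\ref{thm:coplanar_CONV} no longer applies. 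Pairwise non-adjacent $x_i$ form a $K_n$ in the complement. Pairwise adjacent ones already create a $K_{3,3}$ when $G=K_{2,3}$.
\end{itemize}

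To complete the proof, you must show that every co-planar graph admits a CONV representation in which every set has a private boundary point. One route is to examine the construction behind Theorem~\ref{thm:coplanar_CONV}. Otherwise you need a different reduction whose CONV-ness can be checked directly.
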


\begin{proof} It is known that the \textsc{Total Dominating Set} problem is \textsc{NP-complete} for planar graphs~\cite{garey1979computers}. We reduce the \textsc{Total Dominating Set} problem of planar graph to the \textsc{Multipacking} problem  of CONV graph to show that the \textsc{Multipacking} problem  is \textsc{NP-complete} for CONV graphs.

Let $(G,k)$ be an instance of the \textsc{Total Dominating Set}  problem, where $G$ is a planar graph  with the vertex set $V=\{v_1,v_2,\dots,v_n\}$. We construct a graph $G'$ in the following way (Fig. \ref{fig:reduction_CONV} gives an illustration). 

\noindent(i) For every vertex $v_i\in V(G)$, we introduce a path $u_i = u_i^1 u_i^2 \dots u_i^{k-1}$ of length $k-2$ in $G'$. 

\noindent(ii) If $v_i$ and $v_j$ are different and not adjacent in $G$, we make $u^1_i$ and $u^1_j$ adjacent in $G'$ for every $1\leq i,j\leq n, i\neq j$.

\begin{figure}[ht]
    \centering
   \includegraphics[width=\textwidth]{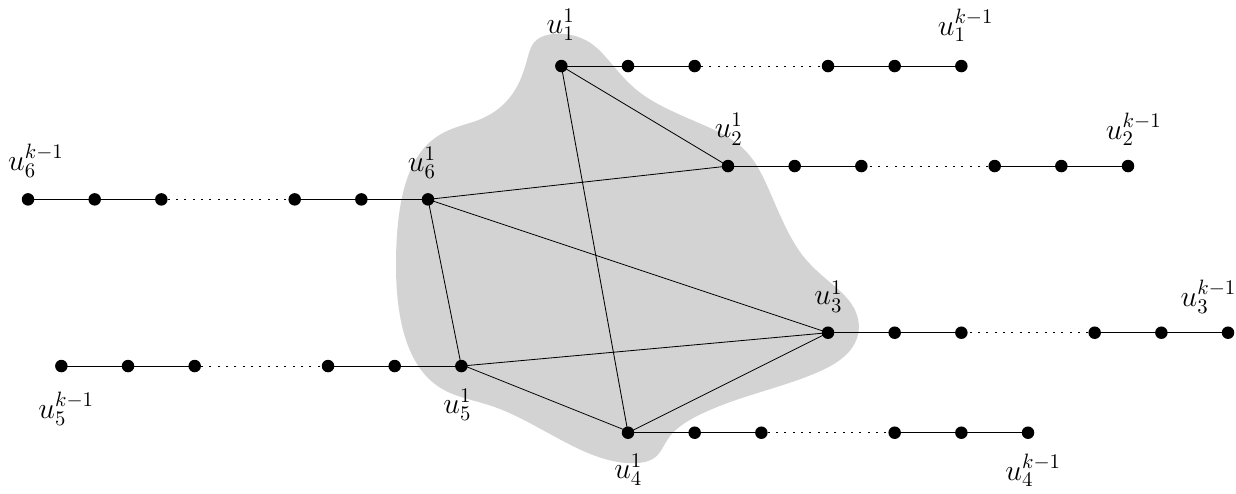}
    \caption{An illustration of the construction of the graph $G'$ used in the proof of Lemma~\ref{thm:multipacking_CONV_NPc}, demonstrating the hardness of the \textsc{Multipacking} problem for CONV graphs. The induced subgraph $G'[u_1^1,u_2^1,\dots,u_6^1]$ (the colored region) is isomorphic to $G^c$.}
    \label{fig:reduction_CONV}
\end{figure}

\vspace{0.3cm}
\noindent
\textbf{Claim \ref{thm:multipacking_CONV_NPc}.1. } $G'$ is a CONV graph.

\begin{claimproof} Note that the induced subgraph $G'[u_1^1,u_2^1,\dots,u_n^1]$ is isomorphic to $G^c$. Since $G$ is a planar graph, $G'[u_1^1,u_2^1,\dots,u_n^1]$ is a CONV graph by Lemma \ref{thm:coplanar_CONV}. Hence, $G'$ is a CONV graph.    
\end{claimproof}

\vspace{0.3cm}
\noindent
\textbf{Claim \ref{thm:multipacking_CONV_NPc}.2. } $G$ has a total dominating set of size at most $k$ iff $G'$ has a multipacking of size at least $k$, for $k\geq 2$.

\begin{claimproof} \textsf{(If part)} 
Suppose $G$ has a total dominating set $D = \{ v_1, v_2, \dots, v_k \}$. 
We want to show that $M =\{ u_1^{k-1}, u_2^{k-1}, \dots, u_k^{k-1} \}$ is a multipacking in $G'$.  If $r < k-1$, then 
$| N_{r}[v] \cap M |\leq 1 \leq r$, $ \forall v \in V(G')$, since $d(u_i^{k-1}, u_j^{k-1})\geq 2k-3$, $\forall i\neq j$.
If $r > k-1$, then 
$| N_{r}[v] \cap M | \leq | M | = k \leq r$, $\forall v \in V(G')$. Suppose $r = k-1$. If $M$ is not a multipacking,  there exists some $v \in V(G')$ such that 
$| N_{k-1}[v] \cap M | = k$. Therefore, 
$M \subseteq N_{k-1}[v].$ Suppose $v=u_p^q$ for some $p$ and $q$ where $1\leq p\leq n, 2\leq q\leq k-1$. In this case, $d(u_p^q,u_i^{k-1})\geq k$, $\forall i\neq p$. Therefore, $v\notin \{u_i^j:1\leq i\leq n, 2\leq j\leq k-1\}$.  This implies that $v \in \{u_i^1:1\leq i\leq n\}$.  Let $v = u_t^1$ for some $t$. If there exist $i\in\{1,2,\dots,k\}$ such that neither $u_t^1=u_i^1$ nor $u_t^1$ is adjacent to $u_i^1$, then $d(u_t^1,u_i^{k-1})\geq k$. Therefore, 
$M \subseteq N_{k-1}[u_t^1] $ implies that either  $u_t^1$ is adjacent to each vertex of the set  $\{u_1^1, u_2^1, \dots, u_k^1\}$ when $u_t^1\notin \{u_1^1, u_2^1, \dots, u_k^1\}$ or $u_t^1$ is adjacent to each vertex of the set  $\{u_1^1, u_2^1, \dots, u_k^1\}\setminus \{u_t^1\}$ when $u_t^1\in \{u_1^1, u_2^1, \dots, u_k^1\}$. Therefore, in the graph $G$, either $v_t$ is not adjacent to any vertex of $D$ when $v_t\notin D$ or  $v_t$ is not adjacent to any vertex of $D\setminus\{v_t\}$ when $v_t\in D$. In both cases, $D$ is not a total dominating set of $G$. This is a contradiction.    Hence, $M$ is a multipacking of size $k$ in $G'$.

 \textsf{(Only-if part)}  Suppose $G'$ has a multipacking $M$ of size $k$. Let 
$D = \{ v_i : u_i^j \in M, 1\leq i\leq n, 1\leq j\leq k-1 \}.$
Then 
$|D| \leq |M| = k.$
Without loss of generality, let 
$D = \{ v_1, v_2, \dots, v_{k'} \} $ where $ k' \leq k$.
Now we want to show that $D$ is a total dominating set of $G$. Suppose $D$ is not a total dominating set.  Then there exists $t$ such that either 
$v_t\notin D$ and no vertex in $D$ is adjacent to $v_t$ or $v_t\in D$ and no vertex in $D\setminus\{v_t\}$ is adjacent to $v_t$. Therefore, in $G'$, either  $u_t^1$ is adjacent to each vertex of the set  $\{u_1^1, u_2^1, \dots, u_k^1\}$ when $u_t^1\notin \{u_1^1, u_2^1, \dots, u_k^1\}$ or $u_t^1$ is adjacent to each vertex of the set  $\{u_1^1, u_2^1, \dots, u_k^1\}\setminus \{u_t^1\}$ when $u_t^1\in \{u_1^1, u_2^1, \dots, u_k^1\}$.  Then $M \subseteq N_{k-1}[u_t^1]$. This implies that 
$| N_{k-1}[u_t^1] \cap M | = k$.
This is a contradiction. Therefore, $D$ is a total dominating set of size at most $k$ in $G$.   
\end{claimproof}

\noindent Hence, the \textsc{Multipacking} problem  is \textsc{NP-complete} for CONV graphs. 
\end{proof}

\subsubsection{SEG Graph or Segment Graph}

 A graph $G$ is a \textit{SEG graph} (or \textit{Segment Graph}) if and only if there exists  a family of straight-line segments in a plane such that the graph has a vertex for each straight-line segment and an edge for each intersecting pair of straight-line segments. It is known that planar graphs are SEG graphs~\cite{chalopin2009every}. 

In 1998, Kratochv{\'\i}l and Kub{\v{e}}na~\cite{kratochvil1998intersection} posed the question of whether the complement of any planar graph (i.e., a co-planar graph) is also a SEG graph. As far as we know, this is still an open question. A positive answer to this question would imply that the \textsc{Multipacking} problem is \textsc{NP-complete} for SEG graphs, since in that case the induced subgraph $G'[u_1^1, u_2^1, \dots, u_n^1]$ in the proof of Lemma~\ref{thm:multipacking_CONV_NPc} would be a SEG graph, which in turn would imply that $G'$ itself is a SEG graph.

\section{Exact Exponential Algorithm (Proof of Theorem~\ref{thm:1.58time})}\label{sec:exponentialalgorithm}


In this section, we present an exact exponential-time algorithm for computing a maximum multipacking in a graph, which breaks the $2^n$ barrier by achieving a running time of $O^*(1.58^n)$, where $n$ is the number of vertices of the graph. 

 Let $M(G)$ denote the set of all multipackings of a graph $G$. Suppose $G$ is a connected graph and $T$ is a spanning tree of $G$. Then $d_T(u,v)\geq d_G(u,v)$ for any $u,v\in V(G)$. Therefore, any multipacking of $G$ is a
multipacking of $T$. Hence $M(G)\subseteq M(T)$. 

In our algorithm, we compute a family of sets of size $O(1.58^n)$ that contains all elements of $M(T)$.  A maximum multipacking is then obtained by searching within this family.


To describe the algorithm, we first define some notions related to rooted trees. A \emph{rooted tree} is a tree in which one vertex is designated as the \emph{root}, giving the tree a hierarchical structure. In a rooted tree, every vertex except the root has a unique \emph{parent}, which is the neighboring vertex on the unique path from that vertex to the root. A vertex that lies directly below another in this hierarchy is called its \emph{child}. If a vertex $u$ is the parent of $v$, and $v$ is the parent of $w$, then $u$ is the \emph{grandparent} of $w$. The maximum distance (or number of edges) from the root to any vertex in a rooted tree $T$ is called the \textit{height} of $T$, which is denoted by height($T$).



Before describing the algorithm, we introduce some notation. For two families of sets $S_1$ and $S_2$, we define $S_1 \oplus S_2 := \{ M_1 \cup M_2 : M_1 \in S_1, M_2\in S_2 \}$. For a vertex $u$ and a family of sets $S$, we define $u \oplus S := \{\{u\}\} \oplus S $. If $H$ is a subgraph of a graph $G$, we write
$G \setminus H := G[V(G)\setminus V(H)]$,
that is, the subgraph of $G$ induced by the vertex set $V(G)\setminus V(H)$. We say that two rooted trees $T_u$ and $T_v$, rooted at $u$ and $v$ respectively, are \emph{isomorphic}, and we denote this by $T_u \cong T_v$, if there exists a bijection 
$f : V(T_u) \to V(T_v)$ such that $xy \in E(T_u)$ if and only if $f(x)f(y) \in E(T_v)$, and moreover $f(u) = v$.

First, we present a simple exact exponential-time algorithm for computing a maximum multipacking that runs in time $O^*(1.62^n)$. Subsequently, we describe an improved exponential-time algorithm with a better running time.


\begin{proposition}\label{thm:1.62time}
   There is an exact exponential-time algorithm for computing a maximum multipacking in a graph with $n$ vertices, which has running time $O^*(1.62^n)$.

\end{proposition}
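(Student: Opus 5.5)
We reduce to trees and enumerate a family of candidate sets whose size satisfies a Fibonacci-type recurrence. Since $1.62$ exceeds the golden ratio $\varphi\approx 1.618$, the target is a family of size $O(\varphi^n)$. Assume $G$ is connected; otherwise handle components separately. An isolated vertex contributes at most one vertex, and since a union of multipackings of different components is not necessarily a multipacking, we simply take the union of the candidate families and test. Fix a spanning tree $T$ of $G$. As observed above, $M(G)\subseteq M(T)$, so it suffices to compute a family $\mathcal{S}(T)$ with $M(T)\subseteq \mathcal{S}(T)$ and $|\mathcal{S}(T)|=O(\varphi^n)$. We then test every member of $\mathcal{S}(T)$ for being a multipacking of $G$ in polynomial time, using BFS distances and checking $|N_r[v]\cap M|\le r$ for all $v$ and all $r\le \rad(G)$, and output the largest one that passes.

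The key structural fact is that a multipacking is in particular a $2$-packing. Taking $r=1$ shows that any two vertices of $M$ are at distance at least $3$, so no vertex of $M$ has another vertex of $M$ as a neighbour or at distance $2$. We define $\mathcal{S}(T)$ recursively as the family of all sets $X\subseteq V(T)$ whose vertices are pairwise at distance at least $3$ in $T$; this family contains $M(T)$. Root $T$ at any vertex and let $v$ be a deepest leaf, with parent $p$. We branch on the pair $v,p$. Vertices that are leaves of $T$ after removals are handled in the same way; we work with forests, which do not affect the argument.

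The branching goes as follows. If $v\notin X$ and $p\notin X$, we recurse on $T\setminus\{v,p\}$, removing $2$ vertices. If $v\in X$, then $p\notin X$, and every other child of $p$ (a leaf, by depth maximality) and the parent of $p$ are also excluded; we take $v\oplus\mathcal{S}(T\setminus N_2[v])$, removing at least $2$ vertices. If $p\in X$, then $v\notin X$, and we take $p\oplus \mathcal{S}(T\setminus N_2[p])$, removing at least $3$ vertices (namely $v$, $p$, and the parent of $p$ if it exists). This crude version gives $F(n)\le 2F(n-2)+F(n-3)$, which is too weak. The main obstacle is therefore the careful case analysis that brings the recurrence down to $F(n)\le F(n-1)+F(n-2)$.

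To fix this, branch on the single vertex $p$ instead. If $p\notin X$, then among the leaf children of $p$ at most one can be in $X$, since they are pairwise at distance $2$. Choose $v$ as the canonical representative of that leaf, using the isomorphism notation $T_u\cong T_v$ to identify symmetric choices. If the leaf children of $p$ are all excluded, we recurse on $T\setminus\{v\}$, removing $1$ vertex. If some leaf child is chosen, it forces $p$ and its neighbourhood out, removing at least $2$ vertices. If $p\in X$, we remove $N_2[p]\supseteq\{p,v\}$, again removing at least $2$ vertices.

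One has to ensure that the case "$p\notin X$, $v\notin X$" is not double-counted with the other branches. The cleanest way is to split on $v\in X$ or $v\notin X$ only. If $v\in X$, we remove $N_2[v]\ni v,p$, so at least $2$ vertices go. If $v\notin X$, we delete $v$, losing $1$ vertex. Deleting a leaf can only increase distances among the remaining vertices, but it does not change them in a tree, so pairwise-distance constraints are preserved. This gives $|\mathcal{S}(n)|\le |\mathcal{S}(n-1)|+|\mathcal{S}(n-2)|$, hence $O(\varphi^n)=O(1.62^n)$. Each recursion step takes polynomial time, so the whole enumeration, followed by the polynomial-time check in $G$, runs in $O^*(1.62^n)$.
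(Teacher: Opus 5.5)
\paragraph{Verdict: genuine gap in the branch $v\in X$.} Your overall plan is the paper's: pass to a spanning tree, branch on a deepest leaf $v$ with parent $p$, and get a Fibonacci recurrence. Relaxing from $M(T)$ to the $2$-packings of $T$ is harmless. The problem is the branch $v\in X$. There you recurse on $T\setminus N_2[v]$, and $N_2[v]$ contains the parent $q$ of $p$. So the remainder is in general a forest, often with isolated vertices. The recurrence $F(n)\le F(n-1)+F(n-2)$ is false for forests: an edgeless forest on $n$ vertices has $2^n$ sets that are pairwise at distance $\ge 3$. Your branching rule breaks exactly there, because an isolated vertex has no parent, so putting it in $X$ removes only itself.

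Concretely, let $q$ be adjacent to $p_1,\dots,p_k$, let $p_1$ have one leaf child $v$, and root at $q$, so $n=k+2$ and $v$ is the deepest leaf. Then $N_2[v]=\{v,p_1,q\}$, and $T\setminus N_2[v]=\{p_2,\dots,p_k\}$ is edgeless. If $\mathcal{S}$ of a forest is taken with respect to the forest's own distances, the branch $v\oplus\mathcal{S}(T\setminus N_2[v])$ alone has $2^{k-1}=2^{n-3}$ members. If instead you keep measuring distances in the original $T$, your rule for choosing $(v,p)$ is undefined on such components, and you would need a different counting argument. So the sentence ``we work with forests, which do not affect the argument'' is exactly where the proof fails.

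\paragraph{The fix.} Delete less. In the branch $v\in X$, remove only the subtree $T_p$, that is, $p$ together with its children. By maximality of the depth of $v$, these children are leaves, so $T_p\subseteq N_1[p]\subseteq N_2[v]$, and no vertex of $T_p$ other than $v$ can lie in $X$. Moreover $T\setminus T_p$ is still a tree whose distances agree with those of $T$, so $X\setminus\{v\}$ belongs to the family for $T\setminus T_p$. The other branch recurses on $T\setminus v$, which is also a tree. Since $|T_p|\ge 2$, this gives $F(n)\le F(n-1)+F(n-2)$ with $F$ ranging over trees only. This is precisely the paper's proof, with $w=v$ and $y=p$, applied directly to $M(T)$.

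\paragraph{Disconnected graphs.} Your remark here is wrong. Every ball $N_r[v]$ lies inside a single component, so a union of multipackings of distinct components is again a multipacking. The right procedure is to solve each component separately and take the union of the optimal solutions. Taking the union of the candidate families, as you propose, would only find the best single-component multipacking.
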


\begin{proof}
    Let $G$ be a connected graph with $n$ vertices and $T$ be a BFS (Breadth-First Search) tree of $G$ with root $v$. Note that every multipacking in $G$ is a multipacking in $T$. For $n=1$, computing a maximum multipacking is trivial. Assume $n\geq 2$. Let $w$ be a farthest vertex from $v$ and $y$ be the parent of $w$. Let $T_{y}$ be the subtree of $T$ rooted at $y$.  If a multipacking $M_1$ of $T$ contains $w$, no other vertices of $T_{y}$ belong to $M_1$, otherwise $2$ vertices of $M_1$ belong to $N_1[y]$. Therefore, $M_1\in w\oplus M(T\setminus T_y) $. If a multipacking $M_2$ of $T$ does not contain $w$, then $M_2\in M(T\setminus w)$ where $T\setminus w$ is the subtree of $T$ formed by deleting $w$ from $T$. Therefore, $M(T)\subseteq w\oplus M(T\setminus T_y)\cup M(T\setminus w)$. This implies the recursive relation $|M(T)|\leq | M(T\setminus T_y)|+ |M(T\setminus w)|$, since $|w\oplus M(T\setminus T_y)|= |M(T\setminus T_y)|$. Let $f(n)$ be the maximum number of multipackings in a tree on $n$ vertices. So, the recursive relation yields that $f(n)\leq f(n-2)+f(n-1)$, which is the Fibonacci recurrence. Solving
this recurrence yields $f(n)=O(1.62^n)$. Therefore, we can compute a family of subsets of $V(G)$ having size $O(1.62^n)$ that contains all the multipackings of $G$. By searching within this family, we can find a maximum multipacking of $G$ in $O^*(1.62^n)$ time. 
\end{proof}



Let $H_1(k)$ denote the rooted tree with vertex set $\{x,u_1,u_2,\dots,u_k\}$, rooted at $x$, where $x$ is adjacent to each $u_i$ for $1 \le i \le k$; that is, $H_1(k)$ is a complete bipartite graph with partite sets $\{x\}$ and $\{u_1,u_2,\dots,u_k\}$. Let $H_2(k_1,k_2)$ denote the rooted tree rooted at $y$ and vertex set $\{y,a_1,a_2,\dots,a_{k_1}, b_1,b_2,\dots,b_{k_1}, c_1,c_2,\dots,c_{k_2}\}$, whose edge set is $\{a_i b_i : 1 \le i \le k_1\} \cup \{a_i y : 1 \le i \le k_1\} \cup \{c_i y : 1 \le i \le k_2\}$ (see Fig.~\ref{fig:H12}).

\begin{figure}[ht]
    \centering
   \includegraphics[width=0.8\textwidth]{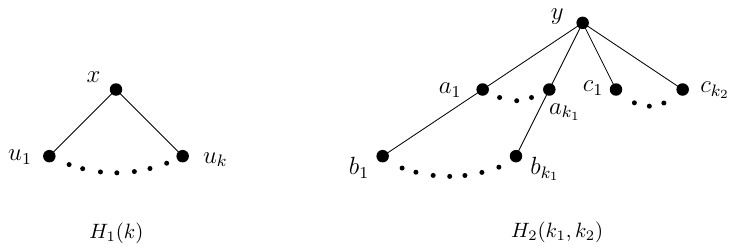}
    \caption{Rooted trees $H_1(k)$ and $H_2(k_1,k_2)$.}
    \label{fig:H12}
\end{figure}

\begin{lemma}\label{lem:Hkpolybounds}
1) If $|V(H_1(k))| = m$, then $|M(H_1(k))| = m+1$, and the family $M(H_1(k))$ can be computed in time $O(|M(H_1(k))|)$.

2) If $|V(H_2(k_1,k_2))| = m$ and $m\geq 4$, then $|M(H_2(k_1,k_2))| \le \frac{3}{8}m^2-\frac{1}{2}m+\frac{17}{8}$, and the family $M(H_2(k_1,k_2))$ can be computed in time $O(|M(H_2(k_1,k_2))|)$.
\end{lemma}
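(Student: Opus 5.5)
Both parts come down to characterizing exactly which subsets are multipackings of the two small trees. The radius is at most $2$ in each case, and we then count.

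\emph{Part 1.} $H_1(k)$ is a star with $m=k+1$ vertices. For $k\ge 2$ its radius is $1$, and $N_1[x]=V(H_1(k))$. So a set $M$ is a multipacking iff $|M|\le 1$ (the ball $N_1[x]$ forces this, and all other balls are then automatic). Hence $M(H_1(k))$ consists of $\emptyset$ together with the $m$ singletons, which gives $m+1$ sets. They can be listed in $O(m)$ time. The degenerate cases $k\le 1$ have at most $2$ vertices and are checked directly.

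\emph{Part 2.} Let $m=1+2k_1+k_2$, and work in the generic case $k_1\ge1$ and $k_1+k_2\ge 2$. Then $y$ is a center with eccentricity at most $2$. Consequently every multipacking has size at most $2$, and the conditions to check are only $|N_1[v]\cap M|\le 1$ for all $v$, together with $|M|\le 2$. The latter is forced by $N_2[y]=V$, and $r=2$ imposes nothing further once $|M|\le 2$.

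A pair $\{p,q\}$ is therefore a multipacking iff $p,q$ have no common closed neighbourhood, that is, iff $d(p,q)\ge 3$. The distance-$\ge 3$ pairs in $H_2(k_1,k_2)$ are the following:
\begin{itemize}
\item $b_i,b_j$ with $i\ne j$, at distance $4$;
\item $b_i,a_j$ with $i\ne j$, at distance $3$;
\item $b_i,c_j$, at distance $3$.
\end{itemize}
All other pairs are within distance $2$ through $y$. Counting these together with $\emptyset$ and the singletons gives
\[
|M(H_2)| = 1+m+\binom{k_1}{2}+k_1(k_1-1)+k_1k_2 .
\]

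Next, substitute $k_2=m-1-2k_1$ and maximize over $k_1$. The count becomes $1+m+\tfrac32k_1^2-\tfrac32k_1+k_1(m-1-2k_1)$, which simplifies to
\[
1+m-\tfrac12k_1^2+k_1(m-\tfrac52).
\]
This is concave in $k_1$, with real maximizer $k_1=m-\tfrac52$. However, the constraint $k_1\le (m-1)/2$ binds, so the maximum is attained at the boundary $k_1=\lfloor (m-1)/2\rfloor$.

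For $m$ odd, take $k_1=(m-1)/2$, $k_2=0$. A direct computation gives
\[
\tfrac38m^2-\tfrac12m+\tfrac{17}{8},
\]
so the bound is tight. For $m$ even, $k_1=(m-2)/2$ gives a value that is smaller by a constant. The main obstacle is exactly this arithmetic, together with checking the small or degenerate cases by hand: $m\ge4$ with $k_2=0$ and $k_1=2$, and the cases where $k_1=0$, in which $H_2$ is a star and part 1 gives $m+1$, still within the bound.

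Finally, enumeration runs in time linear in the output size. We list $\emptyset$, the singletons, and the pairs from the three explicit categories above, each in $O(1)$ time per set.
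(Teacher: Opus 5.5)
Your proof is correct and follows the same route as the paper: size at most $2$ via $N_2[y]=V$, the three families of distance-$\ge 3$ pairs, the substitution $k_2=m-1-2k_1$, and the observation that $\frac{2m-5}{2}k_1-\frac12k_1^2$ is nondecreasing on $0\le k_1\le\frac{m-1}{2}$ when $m\ge 4$. One small slip: for even $m$ the value at $k_1=(m-2)/2$ falls short of the bound by $(2m-7)/8$, not by a constant, but this is harmless because monotonicity up to the real point $(m-1)/2$ already gives the bound.
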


\begin{proof} 1) The radius of $H_1(k)$ is at most $1$, which implies $0\leq \MP(H_1(k))\leq 1$. Therefore, $M(H_1(k))=\{\{u\}:u\in V(H_1(k))\}\cup \{\emptyset\}$. Hence $|M(H_1(k))| = m+1$, and the family $M(H_1(k))$ can be computed in time $O(|M(H_1(k))|)$. 

\smallskip

2) Let $A=\{a_1,a_2,\dots,a_{k_1}\}$, $B=\{b_1,b_2,\dots,b_{k_1}\}$, and $C=\{c_1,c_2,\dots,c_{k_2}\}$. Therefore, $V(H_2(k_1,k_2))=A\cup B\cup C$. Since $H_2(k_1,k_2)$ has radius at most $2$, it has multipackings of size at most $2$. Let $M_0, M_1, M_2$ be the sets of multipackings of size $0,1,2$ respectively. Therefore, $M(H_2(k_1,k_2))=M_0\cup M_1\cup M_2$. Clearly, $M_0=\{\emptyset\}$ and $M_1=\{\{u\}:u\in V(H_2(k_1,k_2))\}$. Therefore, $|M_0|=1$ and $|M_1|=m$. Note that $M_2=\{\{b,c\}: b\in B, c\in C\}\cup \{\{b_i,b_j\}: b_i,b_j\in B, i\neq j\} \cup\{\{a_i,b_j\}:a_i\in A, b_j\in B, i\neq j\} $. Therefore, $|M_2|=k_1k_2+{k_1\choose 2}+k_1(k_1-1)$. Since $m=2k_1+k_2+1$,  \begin{align*}
|M_2|
&= k_1k_2 + \binom{k_1}{2} + k_1(k_1-1) \\
&= k_1(m-2k_1-1) + \binom{k_1}{2} + k_1(k_1-1) \\
&= \frac{2m-5}{2}\,k_1 - \frac{1}{2}\,k_1^2 .
\end{align*}

Let $f(k_1)= \frac{2m-5}{2}\, k_1-\frac{1}{2}\, k_1^2$. Since $m=2k_1+k_2+1$, we have $0\le k_1\le \frac{m-1}{2}$. Note that $\frac{df(k_1)}{dk_1}= \frac{2m-5}{2}- k_1\geq 0$, for $m\geq 4$ and $0\le k_1\le \frac{m-1}{2}$. Hence $f(k_1)$ is a non-decreasing function  when $0\le k_1\le \frac{m-1}{2}$ and $m\geq 4$. Therefore, $f(k_1)\leq f(\frac{m-1}{2})=\frac{2m-5}{2}\cdot \frac{m-1}{2}-\frac{1}{2}\, (\frac{m-1}{2})^2$. Hence 

\begin{align*}
|M(H_2(k_1,k_2))|
&= |M_0| + |M_1| + |M_2| \\
&= 1 + m + f(k_1) \\
&\le 1 + m + \frac{2m-5}{2}\cdot \frac{m-1}{2}
      - \frac{1}{2}\left(\frac{m-1}{2}\right)^2 \\
&\le \frac{3}{8}m^2 - \frac{1}{2}m + \frac{17}{8}.
\end{align*}

 From the constructions of the sets $M_0,M_1$ and $M_2$,  the family $M(H_2(k_1,k_2))$ can be computed in time $O(|M(H_2(k_1,k_2))|)$.   
\end{proof}

\begin{corollary}\label{cor:1.58m}
     If $|H_2(k_1,k_2)| = m$ and $m\geq 4$, then $|M(H_2(k_1,k_2))| \le 1.58^m$ and the family $M(H_2(k_1,k_2))$ can be computed in time $O(1.58^m)$.
\end{corollary}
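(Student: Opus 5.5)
The plan is to derive the corollary directly from part 2) of Lemma~\ref{lem:Hkpolybounds}. That lemma already gives $|M(H_2(k_1,k_2))|\le p(m)$ with $p(m)=\frac{3}{8}m^2-\frac{1}{2}m+\frac{17}{8}$, and says the family can be computed in time $O(|M(H_2(k_1,k_2))|)$. So it suffices to show $p(m)\le 1.58^m$ for every integer $m\ge 4$. The running-time claim then follows immediately, since $O(|M(H_2(k_1,k_2))|)\subseteq O(1.58^m)$.

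To prove $p(m)\le 1.58^m$, I would use induction on $m$.

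\emph{Base case.} At $m=4$ we have $p(4)=6-2+\frac{17}{8}=6.125$. On the other side, $1.58^2=2.4964$, so $1.58^4\approx 6.232$. Hence $p(4)\le 1.58^4$, though the margin is small. The hand computation of $1.58^4$ must be done carefully, because this is where the constant $1.58$ is essentially forced.

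\emph{Inductive step.} Suppose $p(m)\le 1.58^m$ for some $m\ge 4$. Then $1.58^{m+1}\ge 1.58\,p(m)$, so it is enough to check $1.58\,p(m)\ge p(m+1)$, that is, $p(m+1)/p(m)\le 1.58$ for all $m\ge 4$. We have
$$p(m+1)-p(m)=\tfrac{3}{8}(2m+1)-\tfrac12=\tfrac{3}{4}m-\tfrac18,$$
so the condition is $\frac{3}{4}m-\frac18\le 0.58\,p(m)$. The right side is a quadratic in $m$ with positive leading coefficient and the left side is linear. At $m=4$ the inequality reads $2.875\le 3.55$. The difference (right minus left) is a convex quadratic whose derivative at $m=4$ is already positive: $0.58\cdot\frac{3}{4}\cdot 4-0.58\cdot\frac12-\frac34>0$. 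Therefore the inequality holds for all $m\ge 4$. Equivalently, the ratio $p(m+1)/p(m)$ decreases toward $1$ for $m\ge 4$, and it is already below $1.58$ at $m=4$.

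The only delicate point is the tight base case at $m=4$; the rest is routine algebra. If the margin there turned out to be off, I would fall back on checking $m=4,5,6$ numerically and starting the induction from a larger $m$ where the ratio is comfortably below $1.58$.
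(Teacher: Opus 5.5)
Your proof is correct and follows the paper's route: invoke part 2) of Lemma~\ref{lem:Hkpolybounds} and verify $\frac{3}{8}m^2-\frac{1}{2}m+\frac{17}{8}\le 1.58^m$ for all $m\ge 4$, with $m=4$ as the binding case ($6.125\le 1.58^4\approx 6.232$). The paper only asserts that $\bigl(\frac{3}{8}m^2-\frac{1}{2}m+\frac{17}{8}\bigr)^{1/m}$ is maximized over $m\ge 4$ at $m=4$ (value $1.5732\ldots$), whereas your induction via $p(m+1)\le 1.58\,p(m)$ actually proves that step.
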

\begin{proof}
    From Lemma \ref{lem:Hkpolybounds}, we have $|M(H_2(k_1,k_2))| \le \frac{3}{8}m^2-\frac{1}{2}m+\frac{17}{8}$. The maximum value of $(\frac{3}{8}m^2-\frac{1}{2}m+\frac{17}{8})^{1/m}$ for $m\geq 4$ is $1.5732...$ attained at $m=4$. Hence, the corollary holds.
\end{proof}

\begin{algorithm}[H]
\caption{$M_v(T)$: A set that contains all multipackings of a tree $T$ with root $v$}\label{alg:multipackingsuperset}
\begin{algorithmic}[1]

\State \textbf{Input:} A tree $T$ with root $v$.
\State \textbf{Output:} A set that contains all multipackings of $T$.

\If{$\mathrm{height}(T) \geq 2$}
    \State Compute the set of farthest vertices from $v$ in $T$, denote it by $S_v$.
    \State For each $w \in S_v$, let $w_1$ and $w_2$ denote the parent and the grandparent of $w$ in $T$, respectively.

    \If{there exists $w\in S_v$ such that $T_{w_1} \cong H_1(k)$  for some $k\geq 2$ (See Fig. \ref{fig:H12})}
    \State \Return $ (w\oplus  M_v(T\setminus T_{w_1})) \cup M_v(T\setminus w)$ 
    \EndIf

     \If{there exists $w\in S_v$ such that $T_{w_2} \cong H_2(1,0)$ (See Fig. \ref{fig:H12})} 
    \State \Return $ (w\oplus  M_v(T\setminus T_{w_2})) \cup M_v(T\setminus w)$ 
    
    \EndIf

        \State  \Return $  M_v(T\setminus T_{w_2}) \oplus M(T_{w_2})$ 
        

\EndIf

\If{$\mathrm{height}(T) \le 1$} 


    \State \Return $M(T)$
\EndIf

\end{algorithmic}
\end{algorithm}

\begin{lemma}\label{lem:1.58setbound}
    Let $T$ be a rooted tree on $n$ vertices with root $v$. The Algorithm \ref{alg:multipackingsuperset} computes a family of sets $M_v(T)$ of size $O(1.58^n)$ that contains all multipackings of $T$, and it has running time $O^*(1.58^n)$.
\end{lemma}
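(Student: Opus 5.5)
We prove two things by strong induction on $n=|V(T)|$: correctness, meaning every multipacking of $T$ lies in the returned family, and size, meaning $|M_v(T)|\le c\cdot 1.58^n$ for a suitable constant $c$. The running time then follows, since each call does polynomial work plus the cost of building its output, and the recursion tree has at most $O(|M_v(T)|)$ leaves up to polynomial factors. In the base case $\mathrm{height}(T)\le 1$, the tree is $H_1(k)$ (or a single vertex), and part 1) of Lemma~\ref{lem:Hkpolybounds} gives $|M(T)|=n+1\le c\cdot 1.58^n$.

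\textbf{Correctness.} Take $w\in S_v$ at maximum depth $h\ge 2$, so every child of $w_1$ and every grandchild of $w_2$ is a leaf.

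\emph{First branch}, $T_{w_1}\cong H_1(k)$ with $k\ge 2$. A multipacking containing $w$ has no other vertex of $N_1[w_1]=V(T_{w_1})$. It therefore lies in $w\oplus M(T\setminus T_{w_1})$, because deleting a subtree hanging at $w_1$ leaves a connected subtree whose distances are unchanged, so the restriction is still a multipacking there. A multipacking avoiding $w$ is a multipacking of $T\setminus w$.

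\emph{Second branch}, $T_{w_2}\cong H_2(1,0)$, i.e. $T_{w_2}$ is the path $w_2w_1w$. Any multipacking containing $w$ avoids $w_1,w_2$, since they are within distance $1$ of $w_1$ together with $w$. The rest is as in the first branch.

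\emph{Final branch.} Every multipacking $M$ of $T$ restricts to multipackings of the subtrees $T\setminus T_{w_2}$ and $T_{w_2}$, again because distances inside a subtree are preserved. Hence $M\in M_v(T\setminus T_{w_2})\oplus M(T_{w_2})$. When the first two branches fail, every child of $w_2$ either has only leaf-children and is not an $H_1(k)$ with $k\ge2$, so it has exactly one child, or it is a leaf. So $T_{w_2}\cong H_2(k_1,k_2)$ with $k_1\ge 1$ and $T_{w_2}\not\cong H_2(1,0)$, giving $m=|V(T_{w_2})|\ge 4$.

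\textbf{Size.} With $f(n)=\max|M_v(T)|$, the branches give three recurrences:
\begin{itemize}
\item first branch: $f(n)\le f(n-k-1)+f(n-1)\le f(n-3)+f(n-1)$, since $k\ge 2$;
\item second branch: $f(n)\le f(n-3)+f(n-1)$;
\item final branch: $f(n)\le f(n-m)\cdot 1.58^m$ with $m\ge 4$, by Corollary~\ref{cor:1.58m}.
\end{itemize}
The recurrence $x^3=x^2+1$ has root about $1.4656<1.58$, so the first two branches close the induction: $c\,1.58^{n-1}+c\,1.58^{n-3}\le c\,1.58^n$ because $1.58^{-1}+1.58^{-3}<1$. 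The third closes directly.

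\textbf{Main obstacle.} The main care point is the final branch. We must check that the product-family bound is legitimate, i.e. that $M(T_{w_2})$ is exactly the family bounded in Lemma~\ref{lem:Hkpolybounds}. We must also check that the monotonicity of $f$ and the small-$n$ cases are handled. If $T\setminus T_{w_2}$ is empty (when $w_2=v$), we simply use $M(T_{w_2})$ itself. The cost bound mirrors the size bound, since $\oplus$ and union are computed in time linear in the output times a polynomial.
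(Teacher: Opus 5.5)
Your proposal is correct and follows the paper's proof: the same three branches, the same recurrences $f(n)\le f(n-1)+f(n-3)$ and $f(n)\le f(n-m)\cdot 1.58^m$ with $m\ge 4$ from Corollary~\ref{cor:1.58m}, and the same base case via Lemma~\ref{lem:Hkpolybounds}. You are slightly more careful than the paper in defining $f$ as the maximum of $|M_v(T)|$, in explaining why the final branch forces $T_{w_2}\cong H_2(k_1,k_2)$ with $m\ge 4$, and in handling $w_2=v$; the one slip is writing $N_1[w_1]=V(T_{w_1})$, since in fact $N_1[w_1]=V(T_{w_1})\cup\{w_2\}$, but your argument only uses the inclusion $V(T_{w_1})\subseteq N_1[w_1]$.
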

\begin{proof} In Algorithm \ref{alg:multipackingsuperset}, $S_v$ denotes the set of vertices farthest from $v$ in $T$, and for each $w \in S_v$,  $w_1$ and $w_2$ denote the parent and the grandparent of $w$ in $T$, respectively. Let $f(n)$ be the maximum number of multipackings in a tree on $n$ vertices.  We first consider the case $\mathrm{height}(T) \geq 2$.

Suppose there exists $w\in S_v$ such that $T_{w_1} \cong H_1(k)$  for some $k\geq 2$.  If any $M\in M(T)$ contains $w$, then $M\in (w\oplus  M_v(T\setminus T_{w_1})) $. If $M$ does not contain $w$, then $M\in  M_v(T\setminus w)$. Therefore, $M(T)\subseteq (w\oplus  M_v(T\setminus T_{w_1})) \cup M_v(T\setminus w)$. So, the recursive relation yields that $f(n)\leq f(n-3)+f(n-1)$, since $|V(T\setminus T_{w_1})|\leq n-3$ and $|V(T\setminus w)|=n-1$. Solving this recurrence yields $f(n)=O(1.47^n) $.

Suppose there exists $w\in S_v$ such that $T_{w_2} \cong H_2(1,0)$.  If any $M\in M(T)$ contains $w$, then $M\in (w\oplus  M_v(T\setminus T_{w_2})) $. If $M$ does not contain $w$, then $M\in  M_v(T\setminus w)$. Therefore, $M(T)\subseteq (w\oplus  M_v(T\setminus T_{w_2})) \cup M_v(T\setminus w)$. So, the recursive relation yields that $f(n)\leq f(n-3)+f(n-1)$, since $|V(T\setminus T_{w_2})|\leq n-3$ and $|V(T\setminus w)|=n-1$. Solving this recurrence yields $f(n)=O(1.47^n) $.

Suppose there does not exist any $w\in S_v$ such that $T_{w_1} \cong H_1(k)$  for any $k\geq 2$ or $T_{w_2} \cong H_2(1,0)$. Then $T_{w_2} \cong H_2(k_1,k_2)$ for some $(k_1,k_2)\in \mathbb{N}\times (\mathbb{N}\cup \{0\})\setminus \{(1,0)\}$. In this case, $M(T)\subseteq M_v(T\setminus T_{w_2}) \oplus M(T_{w_2})$. Let $|V(T_{w_2})|=m$. Note that $m\geq 4$. Therefore, $|M(T_{w_2})| \le 1.58^m$ and the family $M(T_{w_2})$ can be computed in time $O(1.58^m)$, by Corollary \ref{cor:1.58m}. Thus, we obtain the recurrence $f(n)\leq f(n-m)\cdot 1.58^m $. The solution of this recursion is $f(n)=O(1.58^n) $. 

If $\mathrm{height}(T) \le 1$, then $T \cong H_1(k)$ for some $k$. In this case, $M(T)$ can be computed in $O(|V(T)|)$ time, by Lemma \ref{lem:Hkpolybounds}.

Hence, the Algorithm \ref{alg:multipackingsuperset} computes a set $M_v(T)$ of size $O(1.58^n)$ that contains all  multipackings of $T$, and it has running time $O^*(1.58^n)$.
\end{proof}

\multipackingtime*

\begin{proof} Let $G$ be a connected graph and $T$ be its spanning tree. Let $v$ be a vertex of $G$. Consider $T$ as a rooted tree with root $v$. Since $d_T(u,v)\geq d_G(u,v)$ for any $u,v\in V(G)$, $M(G)\subseteq M(T)$. From Lemma \ref{lem:1.58setbound}, $M_v(T)$ can be computed in time $O^*(1.58^n)$ and $|M_v(T)|=O(1.58^n)$. Now $M(G)\subseteq M_v(T)$, since $M(T)\subseteq M_v(T)$. For any set $M\in M_v(T)$, we can check whether $M$ is a multipacking of $G$ in $O(n^2)$ time. Therefore, we can search a maximum multipacking of $G$ in the family of sets $M_v(T)$ in time $O^*(1.58^n)$. If $G$ is not a connected graph, then a maximum multipacking can be computed independently for each connected component using the same approach.
\end{proof}


\begin{proposition}
    If $P_n$ is a path on $n$ vertices, then the number of  multipackings of $P_n$ is $O(1.47^n) $.
\end{proposition}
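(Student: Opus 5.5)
We will bound the number of multipackings of $P_n$ by a branching recurrence in which we always branch on an endpoint. Label the path $v_1v_2\dots v_n$. The only constraint we use is the radius-$1$ condition: every closed neighbourhood $N_1[v_i]=\{v_{i-1},v_i,v_{i+1}\}$ contains at most one vertex of a multipacking $M$. In other words, any two vertices of $M$ are at distance at least $3$ along the path. The family of such sets contains all multipackings of $P_n$, so counting it gives an upper bound.

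Let $g(n)$ be the number of subsets of $\{v_1,\dots,v_n\}$ with pairwise index differences at least $3$. We split according to $v_1$.
\begin{itemize}
\item If $v_1\notin M$, then $M$ is such a subset of $v_2\dots v_n$, which gives $g(n-1)$ choices.
\item If $v_1\in M$, then $v_2,v_3\notin M$, since both lie in $N_1[v_2]$ together with $v_1$. The rest of $M$ is such a subset of $v_4\dots v_n$, which gives $g(n-3)$ choices.
\end{itemize}
Hence $g(n)= g(n-1)+g(n-3)$ for $n\geq 4$, with small initial values $g(1)=2$, $g(2)=3$, $g(3)=4$. This is the same recurrence $f(n)\le f(n-1)+f(n-3)$ that already appears in the proof of Lemma~\ref{lem:1.58setbound}.

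To solve it, we use the characteristic polynomial $x^3=x^2+1$. Its real root is $\approx 1.4656<1.47$, and the other two roots are complex of modulus less than $1$. Therefore $g(n)=O(1.4656^n)=O(1.47^n)$. A simple induction, $g(n)\le c\,\lambda^n$ with $\lambda^3=\lambda^2+1$ and $c$ chosen from the base cases, gives the same bound rigorously.

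No step is a real obstacle. The one point to state clearly is that we deliberately discard the constraints for $r\ge 2$: relaxing them only enlarges the counted family, so it still upper-bounds the number of multipackings. The only remaining care is checking the base cases so that the induction constant works.
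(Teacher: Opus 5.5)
Your proof is correct and is essentially the paper's argument. The paper branches on the endpoint $v_n$ instead of $v_1$, obtains the same recurrence $|M(P_n)|=|M(P_{n-1})|+|M(P_{n-3})|$ with initial values $2,3,4$, and solves it to get $|M(P_n)|\le 2\cdot(1.46557\ldots)^{n-1}$. Your framing through the relaxed family (pairwise distance at least $3$) is slightly cleaner: it only needs an upper bound, so it avoids the paper's unstated check that every set counted on the right-hand side is a genuine multipacking. That check does hold on paths, since a window of $2r+1$ consecutive vertices contains at most $\lceil (2r+1)/3\rceil\le r$ such vertices.
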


\begin{proof} Let $P_n=\{v_1,v_2,\dots, v_n\}$ and $P_i$ be the induced subgraph of $P_n$ with vertex set $\{v_1,v_2,\dots, v_i\}$, where $i\leq n$. Suppose $M$ is a multipacking of $P_n$. If $v_n\in M$, then $v_{n-1},v_{n-2}\notin M$ since $|N_1[v_{n-1}]\cap M|\leq 1$.   From this observation, we can say that the number of multipackings of $P_n$ that include $v_n$ is $|M(P_{n-3})|$ and the number of multipackings of $P_n$ that do not include $v_n$ is $|M(P_{n-1})|$. Therefore, we have the recursive relation $|M(P_n)|=|M(P_{n-1})|+|M(P_{n-3})|$ with initial conditions $|M(P_1)|=2$, $|M(P_2)|=3$, and $|M(P_3)|=4$. Solving this recurrence yields $|M(P_n)|\leq 2\cdot (1.46557...)^{n-1}$.
\end{proof}

\begin{proposition}
    If $P_n$ is a path on $n$ vertices, then the number of maximal multipackings of $P_n$ is  $O(1.33^n) $.
\end{proposition}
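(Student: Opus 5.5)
The plan is to first give an exact combinatorial description of multipackings of a path, then characterize the maximal ones by their gap sequences, and finally count those sequences with a linear recurrence.

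\textbf{Step 1: multipackings of a path.} Write $P_n=v_1v_2\dots v_n$ and list a set $M\subseteq V(P_n)$ as $v_{p_1},\dots,v_{p_k}$ with $p_1<\dots<p_k$. I claim $M$ is a multipacking if and only if $p_{i+1}-p_i\ge 3$ for all $i$.
\begin{itemize}
\item Necessity: the condition for $r=1$ says $N_1[v_j]$ contains at most one vertex of $M$. If two chosen vertices were at distance at most $2$, the vertex between them (or either of them) would violate this.
\item Sufficiency: a ball $N_r[v_j]$ in a path consists of at most $2r+1$ consecutive vertices. If it contained $r+1$ vertices of $M$, these would span $r$ consecutive gaps, each at least $3$, giving total span at least $3r>2r$. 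This is a contradiction.
\end{itemize}
So the multipackings of $P_n$ are exactly the subsets with pairwise index differences at least $3$.

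\textbf{Step 2: maximal multipackings.} Using Step 1, a nonempty multipacking $M$ is maximal if and only if three conditions hold:
\begin{itemize}
\item $p_1\le 3$, since otherwise $v_1$ can be added;
\item $p_k\ge n-2$, by the symmetric argument at the other end;
\item every gap satisfies $p_{i+1}-p_i\in\{3,4,5\}$. A gap of at least $6$ lets us insert $v_{p_i+3}$, which is at distance at least $3$ from both neighbours. Conversely, when all gaps are at most $5$ and both ends are covered, every unchosen vertex lies within distance $2$ of some chosen vertex.
\end{itemize}
The empty set is maximal only for $n=0$, so it can be ignored.

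\textbf{Step 3: counting.} A maximal multipacking is determined by three data: the choice of $p_1\in\{1,2,3\}$, a sequence of gaps from $\{3,4,5\}$, and the end condition. Let $a(m)$ be the number of sequences of gaps from $\{3,4,5\}$ summing to exactly $m$. Then $a(m)=a(m-3)+a(m-4)+a(m-5)$. The number of maximal multipackings is at most $\sum_{s,t\in\{0,1,2\}} a(n-1-s-t)$, which is $O(\max_{m\le n}a(m))$.

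The characteristic polynomial of this recurrence is $x^5-x^2-x-1$. The key observation is that the plastic number $\rho\approx1.3247$, defined by $\rho^3=\rho+1$, is a root: indeed $\rho^5=\rho^3+\rho^2=\rho^2+\rho+1$. One still has to check that $\rho$ is the dominant root. This follows because $x^5-x^2-x-1=(x^3-x-1)(x^2+1)$, and the factor $x^2+1$ contributes only roots of modulus $1$. Hence $a(m)=O(\rho^m)$, and the number of maximal multipackings of $P_n$ is $O(1.3247^n)=O(1.33^n)$.

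The main obstacle is Step 2: proving that the gap and end conditions are sufficient for maximality, not merely necessary. This reduces to showing that any vertex lying between chosen vertices with gap at most $5$, or within $2$ of an end chosen vertex, is at distance at most $2$ from a chosen vertex. That is a short case check.
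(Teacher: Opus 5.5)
Your proof is correct. It reaches the same recurrence as the paper, $a(m)=a(m-3)+a(m-4)+a(m-5)$ with characteristic polynomial $x^5-x^2-x-1$, but by a different route.

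\textbf{The paper's route.} It recurses directly on the set $M'(P_n)$ of maximal multipackings. Only one of $v_n,v_{n-1},v_{n-2}$ is chosen, and removing it is asserted to leave a maximal multipacking of $P_{n-3}$, $P_{n-4}$ or $P_{n-5}$ respectively.

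\textbf{Your route.} You first characterize all multipackings of a path as the sets with consecutive gaps at least $3$, using the ball-span argument for every $r$. You then characterize the maximal ones: interior gaps in $\{3,4,5\}$ and end offsets at most $2$. Finally you count gap compositions together with a boundary sum over $(s,t)$.

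\textbf{What each buys.} The paper's argument is shorter, but it silently relies on your Step~1. Its claim that deleting $v_n$ from a maximal multipacking of $P_n$ leaves a \emph{maximal} multipacking of $P_{n-3}$ needs every vertex addable in $P_{n-3}$ to be addable in $P_n$. That is exactly the sufficiency of the gap-at-least-$3$ condition, which the paper never states; it only uses the $r=1$ constraint.

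The paper also does not spell out why the plastic number is the dominant root. Your factorization $(x^3-x-1)(x^2+1)$ handles this, together with the standard fact that the complex roots of $x^3-x-1$ have modulus $\rho^{-1/2}<1$.

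Your explicit formula further shows that the count is asymptotically about $1.07\,\rho^n$. So stating the conclusion as $O(\rho^n)$, as you do, is the right form. The paper's literal final inequality $|M'(P_n)|\le(1.3247\ldots)^n$ is false: already at $n=2$ there are two maximal multipackings, while $\rho^2\approx 1.755$. This does not affect the proposition's $O(1.33^n)$ claim.
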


\begin{proof}
Let $P_n=\{v_1,v_2,\dots, v_n\}$ and $P_i$ be the induced subgraph of $P_n$ with vertex set $\{v_1,v_2,\dots, v_i\}$, where $i\leq n$. Let $M'(P_i)$ denote the set of all maximal multipackings in $P_i$. Suppose $M\in M'(P_n)$. Then only one among  $v_n,v_{n-1},v_{n-2}$ is in $M$, since $|N_1[v_{n-1}]\cap M|\leq 1$.   The number of maximal multipackings of $P_n$ that include $v_n$ is $|M'(P_{n-3})|$, since these multipackings cannot contain $v_{n-1}$ or 
$v_{n-2}$. The number of maximal multipackings of $P_n$ that include $v_{n-1}$ is $|M'(P_{n-4})|$, since these multipackings cannot contain $v_n,v_{n-2}$ or 
$v_{n-3}$. Similarly, the number of maximal multipackings of $P_n$ that include $v_{n-2}$ is $|M'(P_{n-5})|$, since these multipackings cannot contain $v_n,v_{n-1},v_{n-3}$ or 
$v_{n-4}$. Therefore, we have the recursive relation $|M'(P_n)|=|M'(P_{n-3})|+|M'(P_{n-4})|+|M'(P_{n-5})|$ with initial conditions $|M'(P_1)|=1$, $|M'(P_2)|=2$, and $|M'(P_3)|=3$. Solving this recurrence yields $|M'(P_n)|\leq  (1.3247...)^{n}$.
\end{proof}

\section{Conclusion}\label{sec:conclusion}

In this work, we have established the computational complexity of the \textsc{Multipacking} problem by proving its \textsc{NP-completeness} and demonstrating its \textsc{W[2]-hardness} when parameterized by the solution size. Furthermore, we have extended these hardness results to several important graph classes, including chordal $\cap$ $\frac{1}{2}$-hyperbolic graphs, bipartite graphs, claw-free graphs, regular graphs, and CONV graphs.

Our results naturally lead to several open questions that merit further investigation:
\begin{enumerate}

    \item Does there exist an FPT algorithm for general graphs when parameterized by treewidth?
    \item Can we design a sub-exponential algorithm for this problem?
    \item What is the complexity status on planar graphs? Does an FPT algorithm for planar graphs exist when parameterized by solution size?
    \item While a $(2 + o(1))$-factor approximation algorithm is known~\cite{beaudou2019broadcast}, can we achieve a PTAS for this problem?
    
\end{enumerate}

Addressing these questions would provide a more complete understanding of the computational landscape of the \textsc{Multipacking} problem and could lead to interesting algorithmic developments.

\bibliography{ref}

\begin{thebibliography}{10}

\bibitem{beaudou2019multipacking}
Laurent Beaudou and Richard~C Brewster.
\newblock On the multipacking number of grid graphs.
\newblock {\em Discrete Mathematics \& Theoretical Computer Science}, 21(Graph Theory), 2019.

\bibitem{beaudou2019broadcast}
Laurent Beaudou, Richard~C Brewster, and Florent Foucaud.
\newblock Broadcast domination and multipacking: bounds and the integrality gap.
\newblock {\em The Australasian Journal of Combinatorics}, 74(1):86--97, 2019.

\bibitem{brewster2019broadcast}
Richard~C Brewster, Gary MacGillivray, and Feiran Yang.
\newblock Broadcast domination and multipacking in strongly chordal graphs.
\newblock {\em Discrete Applied Mathematics}, 261:108--118, 2019.

\bibitem{brewster2013new}
Richard~C Brewster, Christina~M Mynhardt, and Laura~E Teshima.
\newblock New bounds for the broadcast domination number of a graph.
\newblock {\em Central European Journal of Mathematics}, 11:1334--1343, 2013.

\bibitem{brinkmann2001hyperbolicity}
Gunnar Brinkmann, Jack~H Koolen, and Vincent Moulton.
\newblock On the hyperbolicity of chordal graphs.
\newblock {\em Annals of Combinatorics}, 5(1):61--69, 2001.

\bibitem{chalopin2009every}
J{\'e}r{\'e}mie Chalopin and Daniel Gon{\c{c}}alves.
\newblock Every planar graph is the intersection graph of segments in the plane: extended abstract.
\newblock In {\em STOC '09: Proceedings of the 41st annual ACM symposium on Theory of computing}, pages 631--638, 2009.

\bibitem{cornuejols2001combinatorial}
G{\'e}rard Cornu{\'e}jols.
\newblock {\em Combinatorial Optimization: Packing and Covering}.
\newblock SIAM, 2001.

\bibitem{cygan2015parameterized}
Marek Cygan, Fedor~V Fomin, {\L}ukasz Kowalik, Daniel Lokshtanov, D{\'a}niel Marx, Marcin Pilipczuk, Micha{\l} Pilipczuk, and Saket Saurabh.
\newblock {\em Parameterized algorithms}, volume~5.
\newblock Springer, 2015.

\bibitem{das2025geometrymultipackingCALDAM}
Arun~Kumar Das, Sandip Das, Sk~Samim Islam, Ritam~Manna Mitra, and Bodhayan Roy.
\newblock Multipacking in the {E}uclidean metric space.
\newblock In {\em Conference on Algorithms and Discrete Applied Mathematics (CALDAM)}, pages 109--120. Springer, 2025.

\bibitem{das2023relation}
Sandip Das, Florent Foucaud, Sk~Samim Islam, and Joydeep Mukherjee.
\newblock Relation between broadcast domination and multipacking numbers on chordal graphs.
\newblock In {\em Conference on Algorithms and Discrete Applied Mathematics ({CALDAM})}, pages 297--308. Springer, 2023.

\bibitem{das2025multipacking}
Sandip Das and Sk~Samim Islam.
\newblock Multipacking and broadcast domination on cactus graphs and its impact on hyperbolic graphs.
\newblock In {\em International Conference and Workshops on Algorithms and Computation (WALCOM)}, pages 111--126. Springer, 2025.

\bibitem{dunbar2006broadcasts}
Jean~E Dunbar, David~J Erwin, Teresa~W Haynes, Sandra~M Hedetniemi, and Stephen~T Hedetniemi.
\newblock Broadcasts in graphs.
\newblock {\em Discrete Applied Mathematics}, 154(1):59--75, 2006.

\bibitem{erdos1960grafok}
Paul Erd{\H{o}}s and Tibor Gallai.
\newblock Gr{\'a}fok el{\H{o}}{\'\i}rt foksz{\'a}m{\'u} pontokkal.
\newblock {\em Matematikai Lapok}, 11:264--274, 1960.

\bibitem{erwin2001cost}
David~J Erwin.
\newblock {\em Cost domination in graphs}.
\newblock Western Michigan University, 2001.

\bibitem{erwin2004dominating}
David~J Erwin.
\newblock Dominating broadcasts in graphs.
\newblock {\em Bull. Inst. Combin. Appl}, 42(89):105, 2004.

\bibitem{foucaud2021complexity}
Florent Foucaud, Benjamin Gras, Anthony Perez, and Florian Sikora.
\newblock On the complexity of broadcast domination and multipacking in digraphs.
\newblock {\em Algorithmica}, 83(9):2651--2677, 2021.

\bibitem{garey1979computers}
Michael~R. Garey and David~S. Johnson.
\newblock Computers and intractability: A guide to the theory of {NP}-completeness.
\newblock {\em W.H. Freeman, San Francisco, 338 pp.}, 1979.

\bibitem{gromov1987hyperbolic}
Mikhael Gromov.
\newblock Hyperbolic groups.
\newblock In {\em Essays in group theory}, pages 75--263. Springer, 1987.

\bibitem{hakimi1962realizability}
S~Louis Hakimi.
\newblock On realizability of a set of integers as degrees of the vertices of a linear graph. {I}.
\newblock {\em Journal of the Society for Industrial and Applied Mathematics (SIAM)}, 10(3):496--506, 1962.

\bibitem{havel1955remark}
V{\'a}clav Havel.
\newblock A remark on the existence of finite graphs.
\newblock {\em Casopis Pest. Mat.}, 80:477--480, 1955.

\bibitem{haynes2021structures}
Teresa~W Haynes, Stephen~T Hedetniemi, and Michael~A Henning.
\newblock {\em Structures of domination in graphs}, volume~66.
\newblock Springer, 2021.

\bibitem{heggernes2006optimal}
Pinar Heggernes and Daniel Lokshtanov.
\newblock Optimal broadcast domination in polynomial time.
\newblock {\em Discrete Mathematics}, 306(24):3267--3280, 2006.

\bibitem{kratochvil1998intersection}
Jan Kratochv{\'\i}l and Ale{\v{s}} Kub{\v{e}}na.
\newblock On intersection representations of co-planar graphs.
\newblock {\em Discrete Mathematics}, 178(1-3):251--255, 1998.

\bibitem{meir1975relations}
A~Meir and John Moon.
\newblock Relations between packing and covering numbers of a tree.
\newblock {\em Pacific Journal of Mathematics}, 61(1):225--233, 1975.

\bibitem{rajendraprasad2025multipacking}
Deepak Rajendraprasad, Varun Sani, Birenjith Sasidharan, and Jishnu Sen.
\newblock Multipacking in hypercubes.
\newblock {\em arXiv preprint arXiv:2507.01565}, 2025.

\bibitem{teshima2012broadcasts}
Laura~E Teshima.
\newblock Broadcasts and multipackings in graphs.
\newblock Master's thesis, University of Victoria, 2012.

\bibitem{wu2011hyperbolicity}
Yaokun Wu and Chengpeng Zhang.
\newblock Hyperbolicity and chordality of a graph.
\newblock {\em The Electronic Journal of Combinatorics}, 18(1):P43, 2011.

\bibitem{yang2015new}
Feiran Yang.
\newblock New results on broadcast domination and multipacking.
\newblock Master's thesis, 2015.

\bibitem{yang2019limited}
Feiran Yang.
\newblock {\em Limited broadcast domination}.
\newblock PhD thesis, 2019.

\end{thebibliography}




\end{document}